\renewcommand\thmcontinues[1]{continued}
\newcommand{\simTr}{\sim^{\mathsf{Tr}}}
\renewcommand{\U}{\mathsf{U}}
\renewcommand{\S}{\mathsf{S}}
\renewcommand{\b}{\mathsf{b}}
\newcommand{\pa}{\mathsf{p}}
\newcommand{\id}{\mathsf{id}}
\newcommand{\comp}{\mathsf{comp}}
\newcommand{\isProp}{\mathsf{isProp}}
\newcommand{\isSet}{\mathsf{isSet}}
\newcommand{\isContr}{\mathsf{isContr}}
\newcommand{\rest}[1]{\overline{#1}}
\newcommand{\Set}{\mathsf{Set}}
\newcommand{\Prop}{\mathsf{Prop}}
\newcommand{\N}{\mathbb{N}}
\newcommand{\Nat}{\mathbb{N}}
\newcommand{\fix}{\mathsf{fix}}
\newcommand{\dfix}{\mathsf{dfix}}
\newcommand{\pfix}{\mathsf{pfix}}
\newcommand{\funext}{\mathsf{funext}}
\newcommand{\next}{\mathsf{next}}
\newcommand{\nuFg}{\nu F^{\mathsf{g}}}
\newcommand{\unfold}{\mathsf{unfold}}
\newcommand{\isBisim}[1]{\mathsf{is}\mbox{-} #1 \mbox{-}\mathsf{Bisim}}
\newcommand{\isTrBisim}[1]{\mathsf{is}\mbox{-} #1 \mbox{-}\mathsf{TrBisim}}
\newcommand{\isLTSBisim}{\mathsf{isGLTSBisim}}
\newcommand{\I}{\mathbb{I}}
\newcommand{\Proc}{\mathsf{Proc}}
\newcommand{\GCTT}{\textbf{GCTT}}
\newcommand{\GDTT}{\textbf{GDTT}}
\newcommand{\CTT}{\textbf{CTT}}
\newcommand{\HoTT}{\textbf{HoTT}}
\newcommand{\TCTT}{\textbf{TCTT}}
\newcommand{\refl}{\mathsf{refl}}
\newcommand{\ap}{\mathsf{ap}}
\newcommand{\Pfin}{\mathsf{P_{fin}}}
\renewcommand{\I}{\mathbb{I}}
\newcommand{\Iw}{\I_\omega}
\newcommand{\ID}{\I_{\DCat}}
\newcommand{\F}{\mathbb{F}}
\newcommand{\Fw}{\F_\omega}
\newcommand{\FD}{\F_{\DCat}}
\newcommand{\hcomp}{\mathsf{hcomp}}
\renewcommand{\C}{\widehat{\mathcal{C}}}
\newcommand{\Cw}{\widehat{\mathcal{C} \times \omega}}
\newcommand{\CD}{\widehat{\mathcal{C} \times \DCat}}
\newcommand{\Path}{\mathsf{Path}}
\newcommand{\nl}{\mathsf{nl}}
\newcommand{\assoc}{\mathsf{assoc}}
\newcommand{\com}{\mathsf{com}}
\newcommand{\idem}{\mathsf{idem}}
\newcommand{\trunc}{\mathsf{trunc}}
\newcommand{\latersymb}{\triangleright}
\newcommand{\later}{\latersymb}
\newcommand{\tot}{\mathsf{tot}}
\newcommand{\inl}{\mathsf{inl}}
\newcommand{\inr}{\mathsf{inr}}
\newcommand{\interm}[1]{\mathsf{in}_{#1}}
\newcommand{\fold}{\mathsf{fold}}
\newcommand{\ff}{\mathsf{ff}}
\newcommand{\tr}{\mathsf{tt}}
\newcommand{\proc}[1]{\llbracket #1 \rrbracket}
\newcommand{\sq}{\mathsf{sq}}
\newcommand{\fib}{\mathsf{fib}}
\newcommand{\isEquiv}{\mathsf{isEquiv}}
\newcommand{\Equiv}{\mathsf{Equiv}}
\renewcommand{\iff}{\leftrightarrow}
\newcommand{\imp}{\rightarrow}
\newcommand{\Fin}{\mathsf{Fin}}
\newcommand{\DM}{\mathsf{DM}}
\newcommand{\gStr}{\mathsf{Str^g}}
\newcommand{\gStrk}{\mathsf{Str^g_\kappa}}
\newcommand{\pair}[2]{\left(#1,#2\right)}
\newcommand{\tyeq}{\simeq}
\newcommand{\eqdef}{=}
\newcommand{\laterRel}[1]{{#1}^{\later}}
\newcommand{\GLTS}{\text{GLTS}}
\newcommand{\cubes}{\mathcal{C}}
\newcommand{\CCat}{\mathbb{C}}
\newcommand{\DCat}{\mathbb{D}}
\newcommand{\Psh}[1]{\widehat{#1}}
\newcommand{\T}{\mathbb{T}}
\newcommand\tabs[1]{\lambda (#1 : \T).}
\newcommand\tapp[2][\tickA]{#2\,[#1] }
\newcommand{\tickA}{\alpha}
\newcommand{\tickB}{\beta}
\newcommand\latbind[1]{{\latersymb}\, (#1: \T) .}
\newcommand\subst[2]{(#1/#2)}
\newcommand{\Ty}[2]{#1 \vdash #2}
\newcommand{\Tm}[3]{#1 \vdash #2 : #3}
\newcommand{\cwfsub}[2]{#1 [#2]}
\newcommand{\p}{\mathsf{p}}
\newcommand{\q}{\mathsf{q}}
\newcommand{\compr}[2]{#1.#2}
\newcommand{\cpair}[2]{( #1, #2)}
\newcommand{\vdashw}{\vdash_{\cubes\times\omega}}
\newcommand{\Tyw}[2]{#1 \vdashw #2}
\newcommand{\Tmw}[3]{#1 \vdashw #2 : #3}
\newcommand{\vdashD}{\vdash_{\cubes\times\DCat}}
\newcommand{\TyD}[2]{#1 \vdashD #2}
\newcommand{\TmD}[3]{#1 \vdashD #2 : #3}
\newcommand{\extend}[1]{[#1]}
\newcommand{\searlier}{\blacktriangleleft \!}
\newcommand{\slater}{\blacktriangleright\!}
\newcommand{\adj}[2]{#1 \, \dashv \, #2}
\newcommand{\den}[1]{\llbracket #1 \rrbracket}
\newcommand{\wfctx}[1]{#1 \vdash}
\newcommand{\ctxproj}[2]{p_{#1;#2}}
\newcommand{\yon}{y}
\newcommand{\pfam}[1]{\mathsf{p}}
\newcommand{\Fstar}{F$^\star$}
\newcommand{\meet}{\wedge}
\newcommand{\allact}[2]{\left[#1\right] #2}
\newcommand{\existact}[2]{\left<#1\right> #2}
\newcommand{\HML}{\mathsf{HML}}
\newcommand{\sat}[2]{#1 \models #2}
\renewcommand{\Fin}[1]{\mathsf{Fin}(#1)}
\newcommand{\Label}[1]{\mathsf{Act}(#1)}
\newcommand{\CCS}[1]{\mathsf{CCS}(#1)}
\newcommand{\CCSpure}{\mathsf{CCS}}
\newcommand{\indisj}[2]{\mathsf{in}_{#1} (#2)}
\newcommand{\action}{\mathsf{act}}
\newcommand{\actionL}{\mathsf{act_L}}
\newcommand{\actionR}{\mathsf{act_R}}
\newcommand{\actionnu}{\mathsf{act_\nu}}
\newcommand{\synch}{\mathsf{synch}}
\begin{document}

\title[Bisimulation as path type for guarded recursive types]{Bisimulation as path type for guarded recursive types}         


\author{Rasmus Ejlers M{\o}gelberg}
\affiliation{
  \position{Associate Professor}
  \department{Department of Computer Science}              
  \institution{IT University of Copenhagen}            
  \streetaddress{Rued Langgaards Vej 7}
  \city{Copenhagen}
  \postcode{2300}
  \country{Denmark}                    
}
\email{mogel@itu.dk}          

\author{Niccol{\`o} Veltri}
\affiliation{
  \position{Postdoctoral Researcher}
  \department{Department of Computer Science}              
  \institution{IT University of Copenhagen}            
  \streetaddress{Rued Langgaards Vej 7}
  \city{Copenhagen}
  \postcode{2300}
  \country{Denmark}                    
}
\email{nive@itu.dk}         

\begin{abstract}
In type theory, coinductive types are used to represent processes, 
and are thus crucial for the formal verification of non-terminating 
reactive programs in proof assistants based on type theory, such 
as Coq and Agda. Currently, programming and reasoning about 
coinductive types is difficult for two reasons: The
need for recursive definitions to be productive, and the lack of 
coincidence of the built-in identity types and the important notion 
of bisimilarity. 

Guarded recursion in the sense of Nakano has recently been 
suggested as a possible approach to dealing with the problem of 
productivity, allowing this to be encoded in types. Indeed, 
coinductive types can be encoded using a combination of guarded
recursion and universal quantification over clocks. This paper studies
the notion of bisimilarity for guarded recursive types in \emph{Ticked Cubical
Type Theory},
an extension of Cubical Type Theory with guarded recursion. 
We prove that, for any functor, an abstract, category theoretic 
notion of bisimilarity for the final guarded coalgebra 
is equivalent (in the sense of homotopy type theory)
to path equality (the primitive notion of equality in cubical type 
theory). As a worked example we study a guarded
notion of labelled transition systems, and show that,
as a special case of the general theorem, path equality coincides 
with an adaptation of the usual notion 
of bisimulation for processes. In particular, this implies that guarded recursion
can be used to give simple equational reasoning proofs of 
bisimilarity. This work should be seen as a step towards obtaining 
bisimilarity as path equality for coinductive types using the 
encodings mentioned above.
\end{abstract}



\maketitle

\section{Introduction}

Programming languages with dependent types such as Coq, Agda, \Fstar\ and Idris are 
attracting increasing attention these years. The main use of dependency is for expressing
predicates on types as used in formal verification. Large scale projects in this area include
the CompCert C compiler, fully verified in Coq~\cite{Leroy06}, and the ongoing Everest 
project~\cite{Everest} 
constructing a fully verified HTTPS stack using \Fstar. Even when full formal verification is not
the goal, dependent types can be used for software development~\cite{IdrisBook}, 
pushing the known advantages of types in programming further.

For these applications, coinductive types are of particular importance, because they
describe processes, i.e., non-terminating and reactive programs. However, programming and 
reasoning about coinductive types is difficult in existing systems with dependent types 
for two reasons. The first is the need for totality in type theory, as required for soundness 
of the logical interpretation
of types. To ensure this, recursively defined data of coinductive type must be 
\emph{productive}~\cite{coquand1993infinite} in the sense that all finite unfoldings can be 
computed in finite time. 

Most proof assistants in use today use syntactic checks to ensure productivity of recursive definitions
of coinductive data, but these checks are not modular, and therefore a certain overhead is associated
with convincing the productivity checker. Guarded recursion in the style 
of~\citet{Nakano:Modality} has been suggested as
a solution to this problem, encoding productivity in types. The idea is to use a modal type operator
$\later$ to encode a time delay in types. This allows productive recursive definitions to be encoded 
as maps of type $\later A \to A$, and thus a fixed point operator taking input of that type allows
for productive recursive programming. These fixed points can be used for programming with 
\emph{guarded recursive types}, i.e., recursive types where the recursion variable only occurs
guarded by $\later$ modalities. Consider for example, the type of guarded streams satisfying 
the type equivalence $\gStr \tyeq \Nat\times \later \gStr$, expressing that the head of the 
stream is immediately available, but the tail takes one time step to compute. In this type, the 
stream of $0$s can be defined using the productive recursive definition $\lambda x . \fold\pair 0x
: \later \gStr \to\gStr$. 
Guarded recursive
types can be used to encode coinductive types~\cite{atkey13icfp}, 
if one allows for the delay modality to be 
indexed by a notion of clocks which can be universally quantified. If, e.g., $\kappa$ is a clock variable
and $\gStrk \tyeq \Nat\times \later^\kappa \gStrk$ is a guarded
recursive type, then $\forall\kappa . \gStrk$
is the coinductive type of streams.  

The second problem for coinductive types in dependent type theory is the notion of equality. 
The natural notion of equality for coinductive types is bisimilarity, but this is not known to coincide
with the build in notion of identity types, and indeed, many authors define bisimilarity as a 
separate notion when working with coinductive 
types~\cite{coquand1993infinite,abel2017interactive}. 
For simple programming with coinductive types it should be the case that these notions coincide,
indeed, it should be the case that the identity type is \emph{equivalent} to the 
bisimilarity type. Such a result would be in the spirit of homotopy type theory, providing
an extensionality principle to coinductive types, similar to function extensionality and 
univalence, which can be viewed as an extensionality principle for universes. 

In this paper we take a step towards such an extensionality principle for coinductive types, by proving
a similar statement for guarded recursive types. Rather than working with identity types in the 
traditional form, we work with path types as used in cubical type theory~\cite{CCHM18}, 
a recent type theory 
based on the cubical model of univalence~\cite{BCH14}. Precisely, we present  
\emph{Ticked Cubical Type Theory} (\TCTT), an extension of cubical type
theory with guarded recursion and ticks, a tool deriving from Clocked Type Theory~\cite{clott}
to be used 
for reasoning about guarded recursive types. 
The guarded fixed point operator satisfies the fixed point unfolding equality up to path equality,
and using this, one can encode guarded recursive types up to type equivalence, as fixed points of 
endomaps on the universe. 

We study a notion of guarded coalgebra, i.e., coalgebra for a functor of the form $F(\later(-))$,
and prove that for the final coalgebra 
the path equality
type is equivalent to a general category theoretic notion of coalgebraic bisimulation for 
$F(\later(-))$. This notion of bisimulation is an adaptation to type theory of a notion defined 
by~\citet{HJ98}. As a running example, we study \emph{guarded  
labelled transition systems} (\GLTS s), i.e., coalgebras for the functor $\Pfin(A \times \later (-))$. 
Here $\Pfin$ is the finite powerset functor, which can be defined~\cite{FGGW18} as a higher 
inductive type (HIT)~\cite{hott}, 
i.e., an inductive type with constructors for elements as well as for equalities. We show that
path equality in the final coalgebra for ${\Pfin(A \times \later (-))}$ coincides with 
a natural adaptation of bisimilarity for finitely branching labelled 
transition systems to their guarded variant. This can be proved either directly using guarded
recursion, or as a consequence of the general theorem mentioned above. 
However, there is a small difference between the abstract category theoretic notion of bisimulation
used in the proof of the general theorem and the concrete one for \GLTS s: 
The concrete formulation is a propositionally
truncated~\cite{hott} version of the abstract one. 
The truncated version is more convenient to work with in the special case, due to the 
set-truncation used in the finite powerset functor. On the other hand, using truncation in the 
general case would break equivalence of bisimilarity and path types. 

%
%
%
As a consequence of the coincidence of bisimilarity and path equality, bisimilarity of processes
can be proved using simple equational reasoning and guarded recursion. We give a few 
examples of that. Moreover, we show how to 
represent Milner's Calculus of
Communicating Systems~\cite{milner1980calculus} as well as 
Hennesy-Milner logic in our type theory. 

The use of the \emph{finite} powerset functor is motivated by the desire to extend this work from guarded
recursive types to coinductive types in future work. It is well known from the set theoretic setting, that 
the unrestricted powerset functor does not have a final coalgebra, but restrictions to subsets of
bounded cardinality do \cite{access,adameklevy}. It is therefore to be expected that a similar restriction is needed
in type theory to model processes with non-determinism as a coinductive type. 
We believe that the results presented
here can be proved also for other cardinalities than finite, such as the countable powerset functor. This allows 
more general notions of processes to be modelled. See Section~\ref{sec:conclusion} for a discussion
of this point. 

We prove consistency of \TCTT\ by constructing a denotational model using the category of 
presheaves over $\cubes \times \omega$. The model combines the constructions of the cubical model
(presheaves over the category of cubes $\cubes$) 
and the topos of trees model of guarded recursion (presheaves over 
$\omega$), and builds on a similar construction used to model Guarded Cubical 
Type Theory~\cite{GCTT},
but extends this to ticks using techniques of~\citet{MM18}. 
Higher inductive types, such as the finite powerset type used in the running example, 
can be modelled using the techniques of~\citet{CHM18}, which adapt easily to the 
model used here.

\subsection{Related work}

Guarded recursion in the form used here originates with~\citet{Nakano:Modality}. 
Aside from the applications
to coinduction mentioned above, guarded recursion has also been 
used to model advanced 
features of programming languages~\cite{BMSS12,bizjak2014model}, 
using an abstract form of step-indexing~\cite{Appel:M01}. 
For this reason, type theory with guarded recursion has been proposed as a metalanguage for reasoning 
about programming languages, and indeed guarded recursion is one of the features of 
IRIS~\cite{Iris}, a framework for higher-order separation logic implemented in Coq. 

Previous work on applications of guarded recursion to programming and reasoning with 
coinductive types has mainly studied the examples of streams~\cite{GDTT} and the lifting 
monad~\cite{GDTT:FPC}. 
To our knowledge, bisimulation and examples involving higher inductive types have not 
previously been studied. The type theory closest to \TCTT\ studied previously is Guarded Cubical
Type Theory (\GCTT)~\cite{GCTT} 
which introduced the idea of fixed point unfoldings as paths. The only difference between the 
two languages is that \GCTT\ 
uses delayed substitutions for reasoning about guarded recursive types, and \TCTT\ replaces 
these by ticks. Ticks can be used to encode delayed substitutions, 
but have better operational behaviour~\cite{clott}. \citet{GCTT} also proved that bisimilarity for 
guarded streams coincides with path equality. 

Sized types~\cite{Hughes:sized:types} 
is a different approach to the problem of encoding productivity in types. The idea is to 
associate abstract ordinals to types indicating the number of possible unfoldings of data of the type. 
The coinductive type is the sized type associated with the size $\infty$. 
Unlike guarded recursion, sized types have been implemented in the proof assistant 
Agda~\cite{Agda},
and this has been used, e.g., for object oriented GUI programming~\cite{abel2017interactive}. On the other hand,
the denotational semantics of guarded recursion is better understood than that of sized types,
and this has the benefit of making it easier to combine with cubical type theory, and in particular to
prove soundness of this combination.

\citet{danielsson2017} studies 
bisimulation for coinductive types using sized types in Agda, proving soundness
of up-to-techniques~\cite{milner1983calculi}. 
Danielsson studies a wider range of systems 
and also weak bisimilarity. In all these cases bisimulation is a separate type, and does not
imply equality as in this work. 

\citet{ahrens2015non} 
encode M-types (types of coinductive trees) in homotopy type theory
and prove that for these bisimilarity logically implies equality. Our result is stronger in two
ways: We consider more general coinductive types (although guarded versions of these), 
and we prove equivalence of types. \citet{vezzosi2017streams} proves that bisimilarity 
for streams implies path equality in Cubical Type Theory extended with 
streams. The proof can most likely be generalised to M-types. 

\subsection{Overview}

Section \ref{sec:ctt} gives a brief introduction to Cubical
Type Theory and the notion of higher inductive types. Cubical type theory is extended
to Ticked Cubical Type Theory (\TCTT) in Section~\ref{sec:tctt}. Section~\ref{sec:pfin} 
introduces the 
finite powerset functor, which is used to define the notion of guarded labelled transition
systems studied in Section~\ref{sec:lts}. Among the results proved there are 
the fact that bisimilarity and path equality coincide in the final coalgebra, and examples
of bisimilarity of processes. 
Section~\ref{sec:coalg} 
presents the general theory of guarded coalgebras, proving the general statement
of equivalence of path equality and bisimilarity, and Section~\ref{sec:equiv} 
relates the abstract notion of 
bisimilarity of Section~\ref{sec:coalg} to the concrete one for guarded labelled 
transition systems of 
Section~\ref{sec:lts}. Finally, Section~\ref{sec:sem} constructs the 
denotational semantics of \TCTT, and
Section~\ref{sec:conclusion} concludes and discusses future work.

\section{Cubical type theory}
\label{sec:ctt}

Cubical type theory (\CTT)~\cite{CCHM18} is an extension of Martin-L\"of type theory
with concepts spawning from the cubical interpretation of homotopy type
theory (\HoTT) \cite{BCH14,CCHM18}. \CTT\ is a dependent type theory
with $\Pi$ types, $\Sigma$ types, sum types, natural numbers and a universe
$\U$. We write $A \to B$ and
$A \times B$ for non-dependent $\Pi$ and $\Sigma$ types
respectively. We also write $a = b$ for judgemental equality of terms.

In \CTT, the identity types of Martin-L\"of type theory are replaced by
a notion of  \emph{path}
types. They correspond to an internalization of the homotopical
interpretation of equalities as paths.  Given $x,y : A$, we write
$\Path_A\,x\,y$ for the type of paths between $x$ and $y$ in $A$, i.e., 
functions from an interval $\I$ to $A$ 
with endpoints $x$ and $y$. The interval is not a type, but still 
name assumptions of the form $i : \I$ can appear in contexts, and there is a 
judgement $\Gamma \vdash r : \I$ which means that 
$r$ is formed using the grammar 
\[
r,s ::= 0 \, | \, 1 \, | \, i \, | \, 1 - r \, | \, r \wedge s \, | \,
r \vee s
\]
using the names appearing in $\Gamma$. Two such 
\emph{dimensions} or \emph{names} are considered equal, if they can
be proved equal using the laws of De Morgan algebra. Formally, 
the interval is thus the free De Morgan algebra and it can be thought of
as the real interval $[0,1]$ with $\meet$ and $\vee$ as the minimum
and maximum operations.

A type in a
context $\Gamma$ containing the names $i_1,\dots,i_n :\I$ should be
thought of as a $n$-dimensional cube.  For example, the type
$i : \I, j : \I \vdash A$ corresponds to a square:
\[
\xymatrix@R=1em@C=1em{
A(i/0)(j/1) \ar[rr]^{A(j/1)} \ar[dd]_{A(i/0)} & & A(i/1)(j/1)
\ar[dd]^{A(i/1)} \\
& A  & \\
A(i/0)(j/0) \ar[rr]_{A(j/0)} & & A(i/1)(j/0) }
\]

Here we have written $A(i/0)$ for the result of substituting $0$ for $i$ in 
$A$. 
Similarly to function spaces, path types have abstraction
and application, denoted $\lambda i.\,p$ and $p\,r$ respectively,
satisfying $\beta$ and $\eta$ equality: $(\lambda i. \,p)\,r = p(i/r)$
and $\lambda i. \,p\,i = p$. The typing rules for these two
operations are given as follows:
\[
\infer{\Gamma \vdash \lambda i. \,p : \Path_A\,(p(i/0))\,(p(i/1))}{\Gamma
  \vdash A & \Gamma , i : \I \vdash p : A} 
\qquad
\infer{\Gamma \vdash p \,r : A}{\Gamma \vdash p : \Path_A\,x\,y &
  \Gamma \vdash r : \I} 
\]
satisfying, for $p :
\Path_A\,x\,y$, the judgemental equalities $p\,0 = x$ and $p\,1 =
y$. These equalities state that $p$ is a path with endpoints $x$ and
$y$.  In what follows, we write
$\Path\,x\,y$ instead of
$\Path_A\,x\,y$ if the type $A$ is clear from context.  
Sometimes, especially in equational proofs, we also write
$x \equiv y$ instead of
$\Path\,x\,y$.  

With the rules given so far, it is possible to prove certain
fundamental properties of path equality. The identity path on a term
$x : A$ is given by: $\refl \,x= \lambda i. \,x :
\Path_A \,x\,x$. The inverse of a path $p :
\Path_A \, x\,y$ is given by:
$p^{-1} = \lambda i.\,p\,(1 - i) :
\Path_A\,y\,x$.  One can prove that functions respect path equality:
 Given $f : A \to B$ and a path $p :
\Path_A\,x\,y$ we have
$\ap_f\,p = \lambda i.\, f\,(p\,i) :
\Path_B\,(f \,x)\,(f\,y)$. Moreover, one can prove extensionality
principles which are generally not provable in standard Martin-L\"of
type theory, like function extensionality: Given two functions
$f , g: \Pi x : A. \, B\,x$ and a family of paths
$p : \Pi x :
A.\,\Path_{B\,x}\,(f\,x)\,(g\,x)$ one can define
$\funext\, f\,g\,p = \lambda i.\, \lambda x.\, p\,x\,i:
\Path_{\Pi x : A.\, B\,x}\,f\,g$.  But other properties characterizing path
equality, such as transitivity or substitutivity, are not provable in
the type theory we have described so far. In order to prove these
properties, \CTT\ admits a specific operation called
\emph{composition}. 
 
In order to define this operation, we first introduce the \emph{face
  lattice} $\F$. Formally, $\F$ is the free distributive lattice
generated by the symbols $(i=0)$ and $(i = 1)$ , and the relation
$(i = 0) \wedge (i = 1) = 0_\F$. This means that its elements are
generated by the grammar:
\[
\varphi, \psi ::= 0_\F \, | \, 1_\F \, | \, (i = 0) \, | \, (i = 1) \,
| \, \varphi \wedge \psi \, | \, \varphi \vee \psi
\]
The judgement $\Gamma \vdash \varphi : \F$ states that $\varphi$
contains only names declared in the context $\Gamma$. 
There is an operation $\Gamma, \varphi$ restricting the context
$\Gamma$, if $\Gamma \vdash \varphi : \F$. Types and terms in context
$\Gamma,\varphi$ are called \emph{partial}. Intuitively, a type in a
restricted context should be thought of as a collection of faces of a
higher dimensional cube. For example, the type $i : \I, j : \I,(i = 0)
\vee (j = 1) \vdash A$ corresponds to the left and top faces of a square:
\[
\xymatrix@R=1em@C=1em{
A(i/0)(j/1) \ar[rr]^{A(j/1)} \ar[dd]_{A(i/0)} & & A(i/1)(j/1) \\
& A  & \\
A(i/0)(j/0) & &}
\]
Given a partial term $\Gamma ,\varphi \vdash u : A$, we write $\Gamma
\vdash t : A [\varphi \mapsto u]$ for the conjunction of the two
judgements $\Gamma \vdash t : A$ and $\Gamma,\varphi \vdash t = u :
A$. In this case we say that the term $t$ extends the partial term $u$
on the extent $\varphi$. This notation can be extended to judgements of the form
${t : A[\varphi_1 \mapsto u_1, \dots, \varphi_n \mapsto u_n]}$, abbreviating one 
typing and $n$ equality judgements. 


The composition operation is defined by the following typing rule:
\[
\infer
{\Gamma \vdash \comp^i_A \,[\varphi \mapsto u]\,u_0 :  A(i/1)[\varphi 
  \mapsto u(i/1)]} 
{
\Gamma \vdash \varphi : \F 
&
\Gamma, i : \I \vdash A
&
\Gamma , \varphi, i : \I \vdash u : A
&
\Gamma \vdash u_0 : A(i/0)[\varphi \mapsto u(i/0)] 
}
\]
Intuitively, the terms $u$ and $u_0$ in the hypothesis specify an open
box in the type $A$: $u_0$ corresponds to the base of the box, while
$u$ corresponds to the sides. The $\comp$ operation constructs a lid
for this open box. Composition can be used to construct, e.g., the composition
of paths $p : \Path_A\,x\,y$ and $q :
 \Path_A \,y\,z$:

%
%
\[
p ; q = \lambda i. \,\comp^j_A \,[(i = 0) \mapsto x , (i = 1) \mapsto
q\,j ] \,(p\,i): \Path_A \,x\,z
\]
%

We also refer to~\cite{CCHM18} for the remaining constructions of 
\CTT, which include systems and a glueing construction. The latter is a 
fundamental ingredient in the proof of the univalence axiom and the
construction of composition for the universe. We will not be needing 
gluing and systems directly in this paper, and so omit them from the 
brief overview. 

\subsection{Higher inductive types}
\label{sec:hits}

Recently, \citet{CHM18} have introduced an extension of
\CTT\ with higher inductive types (HITs). HITs are an important concept
in \HoTT\ which generalize the notion of inductive type. A HIT $A$ can
be thought of as an inductive type in which the introduction rules not
only specify the generators of $A$, but can also specify the
generators of the higher equality types of $A$. Using HITs, one can
define topological spaces such as the circle, the torus and
suspensions in \HoTT\ and develop homotopy
theory in a synthetic way \cite[Ch.~8]{hott}. HITs are also used for
implementing free algebras of signatures specified by operations and
equations, for example the free group on a type, or to construct
quotients of types by equivalence relations.

We now recall how to extend \CTT\ with HITs by showing how to add
propositional truncation. We refer to \cite{CHM18} for the definition
of other HITs and the description of a common pattern for introducing
HITs in \CTT.  In Section \ref{sec:pfin}, we will also present the
finite powerset construction as a HIT.

Given a type $A$, we write $\| A \|$ for the \emph{propositional
  truncation} of $A$. The type $\| A \|$ can have at most one
inhabitant up to path equality, informally, an ``uninformative'' proof
of $A$. In other words, the existence of a term $t : \| A \|$ tells us
that the type $A$ is inhabited, but it does not provide any explicit
inhabitant of $A$. Moreover, any other term $u : \| A \|$ is path
equal to $t$. The introduction rules of $\| A \|$ are the following:
\[
\infer{| x | : \| A \|}{x : A}
\qquad
\infer{\sq\,u\,v\,r : \| A \|}{u , v : \| A \| & r : \I}
\]
with the judgemental equalities $\sq\,u\,v\,0 = u$ and $\sq\,u\,v\,1 =
v$. Notice that the higher constructor $\sq$, which when applied to
terms $u$ and $v$ specifies an element in $\Path\,u\,v$, is treated in
\CTT\ as a point constructor which depends on a name $r : \I$.

Remember that in \CTT\ every type is endowed with a composition
operation. Coquand et al. showed that it is possible to define
composition for a HIT by adding a \emph{homogeneous composition}
as an additional constructor for the type. In the case of propositional truncation, the
latter is introduced by the following rule:
\[
\infer{\Gamma \vdash \hcomp^i_{\|A \|} \, [\varphi \mapsto u]\,u_0 : \| A \| [ \varphi
  \mapsto  u(i/1)] }{\Gamma \vdash A & \Gamma \vdash \varphi : \F &
  \Gamma , \varphi , i 
  :  \I \vdash u : \| A \| & \Gamma \vdash u_0 : \| A \| [ \varphi
  \mapsto  u(i/0)]} 
\]
Notice that $\hcomp$ differs from $\comp$, since in the latter the
type $A$ depends also on the name $i : \I$.
We refer to \cite{CHM18} for the details on the derivation of a
composition operation for $\| A \|$ and for other HITs. 

We now describe the elimination principle of propositional
truncation in \CTT. Given $x :  \| A \| \vdash P\,x$,  a family of terms $x :
A \vdash t\,x : P\,| x |$ and a family of paths 
\[u,v : \| A \|, x :
P\,u , y : P\,v, i : \I \vdash sq\,u\,v\,x\,y\,i : P\,(\sq\,u\,v\,i)[(i
= 0) \mapsto x, (i = 1) \mapsto y],\]
we can define $f : \Pi x : \| A
\|. \, P\,x$ by cases:
\begin{align*}
f\,|x| & = t\,x \\
f\,(\sq\,u\,v\,r) & = sq\,u\,v\, (f\,u)\,(f\,v) \, r 
\end{align*}
plus a case for the $\hcomp$ constructor, see~ \cite{CHM18}.

The addition of propositional truncation and other HITs to \CTT\ is
justified by the existence of these types in the cubical set model of
\CTT~ \cite{CHM18}. 

We conclude this section by briefly describing an auxiliary HIT
$\S$. The type $\S$ is generated by two points $\b_0,\b_1 : \S$ and
two paths $\pa_0,\pa_1$ between $\b_0$ and $\b_1$. Similarly to other
HITs, we also have to add an $\hcomp$ constructor. The induction
principle of $\S$ states that to give a map of type
$\Pi s : \S. \,P\,s$ is the same as to give,  for $j = 0,1$, a point $b_j : P\,\b_j$
and a path \[{i : \I \vdash p_j\,i : P \,(\pa_j\,i)[(i = 0) \mapsto b_0, (i = 1)
\mapsto b_1]}.\]
The type $\S$ is used in the HoTT book for specifying
the 0-truncation operation as a HIT \cite[Ch.~6.9]{hott}. In Section
\ref{sec:pfin}, we will employ the type $\S$ in the $\trunc$
constructor of the finite powerset, that will force the finite
powerset type to be a set.

\subsection{Some basic notions and results}

This section recalls some basic notions and results from homotopy type theory.
All results mentioned are proved in~\cite{hott} unless otherwise stated.
%

We say that a type is \emph{contractible} if it has exactly one
inhabitant. We define a type
$\isContr \,A = \Sigma x : A. \, \Pi y :
A.\,\Path\,x\,y$, which is inhabited if and only if $A$ is
contractible. The unit type $1$ is contractible.

We say that a type is a \emph{proposition}, or it is
\emph{($-1$)-truncated}, if there exists a path between any two of its
inhabitants. We define a type
$\isProp \,A = \Pi x\,y : A. \,
\Path\,x\,y$, which is inhabited if and only if $A$ is a
proposition. We also define a type $\Prop$ consisting of all types
which are propositions. Every contractible type is a proposition. The
empty type $0$ is a proposition. The propositional truncation of a
type is a proposition by construction. The dependent function type
$\Pi x : A. \, B \,x$ is a proposition if and only if $B\,x$ is a
proposition for all $x : A$. Given two types $A$ and $B$, we define $A
\vee B = \| A + B \|$. Moreover, given a type $A$ and a type family
$P$ on $A$, we define $\exists
x : A.\,P\,x = \| \Sigma x : A. \, P\,x\|$. When the latter type is
inhabited, we say that there \emph{merely exists} a term $x : A$ such
that $P\,x$ holds.

We say that a type is a \emph{set}, or it is \emph{0-truncated}, if
there exists at most one path between any two of its inhabitants. We
define a type
$\isSet \,A = \Pi x\,y : A. \,\Pi p\,q : \Path\,x\,y.\,
\Path\,p\,q$, which is inhabited if and only if $A$ is a set. We also
define a type $\Set$ consisting of all types which are sets. Every
proposition is a set. The type $\N$ of natural numbers is a
set. Another example of a set is the finite powerset introduced in
Section \ref{sec:pfin}.

The \emph{fiber} of a function $f : A \to B$ over a term $y : B$ is
given by the type $\fib_f\,y = \Sigma x : A.\, \Path \,(f\,x)\,y$.  We
say that a function $f : A \to B$ is an \emph{equivalence} if all the
fibers of $f$ are contractible. We define a type
$\isEquiv\,f = \Pi y : B.\, \isContr\,(\fib_f\,y)$, which is inhabited
if and only if $f$ is an equivalence. We say that two types are
\emph{equivalent} if there exists an equivalence between them. We also
define a type $\Equiv\,A\,B$ consisting of all functions between $A$
and $B$ which are equivalences.  In order to prove that a function $f$
is an equivalence, it is enough to construct a function $g : B \to A$
such that $g \circ f$ is path equal to the identity function on
$A$ and $f \circ g$ is path equal to the identity function
on $B$. 

A characteristic feature of \HoTT\ is the \emph{univalence axiom},
stating that the canonical map from 
$\Path\,A\,B$ to $\Equiv\,A\,B$ is an equivalence, for all types
$A$ and $B$. The univalence axiom is provable in $\CTT$ \cite{CCHM18}. In the rest of the paper, when we refer to the univalence axiom, we mean direction $\Equiv\,A\,B \to \Path\,A\,B$ of the
equivalence. In other words, when we need to prove that two types are path equal, we will invoke the univalence axiom and prove that the two types are equivalent instead. When $A$ and $B$ are propositions, the univalence axiom implies that the type $\Path\,A\,B$ is equivalent to the type $A \iff B$ of logical equivalences between $A$ and $B$, where $A \iff B = (A \to B) \times (B \to A)$.

We conclude this section with a couple of lemmata that we will employ
later on. These are standard results of \HoTT\ that we state without
proofs.

\begin{lemma}
\label{lem:eqid}
Given a type $X$ and elements $x , y , z : X$, if there exists a path
$p : \Path_X \,x \, y$, then the types $\Path_X \,x\, z$ and
$\Path_X \,y\, z$ are equivalent.
\end{lemma}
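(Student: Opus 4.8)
The plan is to exhibit an explicit equivalence between $\Path_X\,x\,z$ and $\Path_X\,y\,z$, using the hypothesised path $p : \Path_X\,x\,y$ to transport one endpoint. The cleanest route is to use path composition, which is available in \CTT\ via the composition operation as recalled in the excerpt (the operation $p ; q$ that composes $p : \Path_A\,x\,y$ and $q : \Path_A\,y\,z$). First I would define a function $\varphi : \Path_X\,x\,z \to \Path_X\,y\,z$ by precomposing with the inverse path $p^{-1} : \Path_X\,y\,x$, that is $\varphi(q) = p^{-1} ; q$. In the other direction I would define $\psi : \Path_X\,y\,z \to \Path_X\,x\,z$ by $\psi(r) = p ; r$. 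Both maps are well typed precisely because composition lines up the shared endpoint, and the inverse $p^{-1} = \lambda i.\,p\,(1-i)$ is the one given in the excerpt.

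To show these are mutually inverse up to path equality, I would appeal to the groupoid laws for path composition in \CTT: associativity, the fact that $p ; p^{-1}$ and $p^{-1} ; p$ are path equal to the relevant reflexivity paths, and that $\refl$ is a left and right unit for composition. Concretely, $\psi(\varphi(q)) = p ; (p^{-1} ; q)$ is path equal to $(p ; p^{-1}) ; q$ by associativity, which is path equal to $\refl\,x ; q$, which is path equal to $q$ by the unit law; and symmetrically for $\varphi \circ \psi$. By the criterion recalled in the excerpt — to prove $f$ is an equivalence it suffices to give an inverse $g$ with $g \circ f$ and $f \circ g$ both path equal to the identity — this establishes that $\varphi$ is an equivalence, hence the two types are equivalent.

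I expect the main obstacle to be purely bookkeeping: the groupoid laws (associativity and the inverse/unit laws for the $\comp$-based composition) are not proved in the excerpt and must be invoked as standard facts of \CTT\ or \HoTT. Since the statement is flagged as a standard result of \HoTT\ that the authors state without proof, it is legitimate here to cite these laws directly rather than rederive them from the composition operation. An even slicker alternative, avoiding any explicit groupoid calculation, is to proceed by path induction: transport along $p$ gives a map $\transp$ sending $\Path_X\,x\,z$ to $\Path_X\,y\,z$, and since transport along a path is always an equivalence (with inverse given by transport along $p^{-1}$), the result follows immediately. Given the \HoTT\ setting and the availability of univalence and transport, I would likely present this transport-based argument as the cleanest proof, keeping the composition-based construction as the concrete description of the underlying map.
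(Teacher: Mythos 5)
Your proof is correct; note that the paper itself offers no proof of this lemma, stating it explicitly as one of the ``standard results of \HoTT\ that we state without proofs,'' so there is no paper argument to diverge from. Both of your routes are faithful renderings of the standard argument: the composition-based one (maps $q \mapsto p^{-1};q$ and $r \mapsto p;r$, mutually inverse by the groupoid laws, then the two-sided-inverse criterion for equivalences that the paper recalls) and the transport-based one (transport in the family $\lambda w.\,\Path_X\,w\,z$ along $p$, which is an equivalence with inverse given by transport along $p^{-1}$) are each complete, given that the groupoid laws and the transport lemmas are themselves standard facts of \CTT/\HoTT\ that may legitimately be cited here.
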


\begin{lemma}  
\label{lem:sigmaequiv}
 Let $X,X' : \U$ and $Y : X' \to \U$. If there exists an equivalence
 $f : X \to X'$, then the function $h : ( \Sigma x : X.\, Y (f\,x) ) \to
 \Sigma x' : X'.\, Y x'$ given by $h\,(x,y) = (f\,x \, , \, y)$ is
 also an equivalence.
\end{lemma}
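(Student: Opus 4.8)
The plan is to prove Lemma~\ref{lem:sigmaequiv} by exhibiting an explicit inverse to the map $h$ and verifying that both composites are path equal to the respective identity functions, which by the criterion recalled in the excerpt suffices to establish that $h$ is an equivalence. Since we are given an equivalence $f : X \to X'$, I would first extract a function $g : X' \to X$ together with homotopies witnessing that $g$ is a two-sided inverse to $f$ up to path equality. Concretely, from $\isEquiv\,f$ one obtains $g : X' \to X$ and paths $\eta : \Pi x : X.\,\Path_X\,(g\,(f\,x))\,x$ and $\epsilon : \Pi x' : X'.\,\Path_{X'}\,(f\,(g\,x'))\,x'$. The idea is then to define the candidate inverse $k : (\Sigma x' : X'.\,Y\,x') \to \Sigma x : X.\,Y\,(f\,x)$ by sending a pair $(x', y')$ with $y' : Y\,x'$ to the pair $(g\,x', y'')$, where $y''$ is obtained from $y'$ by transporting along the path $\epsilon\,x' : \Path_{X'}\,(f\,(g\,x'))\,x'$, i.e. $y'' = \transp\,(\lambda i.\,Y\,((\epsilon\,x')^{-1}\,i))\,y'$, so that $y''$ lands in $Y\,(f\,(g\,x'))$ as required by the first component $g\,x'$.

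The next step is to verify the two round-trips. For the composite $h \circ k$, starting from $(x', y')$ we obtain $(f\,(g\,x'), y'')$, and we must produce a path in $\Sigma x' : X'.\,Y\,x'$ back to $(x', y')$; the first component is handled by $\epsilon\,x'$, and the second component is exactly the transport of $y'$ along that path, so the pair is path equal to $(x', y')$ by the standard characterization of paths in $\Sigma$-types (a path of pairs is a pair of a path in the base and a path over it in the fiber). For the composite $k \circ h$, starting from $(x, y)$ with $y : Y\,(f\,x)$ we obtain $(g\,(f\,x), y''')$, and we need a path back to $(x, y)$; here the base path is $\eta\,x : \Path_X\,(g\,(f\,x))\,x$, and the fiber component requires checking that transporting $y$ appropriately is compatible with $\eta$.

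The main obstacle I expect is precisely this last compatibility in the $k \circ h$ direction: the transport in $k$ is performed along $\epsilon\,(f\,x)$ in $X'$, whereas the base path we want to use for the $\Sigma$-path is $\eta\,x$ in $X$, and reconciling these requires a coherence between $\eta$ and $\epsilon$ along $f$. A plain two-sided inverse $(g, \eta, \epsilon)$ need not satisfy such a triangle identity on the nose, so the clean way to proceed is to first upgrade $f$ to a \emph{half-adjoint equivalence}, replacing $\epsilon$ by an $\epsilon'$ for which the coherence $\Path\,(\ap_f\,(\eta\,x))\,(\epsilon'\,(f\,x))$ holds; this upgrade is a standard result of \HoTT\ available from $\isEquiv\,f$. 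With the adjoint coherence in hand, the transported fiber elements match up and the required path over $\eta\,x$ can be constructed, completing the verification that $k$ is a genuine inverse and hence that $h$ is an equivalence. Alternatively, one can bypass the explicit inverse entirely by computing the fibers of $h$ directly and showing each is contractible, reducing to the contractibility of the fibers of $f$ via Lemma~\ref{lem:eqid} and the fact that $\isContr$ is preserved under the equivalences induced by $f$; but the adjoint-inverse route keeps the bookkeeping most transparent.
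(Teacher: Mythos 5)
Your proof is correct, but there is nothing in the paper to compare it against: Lemma~\ref{lem:sigmaequiv} is one of the two lemmata that the paper explicitly states \emph{without} proof, citing them as standard results of \HoTT. Your argument is a faithful reconstruction of the standard one, and you correctly isolated the only genuinely delicate point: in the $k \circ h$ round-trip, the transport built into $k$ is governed by $\epsilon\,(f\,x)$ (a path in $X'$), while the base path of the required $\Sigma$-path is $\eta\,x$ (a path in $X$), and reconciling them needs the triangle coherence $\Path\,(\ap_f\,(\eta\,x))\,(\epsilon'\,(f\,x))$, which a bare quasi-inverse does not supply. Upgrading to a half-adjoint equivalence \cite[Ch.~4.2]{hott} is exactly the standard fix; with it, the composite $(\epsilon'\,(f\,x))^{-1} ; \ap_f\,(\eta\,x)$ collapses to $\refl$, so the fiber component of the path goes through, and the $h \circ k$ direction is, as you say, immediate from the characterization of paths in $\Sigma$-types. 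Your alternative route --- showing $\fib_h\,(x',y')$ is equivalent to $\fib_f\,x'$ by contracting away the fiber data, hence contractible --- is equally valid and is arguably closer in spirit to the definition of $\isEquiv$ that the paper actually uses (contractibility of all fibers). A third option, available since the ambient theory proves univalence, is equivalence induction: reduce to the case $f = \lambda x.\,x$, where $h$ is judgementally the identity. Any of these would serve the paper's purposes.
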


\section{Ticked cubical type theory}
\label{sec:tctt}

We now extend \CTT\ with guarded recursion.
The resulting type theory, called
\emph{ticked cubical type theory} (\TCTT), differs from
guarded cubical type theory (\GCTT) \cite{GCTT}
only by featuring ticks. 
Ticks are an invention deriving from Clocked Type
Theory~\cite{clott} and can be used for reasoning about delayed data. In particular, 
ticks can be used to encode the delayed substitutions of \GCTT, which served the same
purpose, but are simpler and can be given confluent, strongly normalising 
reduction semantics satisfying canonicity. Indeed, these results have been proved
for Clocked Type Theory which includes guarded fixed points~\cite{clott}. 


Formally, \TCTT\ extends \CTT\ with tick assumptions $\tickA : \T$ 
in the context, along with abstraction and application to ticks following
these rules.
\begin{gather*}
\infer{\Gamma,\tickA : \T \vdash}{ \Gamma \vdash}
\qquad
\infer{\Gamma \vdash \latbind\tickA A}{\Gamma, \tickA : \T \vdash A}
\qquad
\infer{\Gamma \vdash \tabs \tickA t : \latbind\tickA A}{\Gamma ,
  \tickA : \T \vdash t : A} 
\\
\infer{\Gamma , \tickB  : \T, \Gamma' \vdash \tapp[\tickB]t :
  A\subst{\tickA}{\tickB}}{\Gamma \vdash t : \latbind\tickA A} 
  \qquad
\infer{\Gamma \vdash \latbind\tickA A : \U}{\Gamma, \tickA : \T \vdash A : \U} 
\end{gather*}
together with $\beta$ and $\eta$ rules:
\[
(\tabs\tickA t)[\beta] = t\subst\tickA\tickB
\qquad
\tabs\tickA{\tapp t} = t
\]
The sort $\T$ of ticks enjoys the same status as the interval $\I$. 
In particular it is not a type. The type $\latbind\tickA A$ should be thought of as 
classifying data of type $A$ that is only available one time step from now. 
Similarly, ticks should be thought of as evidence that time has passed. 
For example, in a context  of the form $\Gamma, \tickA : \T, \Gamma'$, the assumptions
in $\Gamma$ are available for one more time step than those of $\Gamma'$. In the 
application rule, the assumption states that $t$ is a promise of data of type $A$ available 
one time-step after the variables in $\Gamma$ have arrived, and thus the tick $\tickB$,
can be used to open $t$ to an element of type $A$. 
We write $\later A$ for $\latbind \tickA A$
where $\tickA$ does not occur free in $A$. 
Note in particular that the rule for tick application prevents terms like
$\lambda x. \tabs \tickA{\tapp{\tapp x}} : \later \later A \to \later A$ 
being well typed. Such a term in combination with the fixed point operator to 
be introduced below would make any type of the form $\later A$ inhabited,
and logically trivialise a large part of the theory.

The abstraction of the tick $\tickA$ in type $\latbind\tickA A$ makes 
the type behave like a dependent function space between ticks and the
type $A$, similarly to the path type. 
This can be used, e.g., to type the terms of a dependent form of the applicative functor
law:
\begin{align}
\next &= \lambda x.\,\tabs\tickA x : A \to \later A \nonumber \\
\odot &= \lambda f. \, \lambda y.\,\tabs\tickA{f[\alpha] \,(y[\alpha])}
: \later (\Pi x : A.\, B\,x) \to \Pi y : \later A.\, \latbind\tickA B(x/y[\alpha]) 
\label{eq:applicative}
\end{align}

As part of their special status, names and faces are independent of time, in 
the sense that they can always be commuted with ticks as expressed in
the invertible rules
\begin{equation}\label{eq:rulescomm}
\infer={\Gamma,i : \I,\tickA : \T \vdash A}{\Gamma,\tickA : \T,i : \I \vdash
  A}  
\qquad
\infer={\Gamma,i : \I,\tickA : \T \vdash t : A}{\Gamma,\tickA : \T,i : \I \vdash
  t : A}  
\qquad
\infer={\Gamma,\varphi,\tickA : \T \vdash A}{\Gamma,\tickA : \T, \varphi
  \vdash A}  
\qquad
\infer={\Gamma,\varphi,\tickA : \T \vdash t : A}{\Gamma,\tickA : \T, \varphi
  \vdash t : A}  
\end{equation}
This effect could have similarly been obtained by not erasing names and faces from 
the assumption of the tick application rule above, in which case the above rules could 
have been derived. One consequence of these rules is an extensionality principle for 
$\later$ stating the equivalence of types
\begin{equation}\label{eq:later:ext}
\Path_{\latbind\tickA A}\,x\,y \tyeq \latbind\tickA(\Path_{A}\,(\tapp x)\,(\tapp y) )
\end{equation}
as witnessed by the terms
\[
\lambda p.\,\lambda \alpha.\,\lambda i.\, (p\,i)[\alpha] :
\Path_{\latbind\tickA A}\,x\,y \to \latbind\tickA(\Path_{A}\,(\tapp x)\,(\tapp y) )
\]
and
\[
\lambda p.\,\lambda i.\, \lambda \alpha.\, p[\alpha]\,i :
\latbind\tickA(\Path_{A}\,(\tapp x)\,(\tapp y) ) \to \Path_{\latbind\tickA A}\,x\,y
\]
In particular, if $x,y : A$ then 
\begin{equation}\label{eq:later:ext:next}
\Path_{\later A}\,(\next\, x)\,(\next\, y) \tyeq \later(\Path_{A}\,x\,y)
\end{equation}

Note that as a consequence of this, the type of propositions is closed under $\later$:
\begin{lemma}
 Let $A : \later \U$. If $\latbind\tickA {\isProp{(\tapp A)}}$ also $\isProp{(\latbind \tickA {\tapp A})}$.
\end{lemma}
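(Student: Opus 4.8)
The plan is to unfold $\isProp{(\latbind\tickA{\tapp A})} = \Pi x\,y : \latbind\tickA(\tapp A).\,\Path\,x\,y$, fix arbitrary $x,y : \latbind\tickA(\tapp A)$, and construct a path between them. The central tool is the $\later$-extensionality principle \eqref{eq:later:ext}, instantiated not at a constant type but at the tick-dependent family $\tapp A$ (which is legitimate, since $\tickA : \T \vdash \tapp A$ is a type). It supplies the equivalence
\[
\Path_{\latbind\tickA(\tapp A)}\,x\,y \;\tyeq\; \latbind\tickA\bigl(\Path_{\tapp A}\,(\tapp x)\,(\tapp y)\bigr),
\]
so it suffices to inhabit the right-hand side.

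First I would exploit the hypothesis $h : \latbind\tickA{\isProp{(\tapp A)}}$. Eliminating it under a fresh tick yields $\tapp h : \isProp{(\tapp A)} = \Pi u\,v : \tapp A.\,\Path_{\tapp A}\,u\,v$; since $x$ and $y$ live in the ambient context, the applications $\tapp x, \tapp y : \tapp A$ are well formed, and feeding them to $\tapp h$ produces a path $\tapp h\,(\tapp x)\,(\tapp y) : \Path_{\tapp A}\,(\tapp x)\,(\tapp y)$. Abstracting the tick then gives
\[
\tabs\tickA\bigl(\tapp h\,(\tapp x)\,(\tapp y)\bigr) : \latbind\tickA\bigl(\Path_{\tapp A}\,(\tapp x)\,(\tapp y)\bigr),
\]
which is exactly the required inhabitant of the right-hand side. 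Transporting along the equivalence—or, more directly, applying the explicit inverse map $\lambda p.\,\lambda i.\,\tabs\tickA(\tapp p\,i)$ from \eqref{eq:later:ext}—then produces the desired path, which after $\beta$-reduction is simply $\lambda i.\,\tabs\tickA(\tapp h\,(\tapp x)\,(\tapp y)\,i) : \Path\,x\,y$.

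The argument involves no genuinely hard step; the only points demanding care are the tick bookkeeping and the correct instantiation of \eqref{eq:later:ext}. Specifically, I must check that $x$, $y$, and $h$ may legally be eliminated under the newly introduced tick—which they may, since the tick-application rule requires only that the eliminated term be typeable to the left of the tick, and all three sit in the ambient context—and that the extensionality equivalence is applied to the tick-dependent family $\tapp A$ rather than to a type independent of the tick. Once these are in place, the construction goes through by composing the hypothesis with the inverse direction of \eqref{eq:later:ext}.
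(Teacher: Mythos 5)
Your proof is correct and follows exactly the paper's own argument: fix $x,y$, reduce to $\latbind\tickA(\Path_{\tapp A}\,(\tapp x)\,(\tapp y))$ via the extensionality principle \eqref{eq:later:ext}, and inhabit that type with $\tabs\tickA{(\tapp h\,(\tapp x)\,(\tapp y))}$ using the hypothesis under a tick. Your version is merely more explicit about the tick bookkeeping and the instantiation of \eqref{eq:later:ext} at the tick-dependent type $\tapp A$, which the paper leaves implicit.
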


\begin{proof}
 Suppose $x,y : \latbind \tickA {\tapp A}$, we must show that $x \equiv y$. By extensionality it 
 suffices to show $\latbind\tickA{(\tapp x \equiv \tapp y)}$. By assumption there is a $p : 
 \latbind\tickA {\isProp{(\tapp A)}}$ and the term $\tabs\tickA{\tapp p\,(\tapp x)\,(\tapp y)}$
 inhabits the desired type. 
\end{proof}

One additional benefit of the ticks is that the composition operator for 
$\later$ can be defined in type theory. 
To see this, assume
\begin{gather*}
\Gamma \vdash \varphi : \F 
\qquad
\Gamma, i : \I \vdash \latbind\tickA A
\\
\Gamma , \varphi, i : \I \vdash u : \latbind\tickA A
\qquad
\Gamma \vdash u_0 : (\latbind\tickA A)(i/0)[\varphi \mapsto u(i/0)]. 
\end{gather*}
Since these imply assumptions for the composition operator on $A$:
\begin{gather*}
\Gamma, \tickA : \T\vdash \varphi : \F 
\qquad
\Gamma, \tickA : \T, i : \I \vdash A
\\
\Gamma , \tickA : \T, \varphi, i : \I \vdash \tapp u : A
\qquad
\Gamma , \tickA : \T \vdash \tapp{u_0} : A(i/0)[\varphi \mapsto \tapp{u}(i/0)]. 
\end{gather*}
we can define
\[
\Gamma \vdash \comp^i_{\latbind\tickA A} \,[\varphi \mapsto u] \,u_0 = \tabs\tickA{\comp^i_{A} \,[\varphi \mapsto \tapp u ]\,(\tapp{u_0})} :
(\latbind\tickA A)(i/1)
\]
Note that this uses the rules in (\ref{eq:rulescomm}) for exchange of
ticks and names and faces. 
It is easy to see that the term $\comp^i_{\latbind\tickA A} \,[\varphi \mapsto
u] \,u_0$ extends $u(i/1)$ on $\varphi$.

\subsection{Fixed points}

\TCTT\ comes with a primitive fixed point operator mapping 
terms $f$ of type $\later A \to A$ to fixed points of the composition $f \circ \next$.
As in \GCTT, the fixed point equality holds only up to path, not judgemental equality. This is
to ensure termination of the reduction semantics, but is not necessary for logical consistency, as 
verified by the model, in which the fixed point equality holds definitionally. The typing rule gives 
the path, with the end points definitionally equal to the two sides of the fixed point equality.
\[
\infer{\Gamma \vdash \dfix^r x.t : \later A}{\Gamma, x : \later A \vdash t : A \quad \Gamma \vdash r : \I}
\qquad
\dfix^1 x.t = \next\, t\subst x{\dfix^0 x.t}
\]
Note that the above \emph{delayed} fixed point is of type $\later A$ rather than $A$. 
This is to maintain canonicity, but one can define a fixed point $\fix^r x.t$ as 
$t \subst x{\dfix^r x.t}$, which gives the derived rule
\[
\infer{\Gamma \vdash \pfix\, x.t : \Path_A (\fix\, x.t)(t\subst x{\next (\fix\, x.t)})}{\Gamma, x : \later A \vdash t : A}
\]
by defining $\pfix\, x.t\eqdef \lambda i . \fix^i x.t$. We write simply $\fix\, x.t$ for $\fix^0\, x.t$

Fixed points are unique in the following sense. Let $\Gamma, x : \later A \vdash t : A$
and consider a term $a : A$ such that
$p : \Path_A\,a\,(t\subst x{\next\,a})$. Then $a$ is path equal to $\fix\, x.t$ as 
can be proved by guarded recursion: If $e : \later \Path_A\,a\,(\fix\, x. t)$ then 
\[
\lambda i. t\subst x{\tabs\tickA e[\tickA]\,i} : \Path_A\,(t \subst x
{\next \, a})\,(t \subst x{\next (\fix\, x. t)})
\]
and thus we can prove $\Path_A\,a\,(\fix\, x. t)$ by composing this with
$p$ and $(\pfix\, x.t)^{-1}$. 

One application of fixed points is to define \emph{guarded recursive types} as
fixed points of endomaps on the universe. 
For example, one can define a type of guarded streams of
natural numbers as
%
%
$\gStr = \fix \,X. \Nat \times \latbind \tickA{X[\tickA]} : \U$. Since path equality
at the universe is equivalence of types, we obtain an equivalence
\[
\gStr \tyeq \Nat\times \later \gStr
\]
witnessed by terms $\fold : \Nat\times \later \gStr \to \gStr$ and 
$\unfold : \gStr \to \Nat\times \later \gStr$. One can then use the fixed 
point operator for programming with streams and, e.g., define a constant
stream of zeros as $\fix\, x. \fold\pair 0x$. We refer to~\cite{GDTT} for more examples
of programming with guarded recursive types. 

Guarded recursive types are simultaneously initial algebras and final 
coalgebras. This is a standard result~\cite{BMSS12}, but formulated here in terms 
of path equality, so we repeat it. Recall that a small functor is a term
$F : \U \to \U$ together with another term (also called $F$) 
of type $\Pi X,Y : U. (X \to Y) \to FX \to FY$ satisfying the functor laws
up to path equality, i.e., with witnesses of types
\begin{gather*}
\Pi X : U . \Path_{FX \to FX} (\lambda x : FX.x) \, (F(\lambda x : X.x)) \\
\Pi X,Y,Z : U, f: X \to Y, g : Y \to Z. \Path_{FX \to FZ} (Fg\circ Ff) \, (F(g\circ f)) \, .
\end{gather*}

For example, $\later : \U \to \U$ is a functor, with functorial action
given by $\later f = \lambda x . \tabs\tickA{f(\tapp x)} : \later A
\to \later B$ for $f : A \to B$.

\begin{proposition} \label{prop:final:coalg}
 Let $F$ be a small functor. The fixed point $\nuFg \eqdef \fix \,
 X. F(\latbind \tickA{X[\tickA]})$ is an initial algebra and a 
 final coalgebra for the functor $F \circ \later$. We spell out the final coalgebra statement: For any
 $A : \U$ and $f : A \to F(\later A)$, there exists a unique (up to path equality) term 
 $h : A \to \nuFg$ such that 
 \[
 \unfold \circ h \equiv F(\later h) \circ f
 \]
\end{proposition}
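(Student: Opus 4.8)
The plan is to exploit the fixed point equivalence $\nuFg \tyeq F(\later \nuFg)$, witnessed by $\fold$ and $\unfold$, and to build the mediating map $h$ by guarded recursion, reducing both existence and uniqueness to properties of the guarded fixed point operator. Throughout I will use the unfolding equalities $\fold \circ \unfold \equiv \id$ and $\unfold \circ \fold \equiv \id$ coming from the fixed point path for $\nuFg$, together with the functor laws for $F$ holding up to path equality.

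For existence, first note that post-composing the target equation $\unfold \circ h \equiv F(\later h) \circ f$ with $\fold$ turns it into $h \equiv \fold \circ F(\later h) \circ f$. This suggests defining $h$ as a guarded fixed point of the operator $\Phi : \later (A \to \nuFg) \to (A \to \nuFg)$ given by $\Phi(g) = \fold \circ F(g \odot (-)) \circ f$, where $g \odot (-) : \later A \to \later \nuFg$ is the non-dependent instance of the applicative operator $\odot$ from \eqref{eq:applicative}. Setting $h = \fix\, g.\, \Phi(g)$, the key observation is that $(\next\, h) \odot (-)$ equals $\later h$ \emph{judgementally}, since $(\next\, h)[\tickA] = h$; hence $\Phi(\next\, h) = \fold \circ F(\later h) \circ f$. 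The path supplied by $\pfix\, g.\, \Phi(g)$ then witnesses $h \equiv \fold \circ F(\later h) \circ f$, and applying $\ap$ with $\unfold \circ (-)$ together with $\unfold \circ \fold \equiv \id$ yields the desired coalgebra equation.

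For uniqueness, suppose $h'$ also satisfies $\unfold \circ h' \equiv F(\later h') \circ f$. Applying $\ap$ with $\fold \circ (-)$ and using $\fold \circ \unfold \equiv \id$ produces a path $h' \equiv \fold \circ F(\later h') \circ f$, whose right-hand side is precisely $\Phi(\next\, h')$ by the same judgemental identity $(\next\, h') \odot (-) = \later h'$ used above. Thus $h' \equiv \Phi(\next\, h')$, which is exactly the hypothesis required to invoke the uniqueness of guarded fixed points recalled earlier in this section (with $A \to \nuFg$ in the role of $A$ and $\Phi$ in the role of $t$), giving $h' \equiv \fix\, g.\, \Phi(g) = h$. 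Since $h$ itself was constructed as this fixed point, any two solutions are therefore path equal.

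The main obstacle is the equational bookkeeping rather than any conceptual hurdle: one must keep track of the interplay between the judgemental identity $(\next\, h) \odot (-) = \later h$, the functor laws of $F$ which hold only up to path, and the fixed point unfolding which also holds only up to path, composing these paths in the correct order (and possibly using $\funext$ to pass between pointwise and function-level equalities). The initial algebra statement is dual: given an algebra $a : F(\later B) \to B$, the unique algebra morphism $\nuFg \to B$ is obtained by the same guarded fixed point technique, now using $\fold$ as the structure map and lifting the recursive call through $\later$; I would either spell this out symmetrically or appeal to the standard result \cite{BMSS12}.
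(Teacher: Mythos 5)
Your proposal is correct and follows essentially the same route as the paper: the paper also reduces the coalgebra equation to $h \equiv \fold \circ F(\later h) \circ f$, defines $h = \fix\, k.\, \fold \circ F(\lambda x.\, k \odot x) \circ f$ using the applicative operator, and derives uniqueness from the uniqueness of guarded fixed points, treating the initial algebra case symmetrically. Your write-up merely makes explicit some bookkeeping the paper leaves implicit, notably the judgemental identity $(\next\, h) \odot (-) = \later h$ that connects the fixed point unfolding to the stated equation.
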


\begin{proof}
 The equality to be satisfied by $h$ is equivalent to 
  \[
  h \equiv \fold \circ F(\later h) \circ f
 \]
 and this is satisfied by defining $h$ to be $\fix \,k . \fold \circ F(\lambda x. k\odot x) \circ f$
 using the applicative action (\ref{eq:applicative}) to apply $k: \later (A \to \nuFg)$ 
 to $x : \later A$. Uniqueness follows 
 from uniqueness of fixed points. The initial algebra property can be proved similarly. 
\end{proof}

%



\section{The finite powerset functor}
\label{sec:pfin}

In order to define finitely branching labelled transition systems, we
need to represent finite subsets of a given type. There are several
different ways to describe finite sets and finite subsets in type
theory \cite{CS10}. Recently, \citet{FGGW18} have presented an
implementation of the finite powerset functor in \HoTT\ as a higher
inductive type. Given a type $A$, they construct the
type $\Pfin A$ of finite subsets of $A$ as the free join semilattice
over $A$.  We define the type $\Pfin A$ in \CTT\ following the
pattern for HITs given by \citet{CHM18} that we
summarized in Section \ref{sec:hits}. Notice that the same definition
can be read in the extended type theory \TCTT. Given a type $A$, the
type $\Pfin A$ is introduced by the following constructors:
\[
\infer{\varnothing : \Pfin A}{}
\qquad 
\infer{\{a\} : \Pfin A}{a : A}
\qquad 
\infer{x \cup y : \Pfin A}{x, y : \Pfin A}
\]
\[
\infer{\nl\,x\,r: \Pfin A}{x : \Pfin A & r : \I} 
\quad 
\infer{\assoc\,x\,y\,z\,r: \Pfin A}{x,y,z : \Pfin A &r : \I} 
\quad
\infer{\idem \,a\,r: \Pfin A}{a : A &r : \I}  
\quad 
\infer{\com\,x\,y\,r: \Pfin A}{x,y : \Pfin A & r : \I}  
\]
\[
\infer{\trunc\,f \,r\,s: \Pfin A}{\Gamma \vdash  f : \S \to \Pfin A & r ,s: \I}
\]
with the judgemental equalities:
%
\begin{align*}
 \nl\,x\,0 & = \varnothing \cup x 
&
\nl\,x\,1 & = x
\\
\assoc\,x\,y\,z\,0 & = (x \cup y) \cup z
& 
\assoc\,x\,y\,z\,1 & = x \cup (y \cup z)
\\
\idem\,a\,0 & = \{ a \} \cup \{ a \}
&
\idem\,a\,1 & = \{ a \}
\\
\com\,x\,y\,0 & = x \cup y
&
\com\,x\,y\,1 & = y \cup x
\\
\trunc\,f\,0\,j & = f \,(\pa_0 \,j)
&
\trunc\,f\,1\,j & = f \,(\pa_1 \,j)
\\
\trunc\,f\,i\,0 & = f \,(\b_0)
&
\trunc\,f\,i\,1 & = f \,(\b_1)
\end{align*}

As for the higher constructor $\sq$ of propositional truncation given in
Section \ref{sec:hits}, the higher constructors of $\Pfin A$ are
introduced as point constructors
depending on names in $\I$. For example, the constructor $\nl$
states that the empty set $\varnothing$ is a left unit for the union
operation $\cup$. Given $x : \Pfin A$, we have that $\nl\,x$ is a path
between $\varnothing \cup x$ and $x$. 

The constructor $\trunc$ refers to the HIT $\S$ introduced in the end of Section
\ref{sec:hits}, and forces $\Pfin A$ to be a set. To see this, suppose $x,y : \Pfin A$ and 
$p,q : x\equiv y$. To show $p \equiv q$, define 
$f : \S \to \Pfin A$ by recursion on $\S$:
\[
f\,\b_0 = x \qquad 
f\,\b_1 = y \qquad
j : \I \vdash f\,(\pa_0\,j) = p\, j \qquad
j : \I \vdash f\,(\pa_1\,j) = q\, j.
\]
Then $\trunc\,f : p \equiv q$.


Following the pattern for higher inductive types in \CTT, we also
introduce a constructor $\hcomp$ which imposes a homogenous
composition structure on $\Pfin A$ and allows us to define composition
for $\Pfin A$. Assuming $\Gamma \vdash A$, the $\hcomp$ operation is
given as:
\[
\infer{\Gamma \vdash \hcomp^i_{\Pfin A} \, [\varphi \mapsto u]\,u_0 : \Pfin A \, [ \varphi
  \mapsto  u(i/1)] }{\Gamma \vdash A & \Gamma \vdash \varphi : \F & \Gamma , \varphi , i
  :  \I \vdash u : \Pfin A & \Gamma \vdash u_0 : \Pfin A \, [ \varphi
  \mapsto  u(i/0)]} 
\]
Finally, we will assume that the universe is closed under higher inductive type formers 
such as finite powersets in 
the sense that if $A : \U$ also $\Pfin A : \U$. 
Consistency of the extension
of \TCTT\ with the HITs used in this paper is justified by the 
denotational model presented in Section~\ref{sec:sem}. 

The type $\Pfin A$ comes with a rather complex induction principle,
which is similar to the one described by \citet{FGGW18}. We spell it
out for the case in which the 
type we are eliminating into is a proposition. 
Let $Q : \Pfin A \to
\Prop$. Given $e : Q\,\varnothing$, $s : \Pi a : A.\, Q \,\{ a \}$ and $u
: \Pi x\,y : \Pfin A.\,Q\,x \to Q\, y \to Q \,(x \cup y)$, then there
exists a term $g : \Pi x : \Pfin A.\, Q \,x$ such that:
\begin{align*}
g\,\varnothing & = e &
g\,\{ a \} & = s\,a &
g\,(x \cup y) & = u\,x\,y\,(g\,x)\,(g\,y)
 \end{align*}
The assumption that $Q \, x$ is a proposition means that we do
not have to specify where the higher constructors should be mapped.

%

Using this induction principle of $\Pfin A$, one can 
define a membership predicate $\in \, : A \to \Pfin A \to \Prop$:
\[
a \in \varnothing = \bot 
\qquad 
a \in \{ b \} = \| a \equiv b \| 
\qquad
a \in x \cup y = a \in x \vee a \in y.
\] 
The cases for the other constructors are dealt with in a
straightforward way using the univalence axiom.

Given $x,y : \Pfin A$, we write $x \subseteq y$ for $\Pi a : A.\, a
\in x \to a \in y$.
Using the induction principle of the finite powerset construction, it
is also possible to prove an extensionality principle for finite subsets: two
subsets $x,y : \Pfin A$ are path equal if and only if they contain the
same elements, i.e. if the type $\Pi a : A. \, a \in x \iff
a \in y$ is inhabited~\cite{FGGW18}. 

Using the non-dependent version of the induction principle, it is
possible to prove that $\Pfin$ is a functor. Given $f : A \to B$, we
write $\Pfin f : \Pfin A \to \Pfin B$ for the action of $\Pfin$ on the
function $f$. We have:
\[
\Pfin f\,\varnothing = \varnothing
\qquad
\Pfin f\,\{ a \} = \{ f\,a\}
\qquad
\Pfin f\,(x \cup y) = \Pfin f\,x \cup \Pfin f\,y
\]
The cases for the higher constructors are straightforward. For
example, $\Pfin f\,(\nl \,x\,r) = \nl\,(\Pfin f\,x)\,r$.

We conclude this section by mentioning an auxiliary result 
describing the image of  $\Pfin f$. 


\begin{lemma}
\label{lem:powset:img}
  Let $f : A \to B$ and $x : \Pfin A$. If $a \in x$ then also $f(a) \in \Pfin f\,x$. If $b : B$ is in 
  $\Pfin f\,x$, then there merely exists an $a : A$ such that $a\in x$
  and $b \equiv f(a)$.  
\end{lemma}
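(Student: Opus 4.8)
The plan is to prove both statements by the propositional induction principle for $\Pfin A$ recalled above. The key observation is that both claims, once universally quantified over the relevant element, are valued in $\Prop$: membership $\in$ takes values in $\Prop$ by definition, an implication into a proposition is a proposition, and the conclusion $\exists a : A.\,(a \in x) \times (b \equiv f(a))$ of the second statement is a propositional truncation, hence a proposition. This means I may use the simplified induction principle stated for propositional motives, in which only the point constructors $\varnothing$, $\{-\}$ and $\cup$ need to be treated, the six path constructors together with $\hcomp$ being handled automatically.

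For the first statement I take as motive $Q_1(x) \eqdef \Pi a : A.\,(a \in x) \to (f(a) \in \Pfin f\,x)$. In the $\varnothing$ case the hypothesis $a \in \varnothing$ is $\bot$, so the implication holds vacuously. In the singleton case $\{b\}$, the hypothesis is $\|a \equiv b\|$ and the goal is $f(a) \in \{f(b)\} = \|f(a) \equiv f(b)\|$; since the goal is a proposition I may strip the truncation on the hypothesis, apply $\ap_f$ to obtain $f(a) \equiv f(b)$, and reinsert the truncation. In the $\cup$ case the hypothesis unfolds to $a \in x \vee a \in y$ and the goal to $f(a) \in \Pfin f\,x \vee f(a) \in \Pfin f\,y$; the two induction hypotheses provide maps between the respective disjuncts, so the goal follows by functoriality of $\|-\|$ on sums.

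For the second statement I take $Q_2(x) \eqdef \Pi b : B.\,(b \in \Pfin f\,x) \to \exists a : A.\,(a \in x) \times (b \equiv f(a))$. The $\varnothing$ case is again vacuous since $b \in \Pfin f\,\varnothing = b \in \varnothing = \bot$. In the singleton case $\{a_0\}$ the hypothesis is $\|b \equiv f(a_0)\|$; I witness the conclusion with $a \eqdef a_0$, noting that $a_0 \in \{a_0\}$ holds via $|\refl\,a_0|$ and that $b \equiv f(a_0)$ follows by stripping the truncated hypothesis into the propositional goal. In the $\cup$ case the hypothesis $b \in \Pfin f\,x \vee b \in \Pfin f\,y$ is eliminated into the (propositional) goal; in either branch the corresponding induction hypothesis produces a merely-existing $a$ in $x$ (respectively $y$), and the inclusion $a \in x \to a \in x \cup y$, namely the left (respectively right) injection into the disjunction, lifts it to membership in $x \cup y$.

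The argument is essentially routine, and the only point requiring care — rather than a genuine obstacle — is the consistent exploitation of the fact that all motives and goals live in $\Prop$: this is precisely what justifies both the use of the simplified induction principle and the repeated elimination of propositional truncations in the singleton and union cases.
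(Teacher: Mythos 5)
Your proof is correct and follows essentially the same route as the paper: induction on $x : \Pfin A$ using the simplified (propositional) induction principle, with the empty case vacuous, the singleton case handled by stripping the truncation on $\|b \equiv f(a)\|$ (justified because the goal is a proposition) and witnessing with $\refl$, and the union case handled by eliminating the disjunction and lifting the inductive hypotheses along the inclusions into $x \cup y$. The only difference is that you spell out the first statement, which the paper dismisses as a simple verification — your treatment of it is fine.
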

\begin{proof}
The first statement can be proved by induction on $x$, and we omit the simple 
verification, focusing on the second statement, which is also proved by 
induction on $x$. 

The case of $x = \varnothing$ implies 
$\Pfin f\,x = \varnothing$ and so the assumption $b \in \Pfin f\,x$ implies absurdity.
If $x = \{a\}$, then $\Pfin f \,x = \{f(a)\}$ and so the assumption
implies $\| b \equiv f(a) \|$. 
Since we are proving a proposition, we can apply induction to the latter 
and obtain a proof $p : b \equiv f(a)$. Then $|\pair{a}{\pair{|\refl\, a|}{p}}|$ 
proves the case.

If $x = x_1 \cup x_2$, then $\Pfin f\,x = \Pfin f \,x_1 \cup \Pfin f \,x_2$ and so 
$b \in \Pfin f\,x_1 \vee b \in \Pfin f\,x_2$. By induction we may thus assume that either 
$b \in \Pfin f\,x_1$ or $b \in \Pfin f\,x_2$, the proof of two cases are symmetric. In the first 
case, by induction, there merely exists an $a$ such that $a\in x_1$
  and $b \equiv f(a)$. Since the former of these implies $a \in x$, this implies the proof of the case. 
\end{proof}

%
%

\section{Guarded Labelled transition systems}
\label{sec:lts}

In this section, we show how to represent a guarded version of 
finitely branching labelled
transition systems in \TCTT. From now on, we omit the attribute
``finitely branching'' since this will be the only kind of system we
will consider in this paper. As we will see in Section~\ref{sec:coalg},
bisimulation for final guarded coalgebras coincides with path equality,
and labelled transition systems provide an interesting test case of this.
In particular, we shall see that proving bisimilarity of processes can
be done using simple equational reasoning in combination with guarded
recursion. 
%
%

A \emph{guarded labelled transition system}, or \emph{\GLTS} for short, consists
of a type $X$ of states, a type $A$ of actions (which we will
assume small in the sense that $A: \U$) and a function
$f : X \to \Pfin (A \times \later X)$.  Given a state $x : X$, a later
state $y : \later X$ and an action $a : A$, we write
$x \stackrel{a}{\to}_f y$ for $(a, y) \in f \,x$.

\begin{example}[label=ex:lts]\label{ex}
Consider the \GLTS\ described pictorially as the following labelled directed graph:
\[
\xymatrix{
x_0 \ar@/^/[dr]^\ff \ar@/^/[dd]_\ff & \\
& x_1 \ar@/^/[ul]_\tr \ar[dl]^\ff \\
x_2 \ar@/^/[uu]^\tr \ar@(dl,dr)_\ff& 
}
\qquad \qquad
\xymatrix{
y_0 \ar[dr]^\ff& \\
& y_1 \ar@/^/[dl]^\tr \ar@(ur,dr)^\ff \\
y_2 \ar@/^/[ur]^\ff & 
}
\]
This can be implemented as the \GLTS\ with $X$ being the
inductive type with six constructors $x_0,x_1,x_2,y_0,y_1,y_2 : X$,
with $A$ being the inductive type with constructors $\ff, \tr : A$, and
\[
\begin{array}{l}
f : X \to \Pfin (A \times \later X) \\
f\,x_0 = \{(\ff,\next\, x_1)\} \cup \{(\ff,\next\,x_2)\} \\
f\,x_1 = \{(\tr,\next\, x_0)\} \cup \{(\ff,\next\,x_2)\} \\
f\,x_2 = \{(\tr,\next\, x_0)\} \cup \{(\ff,\next\,x_2)\} \\
f\,y_0 = \{(\ff,\next\, y_1)\} \\
f\,y_1 = \{(\ff,\next\, y_1)\} \cup \{(\tr,\next\,y_2)\} \\
f\,y_2 = \{(\ff,\next\, y_1)\} 
\end{array}
\]
\end{example}

One is typically interested in \emph{final semantics}, i.e. all
possible runs, or \emph{processes}, of a certain \GLTS. These are obtained by
unwinding a \GLTS\ starting from a particular state. In categorical terms,
the type of processes is given by the final 
coalgebra of the functor $\Pfin\,(A \times \later (-))$, which can be
defined in \TCTT\ as a guarded recursive type using the fixpoint operation:
\[\Proc = \fix\,\,X.\,\Pfin (A \times \latbind\tickA{\tapp X}).\] 
Note that the mapping of $X$ to $\Pfin (A \times \latbind\tickA{\tapp X})$  
has type $\later \U \to \U$ because the universe is closed under $\later$ and
finite powersets as assumed above. As was the case
for the guarded streams example of Section~\ref{sec:tctt}, the fixed point
path induces an equivalence of types witnessed in one direction by 
$\unfold : \Proc \to \Pfin(A \times \later \Proc)$, and thus 
$(\Proc,A,\unfold)$ is a \GLTS .

Given a \GLTS\ $(X,A,f)$, the process associated to a state is
defined as the map $\proc{-} : X \to \Proc$ defined
using the final coalgebra property of Proposition~\ref{prop:final:coalg}. 
%
In the case of Example \ref{ex} it is also possible to define the
processes associated to the \GLTS\ $(X,A,f)$ directly via 
mutual guarded recursion, without using the evaluation function $\proc{-}$.
We can do this since the type $X$ is inductively defined.

\begin{example}[continues=ex:lts]
Let $ps$ and $qs$ be the elements of type $\Proc \times \Proc \times
\Proc$ given as follows:
\[
\begin{array}{ll}
\multicolumn{2}{l}{ps = \fix \, zs.} \\
& \!\! (
\fold \,(\{(\ff, \later \pi_1 \,zs)\} \cup 
\{(\ff, \later \pi_2\, zs)\})  , \\ &
\fold\,(\{(\tr, \later \pi_0 \,zs)\} \cup 
\{(\ff, \later \pi_2\, zs)\})  , \\ &
\fold\,(\{(\tr, \later \pi_0 \,zs)\} \cup 
\{(\ff, \later \pi_2\, zs)\})) \\
\multicolumn{2}{l}{qs = \fix \, zs.} \\
& \!\! (
\fold\,\{(\ff, \later \pi_1 \,zs)\} , \\ &
\fold\,(\{(\ff, \later \pi_1 \,zs)\} \cup 
\{(\tr, \later \pi_2 \,zs)\}) , \\ &
\fold\,\{(\ff, \later \pi_1 \,zs)\})
\end{array}
\]
We define $p_i = \pi_i\,ps$ and $q_i = \pi_i \,qs$, for $i = 0,1,2$. Using
the fixed point operator, one can prove that the
processes $p_i$ and $\proc{x_i}$ are path equal, and
similarly the process $q_i$ and $\proc{y_i}$ are path equal. 
For this, let 
$D = \Path\,p_0\,\proc{x_0} \times
\Path\,p_1\,\proc{x_1} \times
\Path\,p_2\,\proc{x_2}$, and assume $e : \later D$. 
Then
\[
p_0  \equiv \fold\,(\{(\ff, \next \,p_1)\} \cup \{(\ff, \next\, p_2)\}) \equiv \fold\,( \{(\ff,
\next \proc{x_1})\} \cup \{(\ff, \next \proc{x_2})\}) \equiv \proc{x_0}
\]
where the middle path is obtained by first applying $\later \pi_1$ to
$e$ to produce a term in $\later (\Path\,p_1\,\proc{x_1})$, then
using the extensionality principle (\ref{eq:later:ext:next})  for $\later$
to give an element in $\Path\,(\next
\,p_1)\,(\next\,\proc{x_1})$. Analogously, one constructs a path
between $\next\,p_2$
and $\next\,\proc {x_2}$.

Similarly, one shows that $p_1$ and $p_2$ are path equal to
$\proc{x_1}$ and $\proc{x_2}$. Piecing these proofs together we 
get an inhabitant of $t : D$, and thus $\fix\, e. t$ is an element
in $D$ proving the desired path equalities.
\end{example}

\subsection{Bisimulation for \GLTS s}

The natural notion of equality for two processes in a labelled
transition system is bisimilarity.
We now state a notion of bisimilarity
for the notion of \GLTS\ used in this paper. Let 
$(X,A,f)$ be a \GLTS\ and let $R : X \to X \to \U$ be a relation. Then
$R$ is a
\emph{guarded bisimulation} iff the following type is inhabited:
\[
\begin{array}{lc}
\multicolumn{2}{l}{\isLTSBisim_f\,R =\Pi\, x \, y :X.\,R \, x \, y  \to} \\
& (\Pi\,x' : \later X.\,\Pi\,a : A.\, 
 (a,x') \in f\,x 
\to \exists \,y' : \later X.\,(a , y') \in f\,y \times \latbind\tickA
  R\,(\tapp{x'}) \,(\tapp{y'})) \, \\ & \times \\ 
& (\Pi\,y' : \later X.\,\Pi\,a : A.\,  (a,y') \in f\,y 
\to \exists \,x' : \later X.\,(a , x') \in f\,x \times
  \latbind\tickA R\,(\tapp{x'})\,(\tapp{y'})) 
\end{array}
\]

Notice that what we call guarded bisimulation is just the guarded
recursive variant of the usual notion of bisimulation for labelled
transition systems. In words, a relation $R$ is a bisimulation if
whenever two states $x$ and $y$ are related by $R$, two conditions
hold: for all transitions $x \stackrel{a}{\to}_f x'$ there exists a
transition $y \stackrel{a}{\to}_f y'$ such that the later states $x'$
and $y'$ are later related by $R$; for all transitions
$y \stackrel{a}{\to}_f y'$ there exists a transition
$x \stackrel{a}{\to}_f x'$ such that the later states $x'$ and $y'$ are
later related by $R$ . Notice the use of the existential quantifier
$\exists$ instead of $\Sigma$ here. This is necessary for the proofs
of Proposition~\ref{prop:proc:bisim} and Theorem~\ref{thm:coindprinc}
below. 


\begin{proposition} \label{prop:proc:bisim}
 Let $(X,A,f)$ be a \GLTS. Then the relation $R$ defined as 
 $R\,x\,y\eqdef \Path_\Proc \proc x \proc y$ is a bisimulation.
\end{proposition}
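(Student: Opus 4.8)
The plan is to combine the coalgebra homomorphism property of $\proc{-}$ from Proposition~\ref{prop:final:coalg} with the characterisation of the image of $\Pfin g$ in Lemma~\ref{lem:powset:img}. Taking $F = \Pfin(A \times -)$, so that $\Proc = \nuFg$, the map $\proc{-} : X \to \Proc$ is a coalgebra homomorphism, which means $\unfold\,\proc{x} \equiv \Pfin g\,(f\,x)$ for all $x : X$, where $g = \lambda(a,z).\,(a, \tabs\tickA{\proc{\tapp z}}) : A \times \later X \to A \times \later\Proc$ is the functorial action of $A \times \later(-)$ on $\proc{-}$. So I would assume $x,y : X$ together with $p : \proc x \equiv \proc y$, a witness of $R\,x\,y$; by symmetry it suffices to establish the first conjunct, so I fix $a : A$ and $x' : \later X$ with $(a,x') \in f\,x$, and note that the goal is a proposition, being an $\exists$-type.

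First I would push this transition through $\proc{-}$. By the first part of Lemma~\ref{lem:powset:img}, the element $g\,(a,x') = (a, \tabs\tickA{\proc{\tapp{x'}}})$ lies in $\Pfin g\,(f\,x)$, and transporting along $\unfold\,\proc x \equiv \Pfin g\,(f\,x)$ places it in $\unfold\,\proc x$. Since membership is valued in $\Prop$, I transport this proof forward along $\ap_\unfold\,p : \unfold\,\proc x \equiv \unfold\,\proc y$ and then along $\unfold\,\proc y \equiv \Pfin g\,(f\,y)$, obtaining $(a, \tabs\tickA{\proc{\tapp{x'}}}) \in \Pfin g\,(f\,y)$. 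The second part of Lemma~\ref{lem:powset:img} now yields, merely, some $(a',y') \in f\,y$ with $(a, \tabs\tickA{\proc{\tapp{x'}}}) \equiv g\,(a',y') = (a', \tabs\tickA{\proc{\tapp{y'}}})$ in $A \times \later\Proc$; as the goal is propositional, I may discharge this truncation and argue with an honest such witness.

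Finally I would extract the two halves of the existential from this path of pairs. Its first projection gives $a \equiv a'$, along which I transport $(a',y') \in f\,y$ to $(a,y') \in f\,y$, and its second projection is a path $\tabs\tickA{\proc{\tapp{x'}}} \equiv \tabs\tickA{\proc{\tapp{y'}}}$ in $\later\Proc$. Feeding the latter into the extensionality principle (\ref{eq:later:ext}) and reducing the tick applications by $\beta$ produces an inhabitant of $\latbind\tickA(\proc{\tapp{x'}} \equiv \proc{\tapp{y'}})$, which is exactly $\latbind\tickA R\,(\tapp{x'})\,(\tapp{y'})$; bundling it with $(a,y')$ discharges the goal. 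I expect the crux to be the middle step, namely transferring membership through $\Pfin g$ in both directions of Lemma~\ref{lem:powset:img} while keeping track that $\in$ is propositional. It is precisely the merely-existing preimage delivered by that lemma which forces the definition of bisimulation to quantify with $\exists$ rather than $\Sigma$, as flagged in the remark preceding the statement.
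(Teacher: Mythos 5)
Your proposal is correct and follows essentially the same route as the paper's own proof: both use the coalgebra homomorphism property of $\proc{-}$ together with the two parts of Lemma~\ref{lem:powset:img} to push the transition $(a,x')\in f\,x$ through the chain $\Pfin(A\times\later\proc-)(f\,x)\equiv\unfold\,\proc x\equiv\unfold\,\proc y\equiv\Pfin(A\times\later\proc-)(f\,y)$, discharge the resulting mere existence against the propositional goal, and finish with the extensionality principle (\ref{eq:later:ext}) for $\later$. The only difference is presentational: you spell out extracting $a\equiv a'$ from the first projection of the path of pairs and transporting the membership, a step the paper compresses into ``$z$ must be of the form $(a,y')$''.
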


\begin{proof}
Suppose $R\,x\,y$ and that $(a,x') \in f(x)$. We show that there merely exists
a $y'$ such that $(a,y') \in f(y)$ and $\latbind\tickA R\,(\tapp{x'})\,(\tapp{y'}))$. The
other direction is proved similarly. By Lemma~\ref{lem:powset:img} 
$(A \times \later \proc -)(a,x')$ is in the finite set
$\Pfin(A \times \later \proc -)(f(x))$. By definition of $\proc -$ as the unique 
map of coalgebras, and the assumption that $\proc x \equiv \proc y$ we get
\begin{align*}
 \Pfin(A \times \later \proc -)(f(x)) & \equiv \unfold(\proc x) \\
 & \equiv \unfold(\proc y) \\
 & \equiv \Pfin(A \times \later \proc -)(f(y)) 
\end{align*}
By Lemma~\ref{lem:powset:img} the property $(A \times \later \proc -)(a,x') \in 
\Pfin(A \times \later \proc -)(f(y))$ implies the mere existence of a $z$
such that $(A \times \later \proc -) z \equiv (A \times \later \proc -)(a,x')$ and 
$z \in f(y)$. Thus $z$ must be 
of the form $(a,y')$,
such that
\begin{align*}
 \tabs \tickA{\proc{\tapp{y'}}} & = \later \proc -(y') \\
  & \equiv \later \proc -(x') \\
  & = \tabs \tickA{\proc{\tapp{x'}}} \, .
\end{align*}
By the extensionality principle (\ref{eq:later:ext}) for $\later$ this implies 
$\latbind \tickA{(\proc{\tapp{x'}}\equiv \proc{\tapp{y'}})}$, which by definition is 
${\latbind\tickA R\,(\tapp{x'})\,(\tapp{y'}))}$ as desired.
\end{proof}

We can also define the greatest guarded bisimulation on a \GLTS\ $(X,A,f)$ by guarded
recursion. 
\[
\begin{array}{lc}
\multicolumn{2}{l}{\sim_f = \fix \,R.\,\lambda x \, y :X.} \\
& (\Pi\,x' : \later X.\,\Pi\,a : A.\, 
 (a,x') \in f\,x 
\to \exists \,y' : \later X.\,(a , y') \in f\,y \times \latbind\tickA
  \tapp R\,(\tapp{x'}) \,(\tapp{y'})) \, \\ & \times \\ 
& (\Pi\,y' : \later X.\,\Pi\,a : A.\,  (a,y') \in f\,y 
\to \exists \,x' : \later X.\,(a , x') \in f\,x \times
  \latbind\tickA{\tapp R\,(\tapp{x'})\,(\tapp{y'}))}
\end{array}
\]
This is indeed the greatest bisimulation in the sense that 
if $R\, x \, y$ holds for some $f$-bisimulation $R$, then also $x\sim_fy$ holds
as can be proved by guarded recursion. 
We call this relation \emph{guarded bisimilarity}. When considering
the \GLTS\ $(\Proc,A,\unfold)$, we write $\sim$ for the guarded
bisimilarity relation $\sim_\unfold$. Notice that the type $x \sim_f
y$ is a proposition, for all $x, y : X$ and $f : X \to \Pfin (A \times
\later X)$.

In \TCTT, the coinduction proof principle can be stated as
follows. Given a guarded bisimulation $R$ for the \GLTS\
$(\Proc, A,\unfold)$, if two processes are related by $R$ then they
are path equal, i.e. for all $p,q : \Proc$ we have an implication
$R\,p\,q \to
\Path_\Proc\,p\,q$. The coinduction proof principle is derivable from
the following theorem, which is a well-known result in the theory of
coalgebras developed in set theory \cite{Rut00}.

\begin{theorem}\label{thm:coindprinc}
  Let $(X,A,f)$ be a \GLTS.  For all $x , y : X$, the types
  $x \sim_f y$ and $\Path_\Proc \proc x \proc y$ are path equal.
\end{theorem}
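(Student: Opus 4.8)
The plan is to prove that the two types $x \sim_f y$ and $\Path_\Proc \proc x \proc y$ are path equal by first showing they are equivalent, and then invoking the univalence axiom. Since both are propositions (guarded bisimilarity is a proposition as noted in the text, and the path type in $\Proc$ is a proposition because $\Proc$ is a set — it is built from $\Pfin$, which forces sethood via the $\trunc$ constructor), it suffices by the univalence axiom for propositions to exhibit a logical equivalence $(x \sim_f y) \iff \Path_\Proc \proc x \proc y$. So the proof splits into two implications, and I would treat each direction separately.

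For the direction $\Path_\Proc \proc x \proc y \to x \sim_f y$, I would appeal to Proposition~\ref{prop:proc:bisim}, which shows that the relation $R\,x\,y \eqdef \Path_\Proc \proc x \proc y$ is a bisimulation, together with the fact that $\sim_f$ is the \emph{greatest} bisimulation. Concretely, since $R$ is an $f$-bisimulation and $R\,x\,y$ holds by hypothesis, the maximality property stated just before the theorem gives $x \sim_f y$ directly. This direction is essentially immediate once Proposition~\ref{prop:proc:bisim} is in hand.

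The harder and more interesting direction is $x \sim_f y \to \Path_\Proc \proc x \proc y$, i.e. the coinduction principle itself. I would prove this by guarded recursion: assume a term $e : \later (\Pi x\,y : X.\, x \sim_f y \to \Path_\Proc \proc x \proc y)$ as the guarded induction hypothesis, and aim to produce the conclusion at the present time step. Given $x \sim_f y$, unfolding the fixed-point definition of $\sim_f$ yields the two simulation clauses. Since I want to prove $\proc x \equiv \proc y$ and $\Proc$ is a set, it suffices (using the equivalence $\unfold : \Proc \tyeq \Pfin(A \times \later \Proc)$ and injectivity of $\fold$) to show $\unfold(\proc x) \equiv \unfold(\proc y)$, that is $\Pfin(A \times \later \proc -)(f\,x) \equiv \Pfin(A \times \later \proc -)(f\,y)$. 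By the extensionality principle for finite subsets from Section~\ref{sec:pfin}, two finite subsets are path equal exactly when they contain the same elements, so I would reduce the goal to a mutual containment of these two finite sets. The two simulation clauses of $x \sim_f y$ supply precisely the matching transitions: from a transition $x \stackrel{a}{\to}_f x'$ the first clause gives a matching $y \stackrel{a}{\to}_f y'$ with $\latbind\tickA R\,(\tapp{x'})\,(\tapp{y'})$ (where $R$ is $\sim_f$), and applying the guarded hypothesis $e$ under a tick converts the later relatedness $\latbind\tickA(\tapp{x'} \sim_f \tapp{y'})$ into $\latbind\tickA(\proc{\tapp{x'}} \equiv \proc{\tapp{y'}})$, which by the $\later$-extensionality principle~(\ref{eq:later:ext}) gives $\later\proc -(x') \equiv \later\proc -(y')$ and hence $(a, \later\proc -(x')) \equiv (a, \later\proc -(y'))$ as elements of $A \times \later \Proc$; this establishes membership in the image set computed via Lemma~\ref{lem:powset:img}. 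The symmetric clause handles the reverse containment.

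The main obstacle I anticipate is the careful bookkeeping around the propositional truncations. The simulation clauses use $\exists$ (a truncated $\Sigma$) and the membership predicate $\in$ lands in $\Prop$, so the witnesses $y'$ extracted from $x \sim_f y$ are only \emph{merely} available. To use them I must ensure that the goal being proved is itself a proposition before eliminating the truncation — which is fine here, since the ultimate goal $\proc x \equiv \proc y$ is a proposition ($\Proc$ is a set) and the intermediate element-membership statements also live in $\Prop$. Threading this correctly, and matching the truncated existentials against Lemma~\ref{lem:powset:img} (whose second statement is likewise stated up to mere existence), is where the delicacy lies; but because every type in sight is a proposition, the truncation eliminations are all permitted, and the argument goes through.
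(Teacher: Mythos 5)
Your proposal follows the paper's proof essentially step for step: univalence reduces the goal to a logical equivalence of propositions, the right-to-left direction is exactly Proposition~\ref{prop:proc:bisim} combined with maximality of $\sim_f$, and the left-to-right direction is the same guarded-recursion argument that unfolds $\sim_f$, applies the hypothesis $e$ under a tick together with $\later$-extensionality, and reduces to mutual containment of the two finite sets via Lemma~\ref{lem:powset:img} and powerset extensionality. The only cosmetic difference is that you appeal to injectivity of $\fold$ to pass from equality of the underlying finite sets to the goal, whereas the paper simply applies $\fold$ to that equality and composes with the coalgebra-map equations; both are immediate consequences of the fold/unfold equivalence.
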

\begin{proof}
By univalence, it sufficies to show that the two propositions are
logically equivalent. The right-to-left implication follows from
Proposition \ref{prop:proc:bisim} and the fact that $\sim_f$ is the
greatest guarded bisimulation on $(X,A,f)$. For the left-to-right
implication, we proceed by guarded recursion. Suppose ${e : \later
  (\Pi x,y : X. x \sim_f y \to \proc x \equiv \proc y)}$, and let $x,y :
X$ such that $x \sim_f y$ holds. We construct a proof of $\proc x \equiv \proc y$. Note 
that the type $x \sim_f y$ is equivalent to the following type:
\[
\begin{array}{c}
(\Pi\,x' : \later X.\,\Pi\,a : A.\, 
 (a,x') \in f\,x
\to \exists \,y' : \later X.\,(a , y') \in f\,y \times \latbind\tickA
  (\tapp{x'} \sim_f \tapp{y'})) \\ \times \\ 
(\Pi\,y' : \later X.\,\Pi\,a : A.\, 
 (a,y') \in f\,y
\to \exists \,x' : \later X.\,(a , x') \in f\,x \times \latbind\tickA
  (\tapp{x'} \sim_f \tapp{y'}))
\end{array}
\]
By the assumption $e$, there is an implication:
\begin{align*}
\latbind\tickA (\tapp{x'} \sim_f \tapp{y'})
& \to {\latbind\tickA (\proc{\tapp{x'}} \equiv \proc{\tapp{y'}})} 
\end{align*}
which, by extensionality for $\later$, is equal to the type:
\begin{align*}
\latbind\tickA (\tapp{x'} \sim_f \tapp{y'})
& \to \later \proc - (x') \equiv \later \proc - (y')
\end{align*}

So the first line of the unfolding of $x \sim_f y$ implies
\begin{equation}\label{eq:firstline}
(\Pi\,x' : \later X.\,\Pi\,a : A.\, (a,x') \in f\,x \to 
\exists \,y' : \later X.\,(a , y') \in f\,y \times \later \proc - (x')
\equiv \later \proc - (y')).
\end{equation}
We show that this in turn implies
\[
\Pfin(A \times \later \proc -)(f \,x) \subseteq \Pfin(A \times \later
\proc -)(f \,y).
\]
In fact, suppose given an inhabitant $h$ of type (\ref{eq:firstline})
above. Let $a : A$ and $p : \later \Proc$ such that
$(a,p) \in \Pfin(A \times \later \proc -)(f \,x)$. By
Lemma~\ref{lem:powset:img} this implies the mere existence of a $z $
such that $z \in f\,x$ and $(A \times \later \proc -)z \equiv
(a,p)$. Thus $z$ must be 
of the form
$(a,x')$ such that $p \equiv \later \proc - x'$.  By the assumption
$h$, there merely exists $y' : \later X$ such that $(a,y') \in f \,y$
and $\later \proc - x' \equiv \later \proc - y'$. By Lemma
~\ref{lem:powset:img}, this implies
$(A \times \later \proc -)(a,y') \in \Pfin(A \times \later \proc -)(f
\,y)$. Since $(A \times \later \proc -)(a,y') \equiv (a , p)$ we
conclude that $(a,p) \in \Pfin(A \times \later \proc -)(f \,y)$.

Similarly the second line of the unfolding of $x \sim_f y$ implies
$\Pfin(A \times \later \proc -)(f \,y) \subseteq \Pfin(A \times \later
\proc -)(f \,x)$.  Therefore, by extensionality for finite powersets,
$x \sim_f y$ implies
$\Pfin(A \times \later \proc -)(f \,x) \equiv \Pfin(A \times \later
\proc -)(f \,y)$, which in turn implies 
\[
  \proc x \equiv \fold \,(\Pfin(A \times \later \proc -)(f \,x)) \equiv
  \fold \,(\Pfin(A \times \later \proc -)(f \,y)) \equiv \proc y. \qedhere
\]
\end{proof}

\begin{corollary}\label{cor:coindglts}
For all $p , q : \Proc$, the types $p \sim q$ and $p \equiv q$
are path equal.
\end{corollary}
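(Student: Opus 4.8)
The plan is to derive the corollary from Theorem~\ref{thm:coindprinc} by specialising it to the \GLTS\ $(\Proc, A, \unfold)$ and then identifying the evaluation map $\proc{-}$ with the identity on $\Proc$. First I would instantiate Theorem~\ref{thm:coindprinc} with $(X,A,f) = (\Proc, A, \unfold)$. Since $\sim$ denotes $\sim_\unfold$, this immediately yields that for all $p, q : \Proc$ the types $p \sim q$ and $\Path_\Proc \proc p \proc q$ are path equal. It therefore remains only to show that $\Path_\Proc \proc p \proc q$ is path equal to $p \equiv q$, and the whole difficulty is concentrated in relating $\proc{-}$ to the identity.

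The key step is to observe that $\proc{-} : \Proc \to \Proc$ is path equal to the identity. Recall that $\proc{-}$ is, by definition, the unique (up to path equality) coalgebra morphism from $(\Proc, A, \unfold)$ into the final coalgebra, as provided by Proposition~\ref{prop:final:coalg}. But the final coalgebra here is $(\Proc, A, \unfold)$ itself, and the identity map $\id_\Proc$ is also a coalgebra morphism from this coalgebra to itself: using the functor laws one has $\Pfin(A \times \later \id_\Proc) \equiv \id$, whence $\unfold \circ \id_\Proc \equiv \Pfin(A \times \later \id_\Proc) \circ \unfold$. By the uniqueness clause of Proposition~\ref{prop:final:coalg} it follows that $\proc{-} \equiv \id_\Proc$, and hence $\proc p \equiv p$ and $\proc q \equiv q$ for all $p, q : \Proc$.

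Finally I would combine these facts. Since $\Proc \tyeq \Pfin(A \times \later \Proc)$ is a set, both $\Path_\Proc \proc p \proc q$ and $p \equiv q$ are propositions, so by univalence it suffices to prove them logically equivalent; given the paths $\proc p \equiv p$ and $\proc q \equiv q$, composition of paths turns any proof of $\Path_\Proc \proc p \proc q$ into one of $p \equiv q$ and conversely, establishing the equivalence. Composing this path equality of types with the one obtained from Theorem~\ref{thm:coindprinc} then shows that $p \sim q$ and $p \equiv q$ are path equal. I do not expect a serious obstacle here: the substantive content already resides in Theorem~\ref{thm:coindprinc}, and the only genuinely new ingredient is the identification $\proc{-} \equiv \id_\Proc$, which is a routine consequence of the uniqueness of coalgebra morphisms into the final coalgebra.
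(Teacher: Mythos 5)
Your proposal is correct and takes essentially the same route as the paper: instantiate Theorem~\ref{thm:coindprinc} at the \GLTS\ $(\Proc, A, \unfold)$ and combine it with the fact that the unique coalgebra map $\proc{-} : \Proc \to \Proc$ is path equal to the identity. The paper states that latter fact without proof; your derivation of it from the uniqueness clause of Proposition~\ref{prop:final:coalg} (the identity is itself a coalgebra morphism by the functor laws) is precisely the intended justification, and your final gluing step via univalence is a routine completion of the argument.
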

\begin{proof}
This follows from Theorem \ref{thm:coindprinc} and the fact that the
unique map of coalgebras
$\proc - : \Proc \to \Proc$ is path equal to the identity function.
\end{proof}
\begin{example}[continues=ex:lts]
  It is not difficult to see that the processes $p_0$ and $q_0$ (or
  alternatively, $\proc{x_0}$ and $\proc{y_0}$) are bisimilar, and
  therefore equal by the coinduction proof principle. But it is 
  simpler to prove them equal directly by guarded recursion.
%
%
  Note that by Proposition~\ref{prop:proc:bisim} this implies $x_0$ and $y_0$ bisimilar.
  To do this, let 
  $D = \Path\,p_0\,q_0 \times \Path\,p_1\,q_1\times
  \Path\,p_0\,q_2 \times \Path \, p_2 \, q_1$. Assuming $e : \later D$, 
  we must construct four proofs, one for each path type in $D$. We just construct a term
  $e_0 $ of type
  $\Path\,p_0\, q_0$, the other proofs are given in a similar manner. We have the following sequence of equalities:
\begin{align*}
 p_0 & \equiv
\fold\,( \{(\ff,  \next \, p_1) \} \cup  \{(\ff, \next\, p_2) \} )\\
& \equiv  \fold\,(\{(\ff,  \next \, q_1) \} \cup  \{(\ff, \next\, q_1) \} )\\
& \equiv  \fold\, \{(\ff,  \next \, q_1 )\} \\
& \equiv  q_0
\end{align*}
where the second equality follows from the assumption $\later D$ which by 
(\ref{eq:later:ext:next}) implies $\next \, p_1 \equiv  \next \, q_1$ and $\next \, p_2 \equiv  \next \, q_1$.
\end{example}

\subsection{CCS}
\label{sec:CCS}

As an extended example of a \GLTS\ we now show how to represent the syntax of 
Milner's Calculus of Communicating Systems (CCS)~\cite{milner1980calculus}. 
We consider a version with guarded recursion, i.e., processes can be defined recursively with 
the restriction that the recursive variable may only occur under an action. More precisely,
we consider the grammar
\[
P ::= 0 \mid a.P \mid P + P \mid P \| P \mid \nu a. P \mid X \mid \mu X. P
\]
with the restriction that in $\mu X . P$, the variable $X$ may only occur under an action
$a . (-)$. For example, $\mu X . a. X$ is a well formed process, but $\mu X. (P \| X)$
(which would correspond to replication $!P$ as in the $\pi$-calculus~\cite{milner1999pi}) is not. 

For simplicity, we use a De Bruijn representation of processes. Names will be simply numbers
and we will define, for each $n : \Nat$ a type $\CCS{n} : \U$ of closed CCS terms whose freely
occurring names are among the first $n$ numbers. For this we use the inductive family
$\mathsf{Fin} : \Nat \to \U$, where $\Fin n$ contains all natural
numbers strictly smaller than $n$.
The type of actions is then $\Label{n} = \Fin{n} + \Fin{n} + 1$. Following standard conventions,
we write simply $m$ for $\indisj{1}{m}$, $\bar m$ for $\indisj{2}m$ and $\tau$ for $\indisj{3}\star$. 

The inductive family of \emph{closed} CCS terms should satisfy the type equivalence
\begin{equation} \label{eq:CCS:tyeq}
\CCS n \tyeq 1 + \Label n\times \later \CCS n + \CCS n \times \CCS n
+ \CCS n \times \CCS n + \CCS{n+1} 
\end{equation}
stating that a closed CCS term can either be $0$, an action, a binary sum, a parallel composition or a 
name abstraction. The use of $\later$ in the case of actions allows for the definition of guarded recursive
processes, e.g., $\mu X . a. X$ can be represented as $\fix \, x. \indisj 2{a,x}$. In the case of 
name abstraction, the abstracted process can have one more free name than the result. 

To define $\CCSpure$, first consider 
\[
F : \later (\Nat \to \U) \to (\Nat \to U) \to \Nat \to \U 
\]
defined as
\[
F\, X \,Y \, n \eqdef 
1 + \Label n\times \latbind{\tickA}{(\tapp X(n))}  + Y(n) \times Y(n)
+ Y(n) \times Y(n) + Y(n+1) 
\]
and define 
\[
\CCSpure \eqdef \fix\, X . \mu Y. F\,X\,Y
\]
where $\mu$ refers to the inductive family. Inductive types and families are special cases of HITs, which 
the universe is closed under by assumption. 
Equation (\ref{eq:CCS:tyeq}) is then satisfied by 
unfolding the fixed point and inductive family once.

%

By definition of $\fix$, we have
\[
\CCSpure = \mu Y. F\,(\CCSpure')\,Y
\]
where $\CCSpure' = \dfix\, X . \mu Y. F\,X\,Y$. Using the path from
$\dfix\, X . \mu Y. F\,X\,Y$ to $\next(\fix\, X . \mu Y. F\,X\,Y)$ we obtain a type
equivalence
$\latbind \tickA {\tapp {\CCSpure'}}(n) \tyeq \later \CCS n$. 
Let $g : \latbind \tickA {\tapp {\CCSpure'}}(n) \to \later \CCS n$
be the map underlying this equivalence.

The set of CCS actions is recursively defined as a 
GLTS $\action : \CCS{n} \to
\Pfin (\Label{n} \times \later \CCS{n})$
\begin{align*}
\action(0) & = \varnothing \\
\action(a.P) & = \{ (a , g\,P)\} \\
\action(P_1 + P_2) & = \action(P_1) \cup \action(P_2) \\
\action(P_1 \| P_2) & = \actionL(\action(P_1), P_2) \cup
                      \actionR(P_1,\action(P_2)) \cup
                      \synch(\action(P_1),\action(P_2)) \\
\action(\nu a. P) & = \actionnu(\action(P))
\end{align*}
The auxiliary functions 
\begin{align*}
\actionL & : \Pfin(\Label{n} \times \later 
\CCS{n}) \times \CCS n \to \Pfin (\Label{n} \times \later \CCS{n}) \\
\actionR & : \CCS n \times \Pfin
(\Label{n} \times \later 
\CCS{n})\to \Pfin (\Label{n} \times \later \CCS{n}) 
\end{align*}
are given by
\begin{align*}
\actionL(P,u) &\eqdef \Pfin (\Label{n} \times \later (- \| P))(u) & 
\actionR(P,u) &\eqdef \Pfin (\Label{n} \times \later (P \| -))(u). 
\end{align*}
The auxiliary function 
\[
\synch : \Pfin (\Label{n} \times \later
\CCS{n})^2 \to \Pfin (\Label{n} \times \later \CCS{n})\]
is recursively
defined as
\begin{align*}
\synch (\varnothing,v) & = \varnothing \\
\synch (\{ (a , P) \} , \varnothing) & = \varnothing \\
\synch (\{ (m , P) \} , \{ (\bar{m} , Q) \}) & = \{ (\tau , \tabs
                                               \tickA \tapp P \| \tapp Q) \} \\
\synch (\{ (\bar{m} , P) \} , \{ (m , Q) \}) & = \{ (\tau , \tabs 
                                               \tickA \tapp P \| \tapp Q \} \\
\synch (\{ (a , P) \} , \{ (b , Q) \}) & = \varnothing \qquad \text{(if
                                         $a$ and $b$ do not fit the
                                         previous two cases)}\\
\synch (\{ (a , P) \} , v_1 \cup v_2) & = \synch (\{ (a , P) \} , v_1)
                                        \cup \synch (\{ (a , P) \} , v_2)\\ 
\synch (u_1 \cup u_2 , v) & = \synch (u_1,v) \cup \synch(u_2,v)
\end{align*}
In the definition of $\synch$ we omitted the cases for the
higher path constructors, which are straightforward.


Finally, the auxiliary function 
\[
\actionnu : \Pfin
(\Label{n + 1} \times \later \CCS{n + 1}) \to \Pfin (\Label{n} \times \later
\CCS{n})
\]
 is recursively defined as
\begin{align*}
\actionnu \varnothing & = \varnothing & \actionnu (u_1 \cup u_2) & = \actionnu u_1 \cup \actionnu u_2 \\
\actionnu \{ (n , P) \} & = \varnothing &
\actionnu \{ (\bar n  , P) \} & = \varnothing \\
\actionnu \{ (b , P) \} & = \{ (b , \tabs \tickA \, \nu a. \tapp P) \}  & \text{(if $b \neq n, \bar n$)}
\end{align*}
Again we omitted the cases for the higher path constructors.

Instantiating Theorem \ref{thm:coindprinc} with the GLTS $(\CCS n
, \Label n , \action)$, we obtain the following result.
\begin{theorem}
Let $p,q : \CCS n$, then the types $p \sim_\action q$ and $\proc p \equiv \proc
q$ are equivalent.
\end{theorem}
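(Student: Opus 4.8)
The plan is to derive this statement as a direct instantiation of Theorem~\ref{thm:coindprinc}. The first step is to observe that the data assembled in this section form a genuine \GLTS: the states are the inductive family $\CCS n$ (which lies in $\U$ by the closure assumptions on the universe), the actions are $\Label n = \Fin n + \Fin n + 1$, which is small, and the transition function is the map $\action : \CCS n \to \Pfin(\Label n \times \later \CCS n)$ defined above. The only point requiring attention is that $\action$ is genuinely well defined by recursion on the inductive family $\CCS n$, together with the auxiliary functions $\actionL$, $\actionR$, $\synch$ and $\actionnu$; in particular the cases for the higher path constructors of $\Pfin$ (omitted above) must be respected, but these are routine.

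Given the \GLTS\ $(\CCS n, \Label n, \action)$, I would then apply Theorem~\ref{thm:coindprinc} with $X \eqdef \CCS n$, $A \eqdef \Label n$, $f \eqdef \action$, and with $x \eqdef p$, $y \eqdef q$. The theorem yields that the types $p \sim_\action q$ and $\Path_\Proc\,\proc p\,\proc q$ are path equal. Since $\proc p \equiv \proc q$ is merely our notation for $\Path_\Proc\,\proc p\,\proc q$, this already produces a path in the universe between $p \sim_\action q$ and $\proc p \equiv \proc q$.

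To conclude in the form stated, I would pass from this path to an equivalence: any path between two types induces an equivalence between them, via the canonical transport map that appears in the statement of univalence. Applying this to the path obtained above gives the required equivalence between $p \sim_\action q$ and $\proc p \equiv \proc q$. There is no genuine obstacle here, as all the substantive work resides in Theorem~\ref{thm:coindprinc}; the only thing that must be checked in this proof is the well-definedness of $\action$, which is exactly what licenses the instantiation.
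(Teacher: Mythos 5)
Your proposal is correct and follows exactly the paper's route: the paper derives this theorem solely by instantiating Theorem~\ref{thm:coindprinc} with the \GLTS\ $(\CCS n, \Label n, \action)$, which is precisely your argument. Your additional remarks --- checking that $\action$ is well defined as a \GLTS\ structure and converting the resulting path between types into an equivalence via the canonical transport map --- are routine details the paper leaves implicit, not a departure from its proof.
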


\subsection{Hennessy-Milner Logic}
\label{sec:HML}

Basic properties of \GLTS s can be expressed in Hennessy-Milner 
logic~\cite{hennessy1980observing}. The grammar for propositions
\[
\phi :: = \tr \mid \ff \mid \phi \meet \phi \mid \phi \vee \phi \mid \allact a\phi \mid \existact a\phi 
\]
(where $a$ ranges over the alphabet $A$) can be encoded as an inductive type
\[
\HML \eqdef \mu X. 1 + 1 + X \times X + X \times X + A \times X + A \times X
\]
in the standard way. We shall use Hennessy-Milner logic as notation for this type, e.g.,
writing $\allact a \phi$ for $\interm 5\pair a\phi$ whenever $a : A$ and $\phi : \HML$.  
For any \GLTS\ $(X, A, f)$ we define the satisfiability relation $\sat{}{} : X \to \HML \to \Prop$
by recursion on the second argument:
\begin{align*}
 \sat x\tr & \iff 1 \\ \sat x\ff & \iff 0 \\
  \sat x{\phi \meet \psi} & \iff (\sat x\phi \times \sat x \psi) \\
  \sat x{\phi \vee \psi} & \iff (\sat x\phi \vee \sat x \psi) \\
 \sat x{\allact a\psi} & \iff \Pi x': \later X . (a,x') \in f(x) \imp \latbind\tickA{(\sat{\tapp{x'})}{\phi}} \\
 \sat x{\existact a\psi} & \iff \exists x': \later X . (a,x') \in f(x) \meet \latbind\tickA{(\sat{\tapp{x'})}{\phi}} 
\end{align*}
Since bisimilarity for $\Proc$ coincides with equality, it is obvious
that two bisimilar processes will satisfy the same propositions from
Hennessy-Milner logic. It is a classical
result~\cite{hennessy1980observing}, that Hennessy-Milner logic is
also complete in the sense that any two processes satisfying the same
propositions are also bisimilar. The argument uses classical logic,
and it is unlikely that it can be reproduced in type theory. Still,
one can use propositions to distinguish between processes.

\begin{example}\label{ex:HML}
Suppose $A$ has elements $a,b,c$  such that $\neg (b\equiv c)$ and 
consider the two processes expressed in CCS terms as
\[ 
p = a . (b + c) \qquad q = a.b + a.c
\]
and encoded as elements of $\Proc$ as
\begin{align*}
p & = \{\pair a{\next (\{\pair b{\next \,\varnothing}\} \cup \{ \pair c{\next \,\varnothing}\})}\} \\
q & = (\{\pair a{\next\{\pair b{\next \,\varnothing}\}} \} \cup
    \{\pair a{\next\{\pair c{\next \,\varnothing}\}}\})   
\end{align*}
In the definition above we omitted application
of the function $\fold$ to improve readability.
It is a classical result that
these are not bisimilar, but in the guarded setting, the type $p \equiv q$ is not false, but logically 
equivalent to $\later 0$, which can be proved as follows. Suppose $p \equiv q$, then
since $p$ satisfies $\allact a {\existact b \tr}$, so does $q$. Since 
$\pair a{\next(\{\pair c{\next (\varnothing)}\}} \in \unfold (q)$ this implies
\[
\later(\exists x' : \later X . (b,x') \in \{\pair c{\next (\varnothing)}\})
\]
Reasoning under the $\later$, this implies $\later (\exists x' : \later X . b \equiv c)$ which implies $\later 0$. 

In the opposite direction, we must prove that $\later 0$ implies $p \equiv q$. By the extensionality
principle (\ref{eq:later:ext}) for $\later$, the proposition $\later 0$ implies $\next (x) \equiv \next (y)$ for all $x,y : X$,
and thus $p \equiv \{\pair a{\next (\varnothing)}\} \equiv q$.
\end{example}

\section{Guarded coalgebras}
\label{sec:coalg}

We now move from finitely branching labelled transition systems to
more general kind of systems. These are specified by a guarded
recursive version of coalgebras, that we call guarded coalgebras. We
define bisimulation for these systems and prove a coinduction proof
principle: the greatest bisimulation is equivalent to path equality.

Let $F : \U \to \U$ be a functor. A \emph{guarded coalgebra} for $F$
is a small type $X$ together with a function $f : X \to F (\later X)$.
Analogously, we could say that a guarded coalgebra is a coalgebra for
the composed functor $F \circ \later$.  Proposition~\ref{prop:final:coalg}
states that the fixed point  $\nuFg = \fix \,X. \, F (\latbind\tickA{\tapp X})$
is a final guarded coalgebra for $F$. 
%
%

We now move to the representation of bisimulations. In the literature
there exist several different notions of coalgebraic bisimulation
\cite{Sta11}. Here we consider the variant introduced by 
\citet{HJ98},
which relies on the notion of relation lifting \cite{KV16}.
We adapt this notion to type theory in a way that is not directly 
a generalisation of $\isLTSBisim$ as used in the previous section.
Rather, we will show in Section~\ref{sec:equiv} that 
the latter is a propositionally truncated version of the notion defined here.
The truncated version is convenient for \GLTS s because of the 
set-truncation used in the powerset functor, but truncating the general
notion would falsify the equivalence of bisimilarity of paths as 
stated in Corollary~\ref{cor:coindprinc} below.

\subsection{Relation lifting}


Given a (proof-relevant) relation $R : X \to Y \to \U$, the
\emph{relation lifting} $\rest{F}R : FX \to FY \to \U$ of $F$ on $R$
is defined as
\[
\rest{F}R\,x\,y \, = \, \Sigma t : F(\tot\,R).\, \Path_{FX} \,(F\pi_0\,t)
\, x \times \Path_{FY} \, (F\pi_1\,t) \, y
\]
for $x : FX$ and $y : FY$. 
Here $\tot\ R$ is the graph of $R$, i.e., 
$\tot\,R = \Sigma x : X.\,\Sigma y : Y.\, R \, x \, y$, and
$\pi_0$ and $\pi_1$ refer to the projections out of the dependent product. 



We first show that the relation lifting of $F$ applied to the identity relation
$\Path_X$ is path equal to
$\Path_{FX}$. A proof of this fact in the classical set theoretic setting can be 
found in \cite{Jac16}. Here we adapt the proof to type theory and prove it in 
the general setting of types that are not necessarily sets. 

\begin{proposition}
\label{prop:functorid}
For all $x,y : FX$, the types $\rest{F} (\Path_X) \,x \,y$ and
$\Path_{FX}\,x\,y$ are path equal.
\end{proposition}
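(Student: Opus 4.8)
The plan is to reduce the relation lifting at the identity relation to a based path space by composing a chain of equivalences, and then promote the resulting equivalence to a path by univalence. This is the homotopy-type-theoretic rendering of the classical argument cited from~\cite{Jac16}, carried out without any set-truncation assumption.

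First I would observe that the total space $\tot\,(\Path_X) = \Sigma x : X.\,\Sigma y : X.\,\Path_X\,x\,y$ is equivalent to $X$. Indeed, the map $e : X \to \tot\,(\Path_X)$ sending $x$ to $\pair x{\pair x{\refl\,x}}$ is an equivalence, because for each $x$ the inner type $\Sigma y : X.\,\Path_X\,x\,y$ is the based path space and hence contractible. Its composites with the two projections satisfy $\pi_0 \circ e = \id_X$ and $\pi_1 \circ e = \id_X$ judgementally. I would then push this through $F$: since $F$ is a functor, the functor laws give $Fe \circ F(e^{-1}) \equiv F(e \circ e^{-1}) \equiv F\,\id \equiv \id$ and symmetrically, so $Fe : FX \to F(\tot\,(\Path_X))$ is an equivalence; moreover the laws yield paths $F\pi_0 \circ Fe \equiv F(\pi_0 \circ e) = F\,\id \equiv \id_{FX}$ and likewise $F\pi_1 \circ Fe \equiv \id_{FX}$.

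Applying Lemma~\ref{lem:sigmaequiv} with the equivalence $Fe$ and the family $Y\,t = \Path_{FX}\,(F\pi_0\,t)\,x \times \Path_{FX}\,(F\pi_1\,t)\,y$ reindexes the $\Sigma$ defining the lifting, giving an equivalence
\[
\rest{F}(\Path_X)\,x\,y \tyeq \Sigma w : FX.\, \Path_{FX}\,(F\pi_0\,(Fe\,w))\,x \times \Path_{FX}\,(F\pi_1\,(Fe\,w))\,y.
\]
Using the paths $F\pi_0\,(Fe\,w) \equiv w$ and $F\pi_1\,(Fe\,w) \equiv w$ established above, together with Lemma~\ref{lem:eqid} to replace the moving endpoints of the two path types, this is in turn equivalent to
\[
\Sigma w : FX.\, \Path_{FX}\,w\,x \times \Path_{FX}\,w\,y.
\]
Finally I would contract the based path space: the type $\Sigma w : FX.\,\Path_{FX}\,w\,x$ is contractible with centre $\pair x{\refl\,x}$, so the displayed $\Sigma$-type — reassociated as $\Sigma s : (\Sigma w : FX.\,\Path_{FX}\,w\,x).\,\Path_{FX}\,(\pi_0\,s)\,y$ — is equivalent to the fibre over the centre, namely $\Path_{FX}\,x\,y$. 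Composing all these equivalences and invoking univalence yields the claimed path equality.

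The main obstacle is the step involving $F$: because the functor laws hold only up to path equality, I must invoke them explicitly both to conclude that $Fe$ is an equivalence and to obtain the two identifications $F\pi_i\,(Fe\,w) \equiv w$ that drive the reindexing, rather than treating $F$ as strictly functorial. Once these are in place, the remaining steps are the standard facts about contractibility of singletons and reindexing of $\Sigma$-types, and no hypothesis that $X$ or $FX$ be a set is needed.
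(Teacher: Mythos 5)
Your proof is correct and takes essentially the same route as the paper's: both reduce $\rest{F}(\Path_X)\,x\,y$ to $\Sigma t : FX.\,\Path_{FX}\,t\,x \times \Path_{FX}\,t\,y$ by transporting the equivalence between $X$ and $\tot\,(\Path_X)$ through $F$, using Lemma~\ref{lem:sigmaequiv} for the reindexing and Lemma~\ref{lem:eqid} for the endpoint adjustments, and then contract the resulting type to $\Path_{FX}\,x\,y$ and conclude by univalence. The only difference is cosmetic bookkeeping in the middle step: the paper first unifies the two projections via the cubical path $\lambda i.\,F(\lambda s.\,\pi_2\,s\,i)\,t$ and reindexes along $F\pi_0$, whereas you reindex along $Fe$ and derive both identifications $F\pi_0 \circ Fe \equiv \id$ and $F\pi_1 \circ Fe \equiv \id$ from the functor laws.
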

\begin{proof}
We have the following sequence of equalities:
\begin{align}
\rest{F} (\Path_X)\,x\,y  & = \Sigma t : F(\tot\,(\Path_X)).\, \Path_{FX}(F\pi_0\,t) \,x \times \Path_{FX}(F\pi_1\,t) \,y 
\nonumber \\
& \equiv \Sigma t : F(\tot\,(\Path_X)).\, \Path_{FX}(F\pi_0\,t) \,x \times \Path_{FX}(F\pi_0\,t) \,y 
\label{eq:pathprop:1}\\
& \equiv \Sigma t : FX.\, \Path_{FX}\, t \,x \times \Path_{FX} \,t \,y \label{eq:pathprop:2}\\
& \equiv \Path_{FX}\,x\,y \label{eq:pathprop:3}
\end{align}

Equality (\ref{eq:pathprop:1}) follows from the fact that the types
  $\Path_{FX}(F\pi_1\,t) \,y$ and $\Path_{FX}(F\pi_0\,t) \,y$ are path
  equal. This in turns
  follows from Lemma \ref{lem:eqid} and the existence of a path:
 \[ 
   \lambda i .\,F (\lambda s. \,\pi_2 \,s \, i) \,t :
  \Path_{FX}\,(F\pi_0 \, t) \,(F\pi_1 \,t). 
\]

Equality (\ref{eq:pathprop:2}) follows from the univalence axiom and Lemma
  \ref{lem:sigmaequiv} instantiated with the function
  $f : F(\tot\,(\Path_X)) \to F X$, $f = F \pi_0$.  The map $f$ is an
  equivalence since $\pi_0 :
  \tot\,(\Path_X) \to X$ is an equivalence and functors preserve equivalences. The map
  $\pi_0$ is an equivalence with inverse
  $h = \lambda x. \, (x , x , \refl\,x) : X \to \tot\,(\Path_X)$.


Equality (\ref{eq:pathprop:3}) follows from the univalence axiom and the
  existence of an equivalence
\[ 
\lambda p.\, (x ,  \refl\,x  ,  p) : \Path_{FX}\,x\,y
\to \Sigma t : FX.\, \Path_{FX}\, t \,x \times \Path_{FX} \,t \,y 
\]
with inverse 
\[ 
\lambda (t , p , q).  p^{-1} ; q :
\Sigma t : FX.\, \Path_{FX}\, t \,x \times \Path_{FX} \,t \,y \to  \Path_{FX}\,x\,y \qedhere
\]
\end{proof}

We next show that the mapping of $R$ to $\rest{(F\circ\later)}(R)$ factors through next,
a property that allows for the notion of guarded bisimilarity to be defined below as a guarded 
fixed point. 
For this, given $R : \later (X \to Y \to \U)$, write
$\laterRel{R} : \later X \to \later Y \to U$ for the relation
$\laterRel{R} x\,y = \latbind\tickA{\tapp R\,(\tapp x)\,(\tapp y)}$. Note that
the extensionality principle (\ref{eq:later:ext}) for the type former $\later$ 
can be expressed by saying that the type
$\laterRel{(\next\,\Path_X)}$ is path equal to $\Path_{\later X}$.

\begin{lemma}\label{lem:rel:lift:contractive}
 For any functor $F$, relation $R : X \to Y \to \U$, and elements $x : F(\later X)$ and $y : F(\later Y)$, 
 the types $\rest{(F\circ\later)}(R)\,x\,y$ and
$\rest F(\laterRel{(\next\,R)})\,x\,y$ are equivalent.
\end{lemma}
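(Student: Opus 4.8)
The plan is to reduce the statement to an equivalence of the ``total spaces'' that appear as the first component of the two relation liftings, and then transport the path conditions using Lemma~\ref{lem:sigmaequiv}. First I would unfold the definitions. The left-hand side $\rest{(F\circ\later)}(R)\,x\,y$ is $\Sigma t : F(\later(\tot\,R)).\, \Path_{F(\later X)}(F(\later\pi_0)\,t)\,x \times \Path_{F(\later Y)}(F(\later\pi_1)\,t)\,y$, whereas $\rest{F}(\laterRel{(\next\,R)})\,x\,y$ is $\Sigma t : F(\tot\,(\laterRel{(\next\,R)})).\, \Path_{F(\later X)}(F\pi_0\,t)\,x \times \Path_{F(\later Y)}(F\pi_1\,t)\,y$, where in the second line $\pi_0,\pi_1$ are the projections out of $\tot\,(\laterRel{(\next\,R)}) = \Sigma u : \later X.\,\Sigma v : \later Y.\,\latbind\tickA R\,(\tapp u)\,(\tapp v)$, using that $(\next\,R)[\tickA] = R$. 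So the two types differ only in the type over which $t$ ranges and in which projections are applied.

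The key step is to establish that $\later$ commutes with $\Sigma$, giving an equivalence $\phi : \later(\tot\,R) \to \tot\,(\laterRel{(\next\,R)})$. I would define $\phi\,p = (\tabs\tickA{\pi_0\,(\tapp p)},\, \tabs\tickA{\pi_1\,(\tapp p)},\, \tabs\tickA{\pi_2\,(\tapp p)})$, writing $\pi_2$ for the $R$-component of $\tot\,R$, with inverse $\psi\,(u,v,r) = \tabs\tickA{(\tapp u, \tapp v, \tapp r)}$. The third component of $\phi\,p$ typechecks using the $\beta$ rule for ticks, and $\phi,\psi$ are mutually inverse up to path using the $\beta$ and $\eta$ rules for ticks together with surjective pairing for $\Sigma$. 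The crucial extra observation is that $\phi$ commutes \emph{judgementally} with the projections, i.e.\ $\pi_0\circ\phi = \later\pi_0$ and $\pi_1\circ\phi = \later\pi_1$, which is immediate from the definition of the functorial action $\later f = \lambda z.\,\tabs\tickA{f(\tapp z)}$.

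Given this, I would apply $F$: since functors preserve equivalences (as already used in Proposition~\ref{prop:functorid}), $F\phi : F(\later(\tot\,R)) \to F(\tot\,(\laterRel{(\next\,R)}))$ is an equivalence. Using the functor laws one obtains $F\pi_0 \circ F\phi \equiv F(\pi_0\circ\phi) = F(\later\pi_0)$, and likewise for $\pi_1$; note these identities hold only up to path, since the functor laws do. Hence the family $t \mapsto \Path_{F(\later X)}(F\pi_0(F\phi\,t))\,x \times \Path_{F(\later Y)}(F\pi_1(F\phi\,t))\,y$ is pointwise path-equal to the family defining $\rest{(F\circ\later)}(R)\,x\,y$, so the two $\Sigma$-types are path equal. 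The resulting type is exactly of the shape required by Lemma~\ref{lem:sigmaequiv}, taking $f = F\phi$ and $Y\,t = \Path_{F(\later X)}(F\pi_0\,t)\,x \times \Path_{F(\later Y)}(F\pi_1\,t)\,y$; that lemma then yields the desired equivalence with $\rest{F}(\laterRel{(\next\,R)})\,x\,y$.

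The main obstacle is the $\later$/$\Sigma$ commutation together with its strict compatibility with projections: this is what makes the two relation liftings comparable, and it is the analogue for $\Sigma$ of the extensionality principle~(\ref{eq:later:ext}) for $\later$ and path types. The only further care needed is the bookkeeping around the functor laws, which force the identification $F\pi_i\circ F\phi \equiv F(\later\pi_i)$ to be made up to path rather than judgementally, requiring the small congruence/transport step on the $\Sigma$-family before Lemma~\ref{lem:sigmaequiv} applies.
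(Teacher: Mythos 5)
Your proof is correct and follows essentially the same route as the paper: establish the equivalence $\later(\tot\,R) \tyeq \tot\,(\laterRel{(\next\,R)})$ by the same pair of mutually inverse maps, apply $F$ (functors preserve equivalences), and conclude via Lemma~\ref{lem:sigmaequiv}. The only difference is that you make explicit the congruence step forced by the functor laws holding merely up to path, a detail the paper's terse proof leaves implicit.
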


\begin{proof}
 First note that 
 \[
 \later (\tot\,R) \tyeq \Sigma u: \later X. \Sigma v: \later Y . \latbind\tickA{R(\tapp u, \tapp v)}
 \]
 by the maps mapping $t : \later (\tot\,R)$ to $(\later (\pi_0)(t), \later (\pi_1)(t), \later (\pi_2)(t))$ and 
 \[(u,v,p) : \Sigma u: \later X. \Sigma v: \later Y . \latbind\tickA{R(\tapp u, \tapp v)}\] to 
 $\tabs\tickA{(\tapp u, \tapp v, \tapp p)}$. Using this and Lemma \ref{lem:sigmaequiv}, we get 
\begin{align*}
 \rest{(F\circ\later)}(R)\,x\,y & =  
 \Sigma t : F(\later (\tot\,R)).\, \Path_{F(\later X)} \,(F(\later(\pi_0))\,t)\, x \times \Path_{F(\later Y)} \, (F(\later(\pi_1))\,t) \, y\\
 & \tyeq 
 \Sigma t : F(\tot\,\laterRel{(\next\,R)}).\, \Path_{F(\later X)} \,(F(\pi_0)\,t)\, x \times \Path_{F(\later Y)} \, (F(\pi_1)\,t) \, y \\
 & = \rest F(\laterRel{(\next\,R)})\,x\,y \qedhere
\end{align*}
\end{proof}

\subsection{Bisimulation for guarded coalgebras}
\label{sec:bisim}

We now implement the notion of coalgebraic bisimulation of 
\citet{HJ98}. 
For a given guarded coalgebra $f : X \to F(\later X)$, we call a
relation $R : X \to X \to \U$ a \emph{guarded bisimulation} iff the
following type is inhabited:  
\[
\isBisim{F}_f\,R = \Pi\, x\, y :X.\,R \, x \, y \to \rest{(F\circ \later)}(R)\,(f\,x)\,(f\,y)
\]

Given a guarded coalgebra $f : X \to
F (\later X)$, \emph{guarded bisimilarity} for $X$ is the 
greatest bisimulation on $X$ defined as follows: 
\[\sim_f \, = \, \fix\,R.\,\lambda\, x\,y. \,\rest{F} \laterRel{R}
(f\,x)\,(f\,y)\]
which by Lemma~\ref{lem:rel:lift:contractive} satisfies
\[
x\sim_fy \tyeq \rest{(F\circ \later)}(\sim_f)(f\, x)(f\, y)
\]
In particular, $\sim_f$ is a guarded bisimulation.

We now show that guarded bisimilarity for the final coalgebra $\nuFg$ is
equivalent to path equality. This follows from the following 
more general proposition. 
%
%
%
%
%
Recall~\cite[Ch.~4.6]{hott} that a function $f : X \to Y$ is an \emph{embedding},
if the map $\lambda p. \, \lambda i. \,f\,(p\,i)  : \Path_X \,x\,y \to \Path_{Y}
\,(f\,x)\,(f\,y)$ is an equivalence for all $x,y : X$. 
If both $X$ and $Y$ are sets, then the type of embeddings between $X$
and $Y$ is equivalent to the type of injections between those two sets.

\begin{proposition}
\label{prop:bisimeq}
If $f : X \to F (\later X)$ is an embedding, then the type families
$\sim_f$ and $\Path_X$ are path equal as elements of $X \to X \to
\U$.
\end{proposition}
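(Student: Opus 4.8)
The plan is to establish the path equality $\sim_f \equiv \Path_X$ in $X \to X \to \U$ directly by guarded recursion, reducing everything to a chain of equivalences in which the only genuinely new ingredient is the assumption that $f$ is an embedding. Concretely, it suffices to produce a term of type $\Path_{X \to X \to \U}\,\sim_f\,\Path_X$ from a hypothesis $e : \later(\Path_{X \to X \to \U}\,\sim_f\,\Path_X)$, since $\fix\,e.\,(-)$ then yields the desired path. By function extensionality (applied twice) and univalence, to build this term it is enough to exhibit, for arbitrary $x,y : X$, an equivalence $x \sim_f y \tyeq \Path_X\,x\,y$.

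For the chain of equivalences, I would start from the defining equivalence $x \sim_f y \tyeq \rest{(F\circ\later)}(\sim_f)(f\,x)(f\,y)$ recorded just before the proposition, and apply Lemma~\ref{lem:rel:lift:contractive} with $R = \sim_f$ to obtain
\[
x \sim_f y \tyeq \rest{(F\circ\later)}(\sim_f)(f\,x)(f\,y) \tyeq \rest{F}(\laterRel{(\next\,\sim_f)})(f\,x)(f\,y).
\]
The guarded hypothesis $e$ is then used to replace $\sim_f$ by $\Path_X$ underneath the $\later$, giving
\[
\rest{F}(\laterRel{(\next\,\sim_f)})(f\,x)(f\,y) \tyeq \rest{F}(\laterRel{(\next\,\Path_X)})(f\,x)(f\,y).
\]
Now the later-extensionality principle~(\ref{eq:later:ext}), which as noted says exactly that $\laterRel{(\next\,\Path_X)}$ is path equal to $\Path_{\later X}$, together with Proposition~\ref{prop:functorid} instantiated at the type $\later X$, give
\[
\rest{F}(\laterRel{(\next\,\Path_X)})(f\,x)(f\,y) \equiv \rest{F}(\Path_{\later X})(f\,x)(f\,y) \equiv \Path_{F(\later X)}(f\,x)(f\,y).
\]
Finally, the assumption that $f$ is an embedding supplies the equivalence $\Path_{F(\later X)}(f\,x)(f\,y) \tyeq \Path_X\,x\,y$, and composing all of these equivalences closes the argument.

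The step I expect to be the main obstacle is the transport that uses $e$. From $e : \later(\sim_f \equiv \Path_X)$ one obtains, under a tick $\tickA$, a path $\sim_f\,(\tapp x)\,(\tapp y) \equiv \Path_X\,(\tapp x)\,(\tapp y)$ in $\U$ for each $x,y : \later X$; this must be promoted to an honest path between the relations $\laterRel{(\next\,\sim_f)}$ and $\laterRel{(\next\,\Path_X)}$, i.e. one must show that $\laterRel{(\next\,(-))}$ respects path equality of its relation argument. I would do this by applying the later-extensionality principle at the universe $\U$ to turn the tick-indexed family of paths into a single path of $\later$-types, and then transporting through the relation-lifting construction $\rest{F}$, using that $\rest{F}$ respects path equality of relations (via univalence and functoriality of $F$). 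This combination of ticks, univalence and functoriality is the one delicate point; every other link in the chain is a routine composition of previously established equivalences.
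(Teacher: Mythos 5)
Your proposal is correct and follows essentially the same route as the paper's proof: guarded recursion on a path $\later(\Path_{X \to X \to \U}\,(\sim_f)\,(\Path_X))$, function extensionality and univalence to reduce to a pointwise equivalence, the fixed-point unfolding of $\sim_f$ together with Lemma~\ref{lem:rel:lift:contractive}, replacement of $\sim_f$ by $\Path_X$ under the $\later$ via the guarded hypothesis and the extensionality principle~(\ref{eq:later:ext:next}), then Proposition~\ref{prop:functorid} at $\later X$, and finally the embedding assumption. The "delicate point" you flag is handled in the paper exactly as you suggest, by combining the guarded recursive assumption with later-extensionality to get a path between the lifted relations.
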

\begin{proof}
  The proof proceeds by guarded recursion. Suppose given 
  $p : \later (\Path_{X \to X \to \U} \, (\sim_f) \,
  (\Path_X))$. By function extensionality, it is sufficient to show
  that the types $x \sim_f y$ and $\Path_X\, x\,y$ are path equal for
  all $x,y :X$. We have the following sequence of equalities:
\begin{align}
x \sim_f y & \equiv  \rest{F} \laterRel{(\next \sim_f)} (f\,x)\,(f\,y)  \nonumber \\
& \equiv \rest{F} \laterRel{(\next \, (\Path_X))} (f\,x)\,(f\,y)   \label{eq:IHapp} \\
& \equiv  \rest{F} (\Path_{\later X}) (f\,x)\,(f\,y) \nonumber \\
& \equiv \Path_{F (\later X)} (f\,x)\,(f\,y) \label{eq:prop2}  \\
& \equiv  \Path_X \,x\,y  \nonumber 
 \end{align}
%
%
%
Equality (\ref{eq:IHapp}) holds since the type families $\laterRel{(\next \sim_f)}$ and
$\laterRel{(\next \, (\Path_X))}$ are path equal by the guarded recursive assumption
$p$ and (\ref{eq:later:ext:next}). 
%
Equality (\ref{eq:prop2}) follows from Proposition~\ref{prop:functorid}.
%
%
%
%
\end{proof}

As a corollary, we obtain an extensionality principle for guarded
recursive types. In particular, this implies the coinduction proof
principle for the functor $F$.
\begin{corollary}\label{cor:coindprinc}
For all $x,y : \nuFg$, the types $x\sim_\unfold y$ and $\Path_{\nuFg}
\,x\,y$ are path equal.
\end{corollary}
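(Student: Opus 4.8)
The plan is to obtain the corollary as a direct instance of Proposition~\ref{prop:bisimeq}, taking $X = \nuFg$ and $f = \unfold$. The entire content of the argument is the verification that the structure map $\unfold : \nuFg \to F(\later \nuFg)$ is an embedding, after which the proposition applies verbatim.

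First I would recall that $\nuFg$ is defined as a fixed point, so the fixed point path yields a path in $\Path_\U\,\nuFg\,(F(\later \nuFg))$ at the universe. By univalence this path is an equivalence of types, and it is precisely the equivalence witnessed in the two directions by $\fold$ and $\unfold$ (as already noted when $(\Proc, A, \unfold)$ was exhibited as a \GLTS). In particular $\unfold$ is an equivalence.

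Next I would invoke the standard fact from homotopy type theory that every equivalence is an embedding: if $\unfold$ is an equivalence then the induced map $\ap_\unfold : \Path_{\nuFg}\,x\,y \to \Path_{F(\later \nuFg)}\,(\unfold\,x)\,(\unfold\,y)$ is itself an equivalence for all $x,y$, which is exactly the definition of embedding recalled just before the proposition. Hence the hypothesis of Proposition~\ref{prop:bisimeq} is met, and we obtain a path $P : \Path_{\nuFg \to \nuFg \to \U}\,(\sim_\unfold)\,(\Path_{\nuFg})$. Evaluating this path of functions pointwise gives $\lambda i.\,P\,i\,x\,y : \Path_\U\,(x \sim_\unfold y)\,(\Path_{\nuFg}\,x\,y)$ for all $x,y : \nuFg$, which is the claimed path equality.

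There is essentially no obstacle here: the corollary is a packaging of Proposition~\ref{prop:bisimeq}, and the only genuine observation is that $\unfold$, being one half of the type equivalence $\nuFg \tyeq F(\later \nuFg)$ supplied by the fixed point, is an equivalence and therefore an embedding. The sole remaining bookkeeping is the passage from the function-level path produced by the proposition to the pointwise formulation of the corollary, which is just application of a path of functions at the arguments $x$ and $y$.
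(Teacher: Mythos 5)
Your proposal is correct and matches the paper's own proof exactly: the paper also derives the corollary from Proposition~\ref{prop:bisimeq} by observing that $\unfold$ is an equivalence (via the fixed point path and univalence) and hence an embedding. The extra bookkeeping you spell out (pointwise evaluation of the function-level path) is fine and implicit in the paper.
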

\begin{proof}
This follows from Proposition \ref{prop:bisimeq} and the fact that
$\unfold$ is an equivalence of types, so in particular it is an
embedding. 
\end{proof}

\section{Equivalence of bisimulations for \GLTS}
\label{sec:equiv}

In this section we show that the concrete notion of guarded
bisimulation $\isLTSBisim$ given in Section \ref{sec:lts} and a
truncated variant of the
notion of coalgebraic guarded bisimulation $\isBisim{F}$ given in
Section \ref{sec:bisim}, for the functor
$F = \Pfin (A \times -)$, are equivalent. 
This is a reformulation to \TCTT\ of a well-known set theoretic result~\cite{Rut00}. 

First, notice that the types $\isLTSBisim_f\,R$ and
$\isBisim{\Pfin(A \times -)}_f\,R$ are generally not equivalent. In
fact, $\isLTSBisim_f\,R$ is always a proposition, but that is not 
always the case for the type ${\isBisim{\Pfin(A \times -)}_f\,R}$. To see this, 
let
%
$X$ be the inductive type with two distinct elements, $x,y : X$, $A$ the unit type 
and $R$ to be the always true
relation. In this case the type 
$\Pfin(A \times \later X)$ is equivalent to
$\Pfin (\later X)$ and we can consider the coalgebra $f$ mapping 
both $x$ and $y$ to $\{\next (x), \next (y)\}$. The type 
$\isBisim{\Pfin(A \times -)}_f\,R$
is then equivalent to 
\[
\Pi\, z\, w :X.\,\Sigma t : \Pfin(\later (X\times X)). \, \Path_{\Pfin(\later X)}(\Pfin(\later \pi_0)(t))(f\,z)\,\times \Path_{\Pfin(\later X)}(\Pfin(\later \pi_1)(t))(f\,w)
\]
and this type can be inhabited by the two distinct elements $c_1, c_2$ defined as  
%
\begin{align*}
c_1\,z\,w & = (t_1,\refl \, \{\next\,x,\next\,y\}, \refl \,
\{\next\,x,\next\,y\}) \\
c_2\,z\,w & = (t_2,\refl \, \{\next\,x,\next\,y\}, \refl \,
\{\next\,x,\next\,y\}) 
\end{align*}
where $t_1 = \{ \next(x,x),\next(y,y)\}$ and $t_2 = \{ \next(x,y),\next(y,x)\}$

Notice though that despite the fact that $\isLTSBisim_f$ and
$\isBisim{\Pfin(A \times -)}_f$ are generally different, the
corresponding bisimilarity relations for
$f = \unfold : \Proc \to \Pfin (A \times \later \Proc)$ are equal by
Corollary \ref{cor:coindglts} and Corollary \ref{cor:coindprinc}.

In this section, we will prove that the type $\isLTSBisim_f\,R$ is
equivalent to a truncated variant of $\isBisim{\Pfin(A \times -)}_f\,R$. 
For a given functor $F$ and a guarded coalgebra $f : X \to F(\later X)$, we call a
relation $R : X \to X \to \U$ a \emph{guarded truncated bisimulation} iff the
following type is inhabited:  
\[
\isTrBisim{F}_f\,R = \Pi\, x\, y :X.\,R \, x \, y \to 
\exists t : F(\later(\tot\,R)).\, \Path \,(F(\later\pi_0)\,t)
\, (f \,x) \times \Path\, (F(\later \pi_1)\,t) \, (f\,y)
\]
This differs from $\isBisim{F}_f$ just by replacing the $\Sigma$ by an existential 
quantifier. 

%
%

We can also introduce a notion of
\emph{guarded truncated bisimilarity} as follows: 
\[\simTr_f \, = \, \fix\,R.\,\lambda\, x\,y. \, 
\exists t : F(\tot\,\laterRel{R}).\, \Path \,(F\pi_0\,t)
\, (f \,x) \times \Path\, (F\pi_1\,t) \, (f\,y)
\]
When $F\,X$ is a set for all types $X$,  the two notions of
guarded bisimilarity are equivalent on the final coalgebra for $F$.

\begin{proposition}
If $F\,X$ is a set for all types $X$, then the types $x \sim_\unfold y$ and $x
\simTr_\unfold y$ are path equal, for all $x,y : \nuFg$.
\end{proposition}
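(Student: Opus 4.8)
The plan is to reduce the claim to a logical equivalence between two propositions and settle that equivalence by guarded recursion. First I observe that $x \simTr_\unfold y$ is a proposition for all $x,y$: unfolding its fixed point exhibits it as a propositional truncation $\exists t.\,(\dots)$, which is a proposition by construction. For $x \sim_\unfold y$ I use the hypothesis: the final coalgebra $\nuFg$ is a set, since $\unfold$ makes it equivalent to $F(\later \nuFg)$ and $F(\later \nuFg)$ is a set by the assumption that $F\,X$ is a set for every $X$; as $\isSet$ is invariant under equivalence, $\nuFg$ is a set. Hence $\Path_{\nuFg}\,x\,y$ is a proposition, and by Corollary~\ref{cor:coindprinc} the type $x \sim_\unfold y$ is path equal to it and therefore also a proposition. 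Since both sides are propositions, univalence reduces the goal to the logical equivalence $x \sim_\unfold y \iff x \simTr_\unfold y$ for all $x,y : \nuFg$.

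I would prove this bi-implication by a single guarded recursion, assuming $e : \later(\Pi x\,y.\,(x \sim_\unfold y \iff x \simTr_\unfold y))$. Unfolding the two fixed points gives
\[ x \sim_\unfold y \;\equiv\; \Sigma t : F(\tot\,\laterRel{(\next \sim_\unfold)}).\,\Path\,(F\pi_0\,t)\,(\unfold\,x) \times \Path\,(F\pi_1\,t)\,(\unfold\,y), \]
and the identical expression with $\exists$ and $\simTr_\unfold$ in place of $\Sigma$ and $\sim_\unfold$ for $x \simTr_\unfold y$. The delayed hypothesis $e$ induces, under a tick, relation inclusions in both directions, and hence maps $\phi : \tot\,\laterRel{(\next\sim_\unfold)} \to \tot\,\laterRel{(\next\simTr_\unfold)}$ and $\psi$ in the reverse direction, both acting as the identity on the two $\later X$-valued components, via $(u,v,p) \mapsto (u,v,\,\tabs\tickA{q})$ where $q$ applies the relevant direction of $\tapp e\,(\tapp u)\,(\tapp v)$ to $\tapp p$.

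For the forward direction, given $(t,p_0,p_1)$ witnessing $x \sim_\unfold y$, I push $t$ to $F\phi\,t : F(\tot\,\laterRel{(\next\simTr_\unfold)})$; since $\phi$ preserves the first two components we have $\pi_0\circ\phi = \pi_0$ and $\pi_1\circ\phi = \pi_1$, so the functoriality of $F$ transports $p_0,p_1$ to the required paths, and applying the truncation unit yields $x \simTr_\unfold y$. For the backward direction, the target $x \sim_\unfold y$ is a proposition, so I may eliminate the truncation in $x \simTr_\unfold y$; a witness $(t,p_0,p_1)$ is pushed along $F\psi$ and its paths are transported exactly as before, landing in $\Sigma t : F(\tot\,\laterRel{(\next\sim_\unfold)}).\,(\dots)$, which is $x \sim_\unfold y$. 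Composing the two implications and invoking univalence completes the argument.

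The main obstacle is the bookkeeping in transporting the two path components along $F\phi$ and $F\psi$: one must verify that the projections genuinely commute with these relation-inclusion maps and then compose the resulting paths with the functor-law paths of $F$. This is routine given the functoriality witnesses of $F$ and the extensionality principle for $\later$, but it is where the care lies. It is worth emphasising that the hypothesis that $F\,X$ is a set enters in exactly one essential place, namely making $x \sim_\unfold y$ a proposition, which is precisely what licenses both the appeal to univalence and the elimination of the truncation in the backward direction; without it the forward map still exists, but the two notions genuinely differ, as the counterexample at the start of this section shows.
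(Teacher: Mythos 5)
Your proof is correct, and it reaches the result by a genuinely different route than the paper's. The paper also starts from Corollary~\ref{cor:coindprinc}, but then runs its guarded recursion on the statement that $\simTr_\unfold$ is path equal to $\Path_{\nuFg}$: after unfolding the fixed point and applying the delayed hypothesis together with (\ref{eq:later:ext:next}), it appeals to the chain of equivalences $\rest{F} \laterRel{(\next \, (\Path_{\nuFg}))}\, (\unfold\,x)\,(\unfold\,y) \equiv \Path_{\nuFg}\,x\,y$ derived \emph{en passant} inside the proof of Proposition~\ref{prop:bisimeq}, and only at the end strips the truncation using that $\nuFg$ is a set. You never compare $\simTr_\unfold$ with $\Path_{\nuFg}$ at all: you use the set hypothesis (plus the Corollary) to make both relations propositions, reduce to a logical bi-implication by propositional univalence, and prove that bi-implication by guarded recursion, shuttling witnesses between the two relation liftings along $F\phi$ and $F\psi$, with the truncation unit in one direction and truncation elimination into a proposition in the other. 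The paper's route is shorter because the untruncated identification with $\Path_{\nuFg}$ is already available, so no witness bookkeeping with functor laws is needed; your route is more self-contained, in that its recursive step uses only functoriality of $F$ and the basic truncation rules rather than citing a sub-derivation buried inside another proof, and it isolates cleanly that the forward implication from $\sim_\unfold$ to $\simTr_\unfold$ needs no hypothesis at all. Both arguments spend the set hypothesis in exactly one place --- for you, propositionhood of $x \sim_\unfold y$; for the paper, collapsing $\| \Path_{\nuFg}\,x\,y \|$ to $\Path_{\nuFg}\,x\,y$ --- and these are the same use in different clothing.

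One caveat on a closing side remark, which does not affect the proof: your claim that without the set hypothesis ``the two notions genuinely differ, as the counterexample at the start of this section shows'' is not supported. That counterexample contrasts the bisimulation predicates $\isBisim{\Pfin(A \times -)}_f$ and $\isLTSBisim_f$ for a functor that \emph{does} satisfy the set hypothesis; it says nothing about $\sim_\unfold$ and $\simTr_\unfold$ coming apart when $F\,X$ fails to be a set.
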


\begin{proof}
By Corollary \ref{cor:coindprinc}, the types $x \sim_\unfold y$ and
$\Path_{\nuFg}\,x\,y$ are path equal. We prove by guarded recursion that
$\simTr_\unfold$ and $\Path_{\nuFg}$ are path equal as terms of $\nuFg
\to \nuFg \to \U$. Suppose given 
  $p : \later (\Path_{\nuFg \to \nuFg \to \U} \, (\simTr_\unfold) \,
  (\Path_{\nuFg}))$. By function extensionality, it is sufficient to show
  that the types $x \simTr_\unfold y$ and $\Path_{\nuFg}\, x\,y$ are path equal for
  all $x,y :X$. We have the following sequence of equalities:
\begin{align}
x \simTr_\unfold y & \equiv  
\| \rest{F} \laterRel{(\next \simTr_\unfold)} (\unfold\,x)\,(\unfold\,y) \| \nonumber \\ 
& \equiv \| \rest{F} \laterRel{(\next \, (\Path_{\nuFg}))}
  (\unfold\,x)\,(\unfold\,y)  \| \label{eq:IHapp:tr} \\ 
& \equiv \| \Path_{\nuFg} \,x\,y \| \label{eq:same} \\
& \equiv  \Path_{\nuFg} \,x\,y \label{eq:setfunct}
 \end{align} 
Here,  (\ref{eq:IHapp:tr}) follows from the induction hypothesis and
(\ref{eq:later:ext:next}), and (\ref{eq:same}) follows
from the fact that the two untruncated types are path equal, which we
derived \emph{en passant} in the proof of Proposition
\ref{prop:bisimeq}. Finally, (\ref{eq:setfunct})
follows from the fact that $\nuFg$ is a set, since
it is equivalent to the set
$F \,(\later \nuFg)$.
\end{proof}

We now give a characterization of the truncated relation lifting of the functor
$\Pfin (A \times -)$, which is the key that allows us to prove that
the concrete notion of bisimulation for labelled transition systems
introduced in Section \ref{sec:lts} is equivalent to the truncated
variant of the general
coalgebraic notion of bisimulation for the functor
$\Pfin (A \times -)$. First, we prove an auxiliary lemma.

\begin{lemma}\label{lem:aux2}
Given a relation $R : X \to Y \to \U$, two finite subsets $u : \Pfin (A \times X)$, $v : \Pfin(A \times Y)$ and a proof $p
: \Pi x : X. \, \Pi a : A.\, (a,x)\in u \to \exists y : Y. \,(a,y) \in v
\times R\,x\,y$,  we have
an inhabitant of the following type:
\[
\exists t : \Pfin (A \times \tot\,R). \,
\Path_{\Pfin (A \times X)} \, (\Pfin (A \times \pi_0)\, t)\, u \times
\Pfin (A \times \pi_1)\, t \subseteq v.
\]

\end{lemma}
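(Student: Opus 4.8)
The plan is to prove the statement by induction on the finite subset $u : \Pfin (A \times X)$, using the induction principle for $\Pfin$ of Section~\ref{sec:pfin} that eliminates into propositions. Since $u$ is the variable of induction while the hypothesis $p$ also mentions $u$, the first step is to fold $p$ into the predicate being proved. I would define, for $u : \Pfin (A \times X)$, the type
\[
Q\,u = \left(\Pi x : X.\,\Pi a : A.\,(a,x)\in u \to \exists y : Y.\,(a,y)\in v \times R\,x\,y\right) \to \exists t : \Pfin (A \times \tot\,R).\, \Path\,(\Pfin (A \times \pi_0)\,t)\,u \times \Pfin (A \times \pi_1)\,t \subseteq v.
\]
Each $Q\,u$ is an implication into the proposition $\exists t.\,(\ldots)$, hence itself a proposition, so the induction principle for propositions applies and I need only treat the three point constructors $\varnothing$, $\{(a,x)\}$ and $x \cup y$, leaving the higher path constructors to be discharged automatically. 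Instantiating $\Pi u.\,Q\,u$ at the given $u$ and applying the result to $p$ then yields the lemma.

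For the base case $u = \varnothing$ I would take $t = \varnothing$; then $\Pfin (A \times \pi_0)\,\varnothing = \varnothing = u$ and $\Pfin (A \times \pi_1)\,\varnothing = \varnothing \subseteq v$, so the truncated pair inhabits the goal. For a singleton $u = \{(a,x)\}$, the element $(a,x)$ lies in $u$, witnessed by $|\refl\,(a,x)|$, so $p$ supplies the mere existence of $y : Y$ with $(a,y)\in v$ and $r : R\,x\,y$; since the goal is a proposition I may eliminate this truncation and obtain actual witnesses. Setting $t = \{(a,(x,y,r))\}$ then gives $\Pfin (A \times \pi_0)\,t = \{(a,x)\} = u$ by the computation rule for $\Pfin$ on singletons, while $\Pfin (A \times \pi_1)\,t = \{(a,y)\}$, which is a subset of $v$ because $(a,y)\in v$.

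For the union case $u = u_1 \cup u_2$ I would use the two induction hypotheses $Q\,u_1$ and $Q\,u_2$. Given the antecedent $p$ for $u_1 \cup u_2$, I first restrict it to $u_1$ and to $u_2$, using that $(a,x)\in u_i$ implies $(a,x)\in u_1 \cup u_2$ by the defining clause $a \in x \cup y = a \in x \vee a \in y$; applying $Q\,u_1$ and $Q\,u_2$ produces (merely) subsets $t_1, t_2 : \Pfin (A \times \tot\,R)$ with $\Pfin (A \times \pi_0)\,t_i = u_i$ and $\Pfin (A \times \pi_1)\,t_i \subseteq v$. Taking $t = t_1 \cup t_2$ and using functoriality of $\Pfin (A \times -)$ on unions, $\Pfin (A \times \pi_0)\,t = u_1 \cup u_2 = u$, while $\Pfin (A \times \pi_1)\,t = \Pfin (A \times \pi_1)\,t_1 \cup \Pfin (A \times \pi_1)\,t_2 \subseteq v$. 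I do not anticipate a serious obstacle here: the only points requiring care are keeping $Q$ propositional so that the higher constructors need not be addressed, and discharging the truncations correctly — extracting genuine witnesses from the various $\exists$-hypotheses is legitimate precisely because the final goal is itself a mere existence, and hence a proposition.
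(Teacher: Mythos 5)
Your proposal is correct and follows essentially the same route as the paper's own proof: induction on $u$ with the propositional motive obtained by folding the hypothesis $p$ into an implication (so the higher constructors are handled automatically), with the same witnesses in each case ($\varnothing$, the singleton $\{(a,x,y,r)\}$ after untruncating $p$'s output, and $t_1 \cup t_2$ in the union case after restricting $p$ along the inclusions into $u_1 \cup u_2$). The truncation eliminations you perform are exactly those in the paper, and are justified for the same reason, namely that the goal is a mere existence.
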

\begin{proof}
We construct a term $d\,u\,v\,p$ by induction on $u$.
If $u = \varnothing$, define $d \,u\,v\,p = | (\varnothing
, \refl\, \varnothing, q) |$, where $q$ is the trivial proof of $\varnothing
\subseteq v$.
If $u = \{ (a , x) \}$, define $p' : \exists y : Y. \,(a,y) \in v \times
R\,x\,y$ as $p\,x\,a\,| \refl \, (a , x)|$. 
Since we are proving a proposition, we can use 
the induction principle for propositional truncation
on $p'$, to get $p' = | (y , q , r) |$.  Now define $d\,u\,v\,p
= | (\{ (a,x,y,r)\} , \refl \, \{ (a , x) \} , q' )|$, where $q' : \{ (a,y) \} \subseteq
v$ follows from $q$.

If $u = u_1 \cup u_2$, by induction, we have proofs $d_1$ and
$d_2$ of the statements of the lemma for 
$u_1$ and $u_2$ respectively. Define:
\begin{align*}
d'_1 & : \exists t
  : \Pfin (A \times \tot\,R). \, \Path_{\Pfin (A \times X)} \, (\Pfin
  (A \times \pi_0)\, t)\, u_1 \times \Pfin (A \times \pi_1)\, t
  \subseteq v\\ 
d'_1 & = d_1\,u_1\,v\,(\lambda x\, a \, q.\, p\, x\,a\, | \inl\,q|) \\
d'_2 & : \exists t
  : \Pfin (A \times \tot\,R). \, \Path_{\Pfin (A \times X)} \, (\Pfin
  (A \times \pi_0)\, t)\, u_2 \times \Pfin (A \times \pi_1)\, t
  \subseteq v \\
d'_2 & = d_2\,u_2\,v\,(\lambda x\, a \, q.\, p\, x\,a\, | \inr\,q|)
\end{align*}
Finally, by the induction principle of propositional truncation we can assume 
$d'_1 = |(t_1 , q_1 , s_1)|$ and $d'_2 = |(t_2 , q_2 , s_2)|$ and define
$d\,u\,v\,p = |(t_1 \cup t_2 \, , \, \lambda i.\, q_1 \, i \cup q_2
\, i \, , \,s )|$, where
\[s : \Pfin (A \times \pi_1)\, t_1 \cup \Pfin (A \times \pi_1)\, t_2
\subseteq v\]
follows from $s_1$ and $s_2$.
\end{proof}

\begin{proposition}\label{prop:ltsbisim}
For all $R : X \to Y \to \U$, $u : \Pfin (A \times X)$ and $v : \Pfin (A \times Y)$, 
the type $\| \rest{\Pfin (A \times -)} R \,u 
  \,v \|$ is path equal to the product type:
\[
\begin{array}{c}
(\Pi\,x : X.\,\Pi\,a : A.\,  (a,x) \in u \to \exists \,y : Y.\,(a , y)\in v \times R \,x\,y) \\ \times \\
(\Pi\,y : Y.\,\Pi\,a : A.\,  (a,y) \in v \to \exists \,x : X.\,(a , x)\in u \times R \,x\,y)
\end{array}
\] 
\end{proposition}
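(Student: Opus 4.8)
The plan is to observe first that both types are propositions, so that by the univalence axiom it suffices to prove a logical equivalence. The left-hand side is a propositional truncation, and the right-hand side is a product of two $\Pi$-types landing in $\exists$-types, both of which are propositions by the closure properties recalled in Section~\ref{sec:hits}. I would then prove the two implications separately.

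For the forward direction, since the target is a proposition I would eliminate the truncation on the left and work with an actual witness $t : \Pfin (A \times \tot\,R)$ together with paths $\Pfin (A \times \pi_0)\, t \equiv u$ and $\Pfin (A \times \pi_1)\, t \equiv v$. To establish the first conjunct, I would assume $(a,x) \in u$, transport along the first path to get $(a,x) \in \Pfin (A \times \pi_0)\, t$, and apply the second part of Lemma~\ref{lem:powset:img} to obtain (merely) an element $w \in t$ with $(A \times \pi_0)\,w \equiv (a,x)$. Writing $w = (a',(x',y',r'))$ yields $a' \equiv a$, $x' \equiv x$ and a proof $r' : R\,x'\,y'$; the first part of Lemma~\ref{lem:powset:img} then places $(a',y') = (A \times \pi_1)\,w$ in $\Pfin (A \times \pi_1)\, t$, and transporting along the second path puts $(a,y')$ in $v$, while $r'$ transports to $R\,x\,y'$. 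Thus $y'$ witnesses the required existential, and as the goal is a proposition the mere existence suffices. The second conjunct would be handled symmetrically, tracing elements through $\pi_1$ rather than $\pi_0$.

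For the backward direction I would lean on Lemma~\ref{lem:aux2}. Its hypothesis is exactly the first conjunct, so it furnishes (merely) a $t_1 : \Pfin (A \times \tot\,R)$ with $\Pfin (A \times \pi_0)\, t_1 \equiv u$ and $\Pfin (A \times \pi_1)\, t_1 \subseteq v$. Applying the same lemma to the converse relation $R'\,y\,x \eqdef R\,x\,y$, whose graph is equivalent to $\tot\,R$ by swapping the two components, with the roles of $u$ and $v$ interchanged, the second conjunct furnishes (merely) a $t_2 : \Pfin (A \times \tot\,R)$ with $\Pfin (A \times \pi_1)\, t_2 \equiv v$ and $\Pfin (A \times \pi_0)\, t_2 \subseteq u$. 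Since the goal is itself a truncation I may assume $t_1$ and $t_2$ are given, and I would set $t = t_1 \cup t_2$. By functoriality of $\Pfin (A \times -)$ each projection of $t$ splits as a union, and using the extensionality principle for finite subsets together with the absorption law $S \subseteq w \Rightarrow w \cup S \equiv w$, the summand equal to $u$ (resp.\ $v$) absorbs the summand contained in $u$ (resp.\ $v$), giving $\Pfin (A \times \pi_0)\, t \equiv u$ and $\Pfin (A \times \pi_1)\, t \equiv v$. Hence $t$ inhabits $\rest{\Pfin (A \times -)} R\,u\,v$, whose truncation is thereby inhabited.

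I expect the main obstacle to be the asymmetry of Lemma~\ref{lem:aux2}, which only delivers a one-sided containment on the $v$-projection rather than an equality. The resolving idea is to invoke the lemma twice, once in each direction, and glue the two witnesses by taking their union: the two containments are precisely what allow $t_1 \cup t_2$ to project exactly onto $u$ and $v$ via the absorption law. A minor technical point is the converse-relation instance, where one must transport along the equivalence $\tot\,R' \tyeq \tot\,R$ in order to reuse the single witness type $\Pfin (A \times \tot\,R)$.
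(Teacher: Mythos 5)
Your proof is correct and takes essentially the same route as the paper's: univalence reduces the goal to a logical equivalence between propositions, the forward direction eliminates the truncation and chases elements through Lemma~\ref{lem:powset:img} with transports along the two paths, and the backward direction invokes Lemma~\ref{lem:aux2} twice (once in symmetric form) and glues the two witnesses by taking their union, using the one-sided containments to collapse each projection. The only differences are presentational: you spell out the converse-relation instance and the absorption step $S \subseteq w \Rightarrow w \cup S \equiv w$, which the paper leaves implicit in its appeal to a ``symmetric version'' of the lemma and its two-step composition of equalities.
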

\begin{proof}
By univalence, it suffices to show 
that the two propositions are logically equivalent.
For the left-to-right implication, let $e : \| \rest{\Pfin (A \times
  -)} R \,u \,v \|$. We construct a term
\[
d_1 : \Pi\,x : X.\,\Pi\,a :
  A.\, (a,x) \in u \to \exists \,y : Y.\,(a ,  y)\in v \times R \,x\,y
\]
A term $d_2$ of a similar type proving the second component in the product
can be constructed similarly.
Since we are proving a proposition, we can use 
the induction principle of propositional truncation on $p$.
So let $e = |(t,p_1,p_2)|$, with 
\[
t : \Pfin (A \times \tot\, R) \qquad
p_1 : \Path_{\Pfin (A\times X)} \,(\Pfin (A \times \pi_0) \,t)\, u \qquad
p_2 : \Path_{\Pfin (A\times Y)} \,(\Pfin (A \times \pi_1) \,t)\,
v . 
\]
Suppose $x : X$, $a : A$ and $q : (a ,x)
\in u$. Transporting over the path $p_1$, we obtain a proof $q' :
(a,x) \in \Pfin (A \times \pi_0) \,t$. By Lemma
\ref{lem:powset:img} , we know that there merely exists
$y : Y$ and $r : R \,x \,y$ such that $(a,x,y,r) \in t$. In
particular, we have $(a,y) \in \Pfin (A \times \pi_1)
\,t$. Transporting over the path $p_2$, we obtain $(a ,  y)\in v$.

For the right-to-left implication, by Lemma \ref{lem:aux2} we
obtain a term:
\[
s_1 : \exists t : \Pfin (A \times \tot\,R). \,
\Path_{\Pfin (A \times X)} \, (\Pfin (A \times \pi_0)\, t)\, u \times
\Pfin (A \times \pi_1)\, t \subseteq v.
\]
Similarly, by applying a symmetric version of Lemma \ref{lem:aux2} we
obtain another term:
\[
s_2 : \exists t : \Pfin (A \times \tot\,R). \,
\Path_{\Pfin (A \times Y)} \, (\Pfin (A \times \pi_1)\, t)\, v \times
\Pfin (A \times \pi_0)\, t \subseteq u.
\]
Since we are proving a proposition, we can use the induction principle
of propositional truncation on $s_1$ and $s_2$. So let
$s_1 = | (t_1, p_1 , q_1) |$ and $s_2 = | (t_2 ,
p_2 , q_2) |$. We return $ |(t_1 \cup t_2 , p , q) |:  \| \rest{\Pfin(A
  \times -)} R \, u \, v \| $, where $p : \Path_{\Pfin (A \times X)}\,
(\Pfin (A \times \pi_0) t_1 \cup \Pfin (A \times \pi_0) t_2)\, u$
is taken to be the following sequential composition of equalities:
\[
\Pfin (A \times \pi_0) t_1 \cup \Pfin (A \times \pi_0) t_2
\equiv u \cup \Pfin (A \times \pi_0) t_2
\equiv u
\]
The first of these equalities follows from $p_1$ and the second from
$q_2$. The proof $q$ is constructed in a similar way.
\end{proof}

Notice that classically the proof of Proposition \ref{prop:ltsbisim}
is simpler than the constructive one presented here. In the classical
proof, for the right-to-left direction one first constructs a subset
$t : \Pfin(A \times \tot\,R)$ for which $(a,x,y,r) \in t$ if and only
if $(a,x) \in u$, $(a,y) \in v$ and $r : R\,x\,y$. Then one proves,
using the given hypothesis, that $\Pfin (A \times \pi_0)\,t$ is path
equal to $u$ and $\Pfin (A \times \pi_1)\,t$ is path equal
$v$. Constructively, even assuming that the type $A$ comes with
decidable equality, we cannot proceed in two steps like this, since we
do not have a principle of set comprehension as needed for
constructing the term $t$.  It is possible to construct a term
$t' : \Pfin(A \times X \times Y)$ for which $(a,x,y) \in t'$ if and
only if $(a,x) \in u$ and $(a,y) \in v$, but not to filter out of $t'$
the triples $(a,x,y)$ for which $R\,x\,y$ does not hold, since $R$ in
general is an undecidable relation.

As a direct consequence of Proposition \ref{prop:ltsbisim}, we obtain
the equivalence of two notions of bisimulation for \GLTS s.
\begin{theorem}\label{thm:ltsbisim}
Let $(X,A,f)$ be guarded coalgebra for the functor
$F = \Pfin(A \times -)$.  The types 
$\isTrBisim{F}_f\,R$ and $\isLTSBisim_f\,R$ are path equal.
\end{theorem}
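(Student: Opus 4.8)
The plan is to recognise that $\isTrBisim{F}_f\,R$ is, up to unfolding definitions, the truncated relation lifting of the \emph{composite} functor $F\circ\later$ evaluated at $f\,x$ and $f\,y$, and then to massage this into the shape of Proposition~\ref{prop:ltsbisim}. Both $\isTrBisim{F}_f\,R$ and $\isLTSBisim_f\,R$ are propositions: the former because its codomain is built from an existential, the latter as already noted in Section~\ref{sec:lts}. Hence, as in the proof of Proposition~\ref{prop:ltsbisim}, by univalence it suffices to exhibit a logical equivalence between them, and this equivalence will be assembled pointwise from a chain of equivalences between the respective codomains, one for each $x,y : X$ and each proof of $R\,x\,y$.

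First I would observe that the codomain of $\isTrBisim{F}_f\,R$ is by definition of $\exists$ and of relation lifting exactly $\| \rest{(F\circ\later)}(R)\,(f\,x)\,(f\,y) \|$, since $F(\later(\tot\,R)) = \Pfin(A \times \later(\tot\,R))$ and the maps $F(\later\pi_0)$, $F(\later\pi_1)$ are precisely the projections $(F\circ\later)\pi_0$, $(F\circ\later)\pi_1$ of the composite functor. Next, I would apply Lemma~\ref{lem:rel:lift:contractive} to the elements $f\,x, f\,y : F(\later X)$, obtaining an equivalence $\rest{(F\circ\later)}(R)\,(f\,x)\,(f\,y) \tyeq \rest{F}(\laterRel{(\next\,R)})\,(f\,x)\,(f\,y)$. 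Since propositional truncation preserves equivalences, this gives a path between the two truncations, so the codomain of $\isTrBisim{F}_f\,R$ is path equal to $\| \rest{\Pfin(A\times -)}(\laterRel{(\next\,R)})\,(f\,x)\,(f\,y) \|$.

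Then I would instantiate Proposition~\ref{prop:ltsbisim} with the relation $\laterRel{(\next\,R)} : \later X \to \later X \to \U$ and the finite subsets $u = f\,x$, $v = f\,y$ (so that the proposition's $X$ and $Y$ are both taken to be $\later X$). This rewrites $\| \rest{\Pfin(A\times -)}(\laterRel{(\next\,R)})\,(f\,x)\,(f\,y) \|$ as the product of the two transition-matching conditions. Finally, unfolding $\laterRel{(\next\,R)}\,(x')\,(y')$, which by the tick $\beta$-rule equals $\latbind\tickA{R\,(\tapp{x'})\,(\tapp{y'})}$ (as $R$ does not mention $\tickA$), makes this product definitionally equal to the body of $\isLTSBisim_f\,R$. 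Assembling these pointwise paths under the outer $\Pi\,x\,y.\,R\,x\,y \to (-)$, by congruence of the dependent product, yields the claimed path equality.

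I expect the main obstacle to be bookkeeping rather than mathematical depth: one must check carefully that the relation lifting packaged inside $\isTrBisim{F}_f$ is exactly the lifting of the composite functor $F\circ\later$, so that Lemma~\ref{lem:rel:lift:contractive} applies verbatim, and that after instantiating Proposition~\ref{prop:ltsbisim} the quantified variables $x',y' : \later X$ and the relation $\laterRel{(\next\,R)}$ line up precisely with the $\latbind\tickA{R\,(\tapp{x'})\,(\tapp{y'})}$ occurring in $\isLTSBisim_f$. The only background fact needed beyond the cited results is that propositional truncation sends equivalences to equivalences, which is standard.
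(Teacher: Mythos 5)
Your proof is correct and takes exactly the route the paper intends: the paper presents Theorem~\ref{thm:ltsbisim} as a direct consequence of Proposition~\ref{prop:ltsbisim}, and your elaboration---identifying the codomain of $\isTrBisim{F}_f\,R$ as the truncated relation lifting of $F\circ\later$, bridging to $\rest{F}(\laterRel{(\next\,R)})$ via Lemma~\ref{lem:rel:lift:contractive} (with truncation preserving equivalences), instantiating Proposition~\ref{prop:ltsbisim} at $\later X$ with $u = f\,x$, $v = f\,y$, and closing under the outer $\Pi\,x\,y.\,R\,x\,y \to (-)$ by congruence---is precisely the bookkeeping that ``direct consequence'' compresses. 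No gaps.
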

%
%


\section{Denotational semantics for \TCTT}
\label{sec:sem}

In this section, we show consistency of \TCTT\  with higher inductive types 
by constructing a denotational model. The construction is an adaptation 
of the model of \GCTT~\cite{GCTT}, which we extend with ticks using the constructions
of \citet{MM18}. The extension with HITs is done exactly as in~\cite{CHM18}, and 
includes ordinary inductive types, such as 
%
the type $\HML$ used in Section~\ref{sec:HML}. 



\subsection{The category of cubical trees}

We first recall the definition of the category of cubical sets as used 
\cite{CCHM18} to model cubical type theory. 
Let $\{i,j,k,\dots\}$ be a countably
infinite set of names. The \emph{category of cubes} $\cubes$ has
finite sets of names $I,J,K,\dots$ as objects, and as morphisms
$f : J \to I$ functions $f : I \to \DM(J$), where $\DM(J)$ is the free
De Morgan algebra on $J$. These compose by the standard Kleisli
composition. 

We write $\C$ for the category of contravariant presheaves on
$\cubes$, whose objects are called \emph{cubical sets}. In elementary terms,
a cubical set $X$ comprises a family of sets $X(I)$ indexed over 
objects $I$ in $\cubes$ and a family of maps indexed over morphisms
$f : J \to I$ in $\cubes$ mapping an $x \in X(I)$ to $x \cdot f \in X(J)$, 
in a functorial way, i.e., $x\cdot\id = x$ and $(x\cdot f)\cdot g = x\cdot (f\circ g)$.
We often leave $\cdot$ implicit, simply writing $xf$.

The category $\C$ has an interval object $\I$ defined by $\I(I) = \DM(I)$, i.e.,
the Yoneda embedding applied to a singleton set. Recall that 
$\C$, being a presheaf category is a topos~\cite{maclane2012sheaves}, 
and in particular has a subobject classifier $\Omega$. The face
lattice $\F$ is the cubical set defined as the subobject of $\Omega$ 
given as the image of the map
$(-) = 1 : \I \to \Omega$ mapping an element $r : I$ to the truth value
$r = 1$. In the internal language of $\C$, this is
\[
\F \eqdef \{ p : \Omega \mid \exists r : \I . p \iff (r = 1)\}
\]
Faces in cubical type theory are modelled 
as elements of $\F$. This corresponds to the object of \emph{cofibrant}
propositions in~\cite{PittsAM:aximct}.

As proved in~\cite{GCTT} these constructions can be generalised to give 
a model of cubical type theory
in presheaves over any category of the form $\cubes\times \DCat$ as long as 
$\DCat$ has an initial object. We will extend this to a model of \TCTT\ in
the case of $\DCat = \omega$, the partially ordered category of natural 
numbers.   
Like $\C$ the topos $\CD$ has an interval object 
defined as $\ID(I,d) = \DM(I)$ and a face lattice $\FD$ defined similarly
to $\F$ in $\C$. 

Before we describe the model of \TCTT, we first recall the notion of 
category with family, the standard notion of model of dependent
type theory~\cite{dybjer1996}.

\begin{definition} \label{def:cwf}
 A \emph{category with family} (CwF)  comprises
 \begin{itemize}
\item A category $\CCat$. We refer to the objects of $\CCat$ as contexts, and the morphisms
as substitutions
\item For each object $\Gamma$ in $\CCat$ a set of types in context $\Gamma$.
We write $\Ty{\Gamma}{A}$ to mean that $A$ is a type in context $\Gamma$.
\item For each $\Ty\Gamma A$ a set of terms of type $A$. We write $\Tm\Gamma tA$
to mean that $t$ is a term of type $A$. 
\item For each substitution $\gamma : \Delta \to \Gamma$ a pair of reindexing maps associating to each
$\Ty{\Gamma}{A}$ a type $\Ty{\Delta}{\cwfsub A \gamma}$ and to each term 
$\Tm\Gamma tA$ a term $\Tm\Delta{\cwfsub t \gamma}{\cwfsub A \gamma}$ such that
\begin{align*}
\cwfsub{(\cwfsub A \gamma)}\delta & = \cwfsub A{\gamma\delta} & \cwfsub A \id & = A  &
\cwfsub{(\cwfsub t \gamma)}\delta & = \cwfsub t{\gamma\delta} & \cwfsub t \id & = t 
\end{align*}
\item A comprehension map, associating to each $\Ty{\Gamma}{A}$ a context
$\compr\Gamma{A}$, a substitution $\p_A : \compr\Gamma A \to \Gamma$, and a term 
$\Tm{\compr\Gamma A}{\q_A}{\cwfsub A{\p_A}}$
such that for every $\gamma : \Delta \to \Gamma$ and $\Tm\Delta t{\cwfsub{A}{\gamma}}$
there exists a unique $\cpair\gamma t : \Delta \to \compr\Gamma A$ such that
\[
\p_A \circ \cpair\gamma t = \gamma \qquad \cwfsub{\q_A}{\cpair\gamma t} = t
\] 
\end{itemize}
\end{definition}

Categories with family model basic dependent types, and can be extended to models of 
$\Pi$ and $\Sigma$ types etc.~\cite{dybjer1996}.

Any presheaf category $\Psh{\CCat}$ defines a category with family, in 
which the category of contexts is $\Psh{\CCat}$, a type
$\Ty\Gamma A$ is a family of sets $A(C, \gamma)$ indexed
over $C \in \CCat$ and $\gamma\in \Gamma(C)$, together with
maps mapping $x\in A(C, \gamma)$ to $xf \in A(D, \gamma f)$ for $f : D \to C$,
in a functorial way. A term $\Tm\Gamma tA$ is a family of elements 
$t(C,\gamma)\in A(C,\gamma)$ such that $t(C,\gamma) f = t(D, \gamma f)$ for all $f$. We shall write
$\TyD\Gamma A$ and $\TmD\Gamma tA$ for the judgements of types and
terms in the CwF associated with $\CD$.

\subsection{A model of \TCTT}

Following \cite{GCTT}, we construct the model using the internal language
of the CwF $\CD$. First we define the notion of a composition structure.

\begin{definition}
 Let $\TyD\Gamma A$ be given. A composition structure on $A$ is an operation
 taking as input a path $ \TmD{\ID}{\gamma}{\Gamma}$, a face $\TmD{1}{\phi}{\FD}$,
 and terms $\TmD{1}{u}{\extend\phi \to \Pi(i : \ID)\cwfsub A{\gamma(i)}}$
 and $\TmD{1}{u_0}{\cwfsub{A}{\gamma(0)}}$ such that $\phi \imp u\,\star\, 0 = u_0$
 and producing a term $\TmD{1}{c_A\, \gamma\, \phi\, u\, u_0}{\cwfsub A{\gamma(1)}}$
such that $\phi \imp u \, \star \, 1 = c_A\, \gamma\, \phi\, u\, u_0$.
\end{definition}
Note that this makes sense, because the CwF is democratic: contexts $\Gamma$
correspond to types in context $1$ (the terminal object). The definition uses the 
subsingleton $\extend\phi$ associated to an element $\phi : \FD$ defined in the internal logic of $\CD$ as $\{\star \mid \phi\}$. 

\begin{theorem}[\cite{GCTT}]
Let $\DCat$ be a small category with an initial object. 
The CwF defined as follows is a model of cubical type theory.
\begin{itemize}
\item The category of contexts is $\CD$
\item A type in context $\Gamma$ is a pair of a type $\TyD\Gamma A$ and a 
composition structure $c_A$ on $A$.
\item Terms of type $(\TyD\Gamma A, c_A)$ are terms $\TmD\Gamma tA$.
\end{itemize}
\end{theorem}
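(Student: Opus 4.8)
The plan is to transport the cubical set model of \citet{CCHM18} from $\C$ to the presheaf topos $\CD$, carrying out every construction in the internal language of $\CD$ exactly as CCHM do internally to $\C$. Since $\CD$ is a presheaf category, the underlying CwF structure described above, together with $\Pi$-, $\Sigma$-, sum- and $\N$-types, is the standard presheaf CwF and requires no new ideas; the entire content of the theorem is that each type former can be equipped with a composition structure in the sense of the preceding definition, and that the interval $\ID$ and face lattice $\FD$ satisfy the axioms CCHM require of a cofibration classifier. I would therefore organise the proof as: (i) verify the axioms on $\ID$ and $\FD$; (ii) define and verify composition for the basic type formers; and (iii) build the universe and $\mathsf{Glue}$ types so as to obtain univalence.

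For step (i), the key observation is that both the interval and the faces live entirely in the cube direction: $\ID = \pi^*\I$ for the projection $\pi : \cubes\times\DCat \to \cubes$, and $\FD$ is defined by the same internal formula $\{ p : \Omega \mid \exists r : \ID. \, p \iff (r=1)\}$ as $\F$. Consequently the De Morgan connection structure, the two endpoints $1 \to \ID$, the disjointness $(i=0)\meet(i=1) = 0_\FD$, and the closure of $\FD$ under $\meet$, $\vee$ and universal quantification along $\ID$ all reduce to the corresponding facts already established in $\C$, since the $\DCat$-coordinate is untouched. The subsingleton $\extend\phi = \{\star \mid \phi\}$ used in the definition of a composition structure behaves exactly as in $\C$.

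For step (ii), I would define the operator $c_A$ for each type former by the CCHM formulas, read in the internal language of $\CD$. Because these formulas are phrased purely in terms of the interval, the faces and $\extend{-}$, the boundary equation $\phi \imp u\,\star\,1 = c_A\,\gamma\,\phi\,u\,u_0$ and the stability of $c_A$ under substitution are checked by the identical calculations as in CCHM: composition for $\Sigma$ is componentwise, for $\Pi$ it is obtained by transporting the box through the exponential using filling in the domain, and for $\N$ and sums it is pointwise on constructors. Crucially, naturality of $c_A$ with respect to $\DCat$-restriction is automatic, since the defining formulas never mention the $\DCat$-component and so reduce to the naturality of $A$ itself.

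The main obstacle is step (iii): constructing a fibrant universe $\U$ of small types-with-composition together with $\mathsf{Glue}$ types, which is what yields univalence, and which is genuinely more delicate than the basic formers. It is precisely here that the hypothesis that $\DCat$ has an initial object is used, in order to present the universe of small composable fibrations and to equip it with its own composition structure. For this part I would follow the corresponding development in \citet{GCTT}, checking that the $\mathsf{Glue}$ construction and the transport and composition on $\U$ transfer from $\C$ to $\CD$, with the initial object supplying the data that the pure cubical case obtained for free. Once $\U$ and $\mathsf{Glue}$ are in place, univalence follows as in CCHM, completing the verification that $\CD$ models cubical type theory.
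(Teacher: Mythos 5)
A preliminary remark on the comparison itself: the paper does not prove this theorem. It is imported wholesale from \cite{GCTT} (hence the attribution in the theorem header and the sentence ``As proved in~\cite{GCTT} these constructions can be generalised\ldots''), and the paper only builds the tick and $\slater$ structure on top of it. So your proposal cannot be checked against an in-paper argument; it has to be measured against the development in \GCTT\ that the paper relies on. At the level of strategy you do match that development: interval and face lattice taken constant in the $\DCat$-direction, the composition operations of \cite{CCHM18} re-read in the internal language of $\CD$, and the universe and glueing handled as in \GCTT.

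The genuine gap is in your treatment of the initial-object hypothesis, which is the only mathematical content this statement adds to \cite{CCHM18}. You assert that steps (i) and (ii) ``reduce to the corresponding facts already established in $\C$, since the $\DCat$-coordinate is untouched'', and that the hypothesis is used ``precisely'' in step (iii); but for step (iii) you simply defer to \cite{GCTT}, i.e.\ you black-box exactly the part whose proof would exhibit how the hypothesis does its work, and you never say what data an initial object actually supplies there. Worse, the reduction claim of steps (i)--(ii) is false as a general principle for the internal-language route you chose. Connectedness of the interval is the standard counterexample: by the usual Yoneda-style argument, a map $\ID \to 1+1$ in $\CD$ is constant on each fibre over $d$, so it amounts to an assignment of one of the two summands to every object $d$ of $\DCat$, natural along $\DCat$-morphisms; such an assignment is forced to be constant precisely when $\DCat$ is connected. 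Hence internal connectedness of $\ID$ --- one of the axioms of Orton--Pitts-style internal developments of the cubical model, used for instance to equip sum types (which are part of \CTT\ as recalled in Section 2) with a composition structure --- does \emph{not} reduce to the corresponding fact about $\I$ in $\C$, and it is exactly the sort of property for which the hypothesis ``$\DCat$ has an initial object'' (which implies that $\DCat$ is connected) does real work. So the $\DCat$-direction is not ``untouched'' in steps (i)--(ii), your confident localisation of the hypothesis to the universe is unjustified, and the proposal as written establishes only the routine transport while leaving unproved precisely the content that distinguishes this theorem from the cubical-sets case.
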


We now specialise to the case of $\DCat = \omega$ and 
show how to extend this model with ticks to a model of \TCTT, using
the techniques developed in~\cite{MM18} to model the ticks of Clocked Type 
Theory~\cite{clott}. First note that there is an adjunction 
$\adj\searlier\slater$ of endofunctors on $\Cw$ defined as follows
\begin{align*}
 (\slater X)(I,0) & = \{\star\} \\
 (\slater X)(I, n+1) & = X(I,n) \\
 (\searlier X)(I,n) & = X(I,n+1)
\end{align*}
The right adjoint $\slater$ is the obvious adaptation of the interpretation of $\later$ 
in the topos of trees model of guarded recursion~\cite{BMSS12}, and the 
left adjoint $\searlier$ is discussed in the same paper.

Extension of contexts with ticks is modelled using the left adjoint
\[
\den{\wfctx{\Gamma, \tickA : \T}} =\, \searlier\den{\wfctx{\Gamma}}
\]
and tick weakening is modelled by the natural transformation 
${\searlier} \to \id$ induced by the maps $(I, n) \to (I,n+1)$ in 
$\cubes \times \omega$. Using this, one can define a context projection
$\ctxproj{\Gamma}{\Gamma'} : \den{\wfctx{\Gamma,\Gamma'}} \to 
\den{\wfctx{\Gamma}}$ by induction on $\Gamma'$. 

\begin{lemma}
 The tick exchange rules are sound, i.e.,
\begin{align*}
 \den{\Gamma, \tickA : \T, i : \I} & = \den{\Gamma, i : \I, \tickA : \T} & 
 \den{\Gamma, \extend\phi, \tickA : \T} & = \den{\Gamma, \tickA : \T, \extend\phi}
\end{align*}
\end{lemma}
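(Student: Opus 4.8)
The plan is to reduce both equalities to two structural facts about the functor $\searlier$. First, $\searlier$ is the restriction (precomposition) functor along the time-shift $s : \cubes \times \omega \to \cubes \times \omega$ given by $s(I,n) = (I,n+1)$, which is the identity on the cube coordinate and the successor on the $\omega$ coordinate; indeed $(\searlier X)(I,n) = X(I,n+1) = X(s(I,n))$. Since limits and colimits in a presheaf category are computed pointwise, any such restriction functor preserves all limits and colimits, and in particular finite products, pullbacks and monomorphisms. Second, both the interval $\I$ and the face lattice of $\Cw$ are constant in the $\omega$-direction: $\I(I,n) = \DM(I)$ and the face lattice has value $\F(I)$ at every $(I,n)$, each independent of $n$. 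Consequently $\searlier\,\I = \I$ on the nose, and $\searlier$ fixes the face lattice as well.

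For the first equality I would recall that extending a context by a name is comprehension of the closed type $\I$, so that $\den{\Gamma, i : \I} = \den{\Gamma} \times \I$. Unfolding the two sides then gives $\den{\Gamma, \tickA : \T, i : \I} = \searlier\den{\Gamma} \times \I$ and $\den{\Gamma, i : \I, \tickA : \T} = \searlier(\den{\Gamma} \times \I)$. Because $\searlier$ preserves products and $\searlier\,\I = \I$, the latter equals $\searlier\den{\Gamma} \times \searlier\,\I = \searlier\den{\Gamma} \times \I$, so the two contexts coincide. The same conclusion can be read off pointwise: at each $(I,n)$ both presheaves are $\den{\Gamma}(I,n+1) \times \DM(I)$, with identical restriction maps, since the cube action on the $\DM(I)$ factor is unaffected by the time shift.

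For the second equality I would use that restricting a context by a face $\phi$ is the subobject of $\den{\Gamma}$ cut out by $\phi$, i.e.\ the pullback of the generic subsingleton along the classifying map $\phi$ into the face lattice. Since $\searlier$ preserves this pullback and fixes the face lattice, $\searlier$ applied to the $\phi$-restriction of $\den{\Gamma}$ is exactly the restriction of $\searlier\den{\Gamma}$ by the reindexed face $\searlier\phi$ --- and this reindexed face is precisely the one appearing on the right-hand side, because $\phi$ mentions only names and is therefore transported across the tick unchanged. Pointwise, both contexts are $\{\, \gamma \in \den{\Gamma}(I,n+1) \mid \phi(I,n+1,\gamma) = \top \,\}$.

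The content here is essentially bookkeeping, and the only point needing genuine care is that these are \emph{strict} equalities rather than mere isomorphisms; this is what forces one to identify $\searlier$ as an honest restriction functor and to observe that $\I$ and the face lattice are literally pulled back from $\C$ along the projection $\cubes \times \omega \to \cubes$. It is also where the time-independence of the face lattice is indispensable, guaranteeing that the face $\phi$ crosses the tick without modification, and I expect this face bookkeeping to be the main (if modest) obstacle. Finally, the type- and term-level exchange equalities of~(\ref{eq:rulescomm}) follow from exactly the same two facts applied to the CwF reindexing of types and terms, each of which is again computed by $\searlier$ and stable under it.
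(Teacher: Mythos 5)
Your proposal is correct and takes essentially the same route as the paper: the first equality is established by exactly the same chain ($\searlier$ preserves products and $\searlier\Iw = \Iw$ because $\Iw$ is constant in the $\omega$-direction), and your pointwise description of the restricted contexts in the second equality is precisely what the paper verifies. The only cosmetic difference is that where you phrase the second equality as $\searlier$ preserving a pullback together with the face crossing the tick unchanged, the paper proves that same key fact --- that $\den{\Gamma, \tickA : \T \vdash \phi : \F}(I,n,\gamma)$ is true iff $\den{\Gamma \vdash \phi : \F}(I,n+1,\gamma)$ is true --- by an easy induction on $\phi$, which is the induction your phrase ``transported across the tick unchanged'' implicitly appeals to.
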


\begin{proof}
 The first of these equalities follows directly from the fact that $\searlier \Iw = \Iw$:
 \[
 \den{\Gamma, \tickA : \T, i : \I} = \searlier \den\Gamma \times \Iw = 
 \searlier \den\Gamma \times \searlier\Iw = \searlier (\den\Gamma \times \Iw) = 
 \den{\Gamma, i : \I, \tickA : \T}
 \]
 For the second of these, note first that $\Gamma, \tickA : \T \vdash \phi : \F$ 
 if and only if $\Gamma \vdash \phi : \F$. The interpretation associates to
 each $(I,n)$ and $\gamma \in \den\Gamma (I, n+1)$, 
 an element $\den{\Gamma, \tickA : \T \vdash \phi : \F}(I,n, \gamma)$
 in $\Fw(I,n)$, which is a subset of the subobject classifier at $(I,n)$. 
 An easy induction shows
 that $\den{\Gamma, \tickA : \T \vdash \phi : \F}(I,n, \gamma)$ is true iff 
 $\den{\Gamma\vdash \phi : \F}(I,n+1, \gamma)$ is true, which implies the second equality.
\end{proof}

Given a
type $\Tyw{\searlier \Gamma}{A}$, we define the semantic later modality
$\Tyw{\Gamma}{\slater_\Gamma A}$ as:
\[
\begin{array}{l}
(\slater_{\Gamma} A)(I,0,\gamma) = \{ \star \} \\
(\slater_{\Gamma} A)(I, n + 1,\gamma) = A(I,n,\gamma)
\end{array}
\]

\begin{lemma}
 If $\Tyw{\searlier \Gamma}{A}$ has a composition structure, so does 
 $\Tyw{\Gamma}{\slater_\Gamma A}$.
\end{lemma}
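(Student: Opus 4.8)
The plan is to define the composition structure on $\slater_\Gamma A$ stage-wise in the presheaf model $\Cw$, mirroring the syntactic definition of $\comp^i_{\latbind\tickA A}$ recorded earlier in the paper. A composition problem for $\slater_\Gamma A$ consists of a path $\gamma$, a face $\phi : \Fw$, a partial path $u$ and a base $u_0$, and I must produce an element of $\slater_\Gamma A$ at $\gamma(1)$ satisfying the two boundary equations. I would supply this output at each stage $(I,n)$ by cases on $n$.

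At the zero stage, $(\slater_\Gamma A)(I,0,-)$ is the singleton $\{\star\}$, so there is a unique possible output, and both boundary conditions ($\phi \imp u\,\star\,0 = u_0$ and $\phi \imp u\,\star\,1 = c_{\slater_\Gamma A}\,\gamma\,\phi\,u\,u_0$) hold automatically, since every element in sight is forced to be $\star$. At a successor stage $n+1$, I would use the defining equation $(\slater_\Gamma A)(I,n+1,\delta) = A(I,n,\delta)$ together with the fact that $\searlier$ preserves the interval object ($\searlier\Iw = \Iw$) and commutes with faces — precisely the semantic content of the tick-exchange lemma just established. These identities let me reindex the data $(\gamma,\phi,u,u_0)$ for $\slater_\Gamma A$ at level $n+1$ into a genuine composition problem for $A$ at level $n$ over the reindexed context $\searlier\Gamma$, whose level-$n$ component is $\Gamma$ at level $n+1$. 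Applying $c_A$ to this one-level-down problem yields an element of $A(I,n,\gamma(1)) = (\slater_\Gamma A)(I,n+1,\gamma(1))$, which I take as the output, and the boundary equations at level $n+1$ follow immediately from those for $c_A$ at level $n$. This is exactly the model-level counterpart of the syntactic clause $\comp^i_{\latbind\tickA A}[\varphi\mapsto u]\,u_0 = \tabs\tickA{\comp^i_A[\varphi\mapsto\tapp u]\,(\tapp{u_0})}$.

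The principal obstacle, and the step I would carry out last, is verifying that these stage-wise outputs assemble into an actual term of the CwF, i.e. that they are natural with respect to the restriction maps of $\Cw$. Naturality in the cube direction reduces to naturality of $c_A$ together with the interval and face identities used in the reindexing. Naturality in the $\omega$-direction amounts to comparing the output at $(I,n+1)$ with that at $(I,n)$ along the structure map: for $n\ge 1$ this is inherited from the fact that $c_A$ is itself a term and hence commutes with restriction, while the one remaining case — restriction from level $1$ down to level $0$ — is immediate because the target is the singleton. Once this naturality check is discharged, the composition structure on $\slater_\Gamma A$ is complete.
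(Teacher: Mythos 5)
Your proposal is correct and follows essentially the same route as the paper: shift the input data $(\gamma,\phi,u,u_0)$ down one level in the $\omega$-direction (using $\searlier\Iw=\Iw$ and the analogous fact for faces) to obtain a composition problem for $A$ over $\searlier\Gamma$, apply $c_A$, and output $\star$ at level $0$. The only cosmetic difference is that the paper packages the shifted data as global terms $\tilde\gamma$, $\tilde u$, $\tilde u_0$, so the naturality you verify by hand at the end is automatic from $c_A\,\tilde\gamma\,\phi\,\tilde u\,\tilde u_0$ being a term of the CwF.
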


\begin{proof}
  Given input data $\gamma, \phi, u, u_0$, consider the term 
  $\Tmw{\Iw}{\tilde\gamma}{\searlier\Gamma}$ defined as 
  $\tilde\gamma(I, n, r) = \gamma(I, n+1, r)$. Note that  
  \begin{align*}
    \Tyw{\Iw}{\cwfsub A{\tilde\gamma}}(I,n,r) & = A(I, n, \tilde\gamma(r)) \\
    & = (\slater_\Gamma A) (I, n+1, \gamma(r)) \\
    & = \cwfsub{(\slater_{\Iw} A)}{\gamma}(I,n + 1,r)
  \end{align*}
  so we can define $\Tm{1}{\tilde u_0}{\cwfsub A{\tilde\gamma(0)}}$ as 
  $\tilde u_0(I, n) = u_0(I, n+1)$ and likewise 
  \[\Tmw{1}{\tilde u}{\extend\phi \to \Pi(i : \Iw)\cwfsub A{\tilde\gamma(i)}}\]
  as $\tilde u(I,n) = u(I, n+1)$. Finally, to define 
  $\Tmw{1}{c_{\slater_\Gamma A}\, \gamma\, \phi\, u\, u_0}{\cwfsub{(\slater_\Gamma A)}{\gamma(1)}}$
  we define  
  \begin{align*}
   c_{\slater_\Gamma A}\, \gamma\, \phi\, u\, u_0(I,0) & = \star &
   c_{\slater_\Gamma A}\, \gamma\, \phi\, u\, u_0(I,n+1) 
   & = c_{A}\, \tilde\gamma\, \phi\, \tilde u\, \tilde u_0(I,n) . \qedhere
  \end{align*}
\end{proof}

There is a bijective correspondence between terms $\Tmw\Gamma t{\,\slater_{\Gamma} A}$ and
terms $\Tmw{\searlier \Gamma}uA$ given by 
\begin{align*}
 t(I, 0, \gamma) & = \star & t(I, n+1, \gamma) & = u(I,n, \gamma) & u(I,n,\gamma) & = t(I,n+1, \gamma)
\end{align*}
Writing $\overline{(-)}$ for both directions of this bijection, we define 
\begin{align*}
\den{\Gamma \vdash \latbind\tickA A} & = \, \slater_{\den\Gamma}\den{\Gamma, \tickA : \T \vdash A} \\
\den{\Gamma \vdash \tabs\tickA t} &= \overline{\den{\Gamma,\tickA : \T \vdash t}}  \\
\den{\Gamma, \tickA : \T, \Gamma' \vdash \tapp t} &= \overline{\den{\Gamma \vdash t}} 
\circ\ctxproj{\Gamma, \tickA}{\Gamma'}
\end{align*}
This can be verified to satisfy the $\beta$ and $\eta$ equalities~\cite{MM18}. Likewise
fixed points can be modelled in the standard way~\cite{GCTT}. Indeed the fixed point
equality holds definitionally in the model. 

Higher inductive types can be modelled in $\Cw$ essentially in the same way as 
in~\cite{CHM18}, and we refer the reader to loc.cit. for details. Note that 
in~\cite[Section~2.5]{CHM18} higher inductive types defining operations on types,
such as the finite powerset functor, preserve the universe level, so indeed the universe
is closed under such constructions. 

It remains to show that the universe is closed under the $\latbind\tickA(-)$ operation. 
As is standard, the universe is modelled in~\cite{GCTT} under the assumption
of the existence of a Grothendieck universe in the ambient set theory, by modelling
$\den\U(I,n)$ as the set of small types in context $\yon(I,n)$. Since the operation 
$\slater_\Gamma(-)$ preserves smallness, the universe is closed under 
$\latbind\tickA(-)$.

\section{Conclusion and future work}
\label{sec:conclusion}

We have shown that in the type theory \TCTT\ combining cubical type theory 
with guarded recursion, the notions of bisimilarity and path equality
coincide for guarded coalgebras, in the sense of equivalence of types. 
As a consequence of this, representing processes as guarded labelled transition
systems allows for proofs of bisimilarity to be done 
in a simple way using guarded recursion. 

As stated in the introduction, the use of the finite powerset functor is motivated by the
desire to use this work as a stepping stone towards similar results for coinductive types.
For this, the use of finiteness, as opposed to some other cardinality restriction, appears 
to be non-essential. \citet{veltriphdthesis}
has given a description of the countable powerset functor as a HIT, and we believe that
the results presented here can be extended to this functor as well. This would allow, e.g.,
to extend our presentation of CCS with replication $! P$. 


In future work we plan to extend \TCTT\ with clocks and universal quantification
over these. This should allow for coinductive types to be encoded using 
guarded recursive types~\cite{atkey13icfp}, and for guarded recursion
to be used for coinductive reasoning~\cite{GDTT}. In particular, the reasoning
of Example~\ref{ex:HML}, which shows that the two processes $p$ and $q$ satisfying
$p \equiv q \iff \later 0$ can be used to show that their corresponding elements
in the coinductive type of labelled transition systems are genuinely not equal. 

We expect the general proof of coincidence of bisimilarity and path equality
to lift easily to the coinductive case, but encoding the coinductive type of 
processes as a final coalgebra for the functor $\Pfin(A\times -)$ is a challenge.
This requires $\Pfin$ to commute with
universal quantification over clocks~\cite{atkey13icfp,Mogelberg14}, a result
which holds in the model, but which we are currently unable to express in syntax. 
Doing this will most likely require new syntactical constructions.
Universal quantification over clocks does commute with ordinary inductive types
and W-types, also in the syntax of \GDTT, which allows nested inductive and 
coinductive types to be defined using
guarded recursion. Unfortunately, the techniques 
used in these results do not appear to be applicable to HITs. 


Future work also includes investigating more advanced proofs of 
bisimulation e.g., using 
up-to-techniques~\cite{milner1983calculi,danielsson2017,pous2012enhancements}
and weak bisimulation, which is generally challenging for guarded 
recursion~\cite{GDTT:FPC}. It would also be desirable to have an implementation
of guarded recursion in a proof assistant, which would allow proofs such as 
those presented in this paper to be formally checked by a computer. There does exist
a prototype implementation of Guarded Cubical Type Theory~\cite{GCTT}, 
but we found this inadequate for larger proofs. 
%

\begin{acks}                            
  This work was supported by a research grant (13156) from VILLUM FONDEN,
  and by DFF-Research Project 1 Grant no. 4002-00442, from The Danish Council 
  for Independent Research for the Natural Sciences (FNU). We thank the anonymous 
  referees for valuable comments.
\end{acks}

\bibliography{paper.bib}

\end{document}